\def\input@path{{../}{../figs/}}
\newcommand{\R}{\mathbb{R}}
\newcommand{\pr}{\mathbb{P}}
\newcommand{\E}{\mathbb{E}}
\newcommand{\Var}{\mathrm{Var}}
\newcommand{\inv}{^{-1}}
\newcommand{\ones}[1]{\mathbf{1}_{#1}}
\newcommand{\msqrt}{^{\frac{1}{2}}}
\newcommand{\nsqrt}{^{-\frac{1}{2}}}
\newcommand{\diag}{\mathrm{diag}}
\newcommand{\noise}{\sigma^2}
\newcommand{\precision}{\sigma^{-2}}
\newcommand{\trace}{\mathrm{tr}}
\newcommand{\gain}{\mathrm{Gain}(\pi, \sigma^2)}
\newcommand{\risk}{\mathrm{R}}
\newcommand{\loss}{\mathrm{L}}
\newcommand{\betaLS}{\hat \beta_{\mathrm{LS}}}
\newcommand{\betaEC}{\hat \beta_{\mathrm{ECov}}}
\newcommand{\betaMM}{\hat \beta_{\mathrm{ECov}}^{\mathrm{MM}}}
\newcommand{\betaED}{\hat \beta_{\mathrm{EData}}}
\newcommand{\betaEDMM}{\hat \beta_{\mathrm{EData}}^{\mathrm{MM}}}
\newcommand{\betaID}{\hat \beta_{\mathrm{ID}}}
\newcommand{\data}{\mathcal{D}}
\newcommand{\SB}{\Sigma}
\newcommand{\SBhat}{\hat \Sigma}
\newcommand{\SBMM}{\hat \Sigma^{\mathrm{MM}}}
\newcommand{\SBTrue}{\tilde \Sigma}
\newcommand{\GammaMM}{\hat \Gamma^{\mathrm{MM}}}
\newcommand{\glow}{g_{\lambda_{\mathrm{max}}}}
\newcommand{\vghigh}{\vec g_{\lambda_{\mathrm{min}}}}
\newcommand{\vglow}{\vec g_{\lambda_{\mathrm{max}}}}
\newcommand{\lambdaMin}{\lambda_{\mathrm{min}}}
\newcommand{\lambdaMax}{\lambda_{\mathrm{max}}}
\newcommand\mydots{\hbox to 1em{.\hss.\hss.}}
\def\[#1\]{\begin{align}#1\end{align}}        
\def\(#1\){\begin{align*}#1\end{align*}}      
\newtheorem{theorem}{Theorem}[section]
\newtheorem{corollary}[theorem]{Corollary}
\newtheorem{prop}[theorem]{Proposition}
\newtheorem{condition}[theorem]{Condition}
\newtheorem{definition}[theorem]{Definition}
\newtheorem{lemma}[theorem]{Lemma}
\theoremstyle{remark}
\newtheorem{remark}[theorem]{Remark}
\DeclareMathOperator*{\argmax}{arg\,max}
\DeclareMathOperator*{\argmin}{arg\,min}
\title{For high-dimensional hierarchical models, consider exchangeability of effects across covariates instead of across datasets}
\author{%
  Brian L.~Trippe \\
    MIT CSAIL \\
  \texttt{btrippe@mit.edu} \\
   \And
  Hilary K.~Finucane\\
  Broad Institute \\
  \texttt{finucane@broadinstitute.org}\\
   \AND
  Tamara Broderick \\
  MIT CSAIL \\
  \texttt{tbroderick@csail.mit.edu}\\
}
\begin{document}

\maketitle

\begin{abstract}
Hierarchical Bayesian methods enable information sharing across multiple related regression problems.
While standard practice is to model regression parameters (effects) as (1) exchangeable across datasets and (2) correlated to differing degrees across covariates,
we show that this approach exhibits poor statistical performance when the number of covariates exceeds the number of datasets.
For instance, in statistical genetics, we might regress dozens of traits (defining datasets) for thousands of individuals (responses) on up to millions of genetic variants (covariates).
When an analyst has more covariates than datasets, we argue that it is often more natural to instead model effects as (1) exchangeable across \emph{covariates} and (2) correlated to differing degrees across \emph{datasets}.
To this end, we propose a hierarchical model expressing our alternative perspective.
We devise an empirical Bayes estimator for learning the degree of correlation between datasets.
We develop theory that demonstrates that our method outperforms the classic approach when the number of covariates dominates the number of datasets,
and corroborate this result empirically on several high-dimensional multiple regression and classification problems.

\end{abstract}

\section{Introduction}\label{sec:intro}

Hierarchical modeling is a mainstay of Bayesian inference.
For instance, in (generalized) linear models, the unknown parameters are \emph{effects}, each of which describes the association of a particular covariate with a response of interest.
Often covariates are shared across multiple related datasets, but the effects are typically allowed to vary both by dataset and by covariate.
A classic methodology, dating back to \citeauthor{lindley1972bayes} (\citeyear{lindley1972bayes}) \citep{lindley1972bayes}, models the effects as conditionally independent across datasets, with a latent (and learnable) degree of relatedness across covariates.
From a practical standpoint, the model is motivated by the understanding that it ``borrows strength'' across different datasets \citep[Chapter 5.6]{gelman2013bayesian}.
Mathematically, the model is motivated by assuming effects are exchangeable across datasets and applying a de Finetti theorem \citep{lindley1972bayes,jordan2010bayesian}.
The methodology of \citeauthor{lindley1972bayes} is ubiquitous when the number of datasets is larger than the number of covariates.
It is a standard component of Bayesian pedagogy [\citep[Chapter 13.3]{gelman2006data}; \citep[Chapter 15.4]{gelman2013bayesian}] and software;
e.g. it is used by default in the mixed modeling package \texttt{lme4} \citep{bates2015fitting}, which has over 13 million downloads at the time of writing.

Despite its resounding success when there are more datasets than covariates, we show in the present work that the approach of \citeauthor{lindley1972bayes} performs poorly when there are more covariates than datasets.
To address the many-covariates case, we turn for inspiration to statistical genetics, where scientists commonly learn linear models relating genetic variants (covariates) to traits (corresponding to different datasets) across individuals (which each exhibit a response).
These applications may exhibit millions of covariates, thousands of responses, and just a handful of datasets.
In these cases, \citep{lee2012estimation,bulik_finucane2015atlas,stephens2013unified,zhou2014efficient,maier2015joint,runcie2020megalmm} use a multivariate Gaussian prior akin to that of \citeauthor{lindley1972bayes}, but instead assume conditional independence across \emph{covariates} and prior parameters that encode correlations across \emph{datasets}. 

We can imagine using a similar model in applications beyond statistical genetics.
Namely, when there are more covariates than datasets, we propose to model the effects as exchangeable across \emph{covariates} (rather than datasets) and learn the degree of relatedness of effects across \emph{datasets} (rather than covariates).
Henceforth we refer to this framework as \emph{ECov}, for \emph{e}xchangeable effects across \emph{cov}ariates, and distinguish it from \emph{e}xchangeable effects across \emph{data}sets, or \emph{EData}.

While the existing methods in statistical genetics for modeling multiple traits obtain as a special case of ECov, 
to the best of our knowledge this approach is absent from existing literature on hierarchical Bayesian regression.
\citeauthor{brown1980adaptive} (\citeyear{brown1980adaptive}) \citep{brown1980adaptive} and \citeauthor{haitovsky1987multivariate} (\citeyear{haitovsky1987multivariate}) \citep{haitovsky1987multivariate} form two exceptions, but these two papers (1) consider only the situation in which a single covariate matrix is shared across all datasets (or equivalently, for each data point all responses are observed) and (2) include only theory and no empirics.

We suspect that the historical origins of the methodology in statistical genetics may have hindered earlier expansion of this class of models to a wider audience.
In particular, this literature traces back to mixed effects modeling for cattle breeding \citep{thompson1973estimation};
here, an even-earlier notion of the genetic contribution of trait correlation (i.e.\ ``genetic correlation;'' see \citeauthor{hazel1943genetic} (\citeyear{hazel1943genetic}) \citep{hazel1943genetic}) informs the covariance structure of random effects.
Although genetic correlation is now commonly understood to describe the correlation of effects of DNA sequence changes on different traits \citep{bulik_finucane2015atlas}, its provenance predates even the first identification of DNA as the genetic material in \citeyear{avery1944studies} \citep{avery1944studies}.
As such, this older motivation obviated the need for a more general justification grounded in exchangeability.
See \Cref{sec:additional_related_work} for further discussion of related work.

In the present work, we propose ECov as a general framework for hierarchical regression when the number of covariates exceeds the number of datasets.
We show that the classic model structure from statistical genetics can be seen as an instance of this framework, much as \citeauthor{lindley1972bayes} give a (complementary) instance of an EData framework.
To make the ECov approach generally practical, we devise an accurate and efficient algorithm for learning the dataset-correlation matrix.
We demonstrate with theory and empirics that ECov is preferred when the number of covariates exceeds the number of datasets, while EData is preferred when the number of datasets exceeds the number of covariates.
Our experiments analyze three real, non-genetic datasets in regression and classification, including an application to transfer learning with pre-trained neural network embeddings.
We provide proofs of theoretical results in the appendix.

\section{Exchangeability and its applications to hierarchical linear modeling}\label{sec:exch_cov}

We start by establishing the data and model, motivating exchangeability among covariate effects (ECov), and motivating our Bayesian generative model.

\textbf{Setup and notation.} Consider $Q$ datasets with $D$ covariates.
Let $N^q$ be the number of data points in dataset $q$.
For the $q$th dataset, the $N^q \times D$ real design matrix $X^q$ collects the covariates, and $Y^q$ is the $N^q$-vector of responses.
The $n$th datapoint in dataset $q$ consists of covariate $D$-vector $X^q_n$ and scalar response $Y^q_n$. 
We let $\data := \{(X^q,Y^q)\}_{q=1}^Q$ denote the collection of all $Q$ datasets.
We consider the generalized linear model $Y^q_n | X^q_n, \beta^q \overset{indep}{\sim} p(\cdot | X^{q\top}_n \beta^q)$ with unknown $D$-vector of real effects $\beta^q$.
We collect all effects in a $D\times Q$ matrix $\beta$ with $(d,q)$ entry $\beta^q_d$.
The linear form of the likelihood allows interpretation of $\beta^q_d$ as the association between the $d$th covariate and the response in dataset $q$.
In linear regression, the responses are real-valued and the conditional distribution is Gaussian.
In logistic regression, the responses are binary, and we use the logit link.
The independence assumption conflicts with some models that one might use, for example in some cases when the different datasets partially overlap.

\textbf{Example.} As a motivating non-genetics example, consider a study of the efficacy of microcredit.
There are seven famous randomized controlled trials of microcredit, each in a different country \citep{meager2019understanding}.
We might be interested in the association between various aspects of small businesses (covariates), including whether or not they received microcredit, and their business profit (response).
In this case, the $d$th element of $X^q_n$ would be the $d$th characteristic of the $n$th small business in the $q$th country, and $Y^q_n$ is the profit of this business.
See the experiments for additional examples in rates of policing, web analytics, and transfer learning.

\textbf{Exchangeable effects across datasets (EData).} To fully specify a Bayesian model, we need to choose a prior over the parameters $\beta$.
\citeauthor{lindley1972bayes} assume the effects are exchangeable across datasets.
Namely, for every $Q$-permutation $\sigma$, $p(\beta^1, \beta^2, \dots, \beta^Q) = 
    p(\beta^{\sigma(1)}, \beta^{\sigma(2)}, \dots, \beta^{\sigma(Q)})$.
Assuming exchangeability holds for an imagined growing $Q$ and applying de Finetti's theorem motivates a conditionally independent prior.
Concretely, \citeauthor{lindley1972bayes} take $\beta^q \overset{i.i.d.}{\sim} \mathcal{N}(\xi, \Gamma)$, for $D$-vector $\xi$ and $D\times D$ covariance matrix $\Gamma$.
The $(d,d')$ entry of $\Gamma$ captures the degree of relatedness between the effects for covariates $d$ and $d'$.
Both $\xi$ and $\Gamma$ may be learned in an empirical Bayes procedure.

\textbf{Exchangeable effects across covariates (ECov).} We here argue for a complementary approach in settings where $D>Q$.
In the microcredit example, 
notice that $D>Q$ will arise whenever the experimenter records more characteristics of a small business than there are locations with microcredit experiments; that is, $D>7$ in this particular case.
Concretely, let $\beta_d$ be the $Q$-vector of effects for covariate $d$ across datasets.
Then, in the ECov approach, we will assume that effects are exchangeable across covariates instead of across datasets.
Namely, for every $D$-permutation $\sigma$, $p(\beta_1, \beta_2, \dots, \beta_D) = 
    p(\beta_{\sigma(1)}, \beta_{\sigma(2)}, \dots, \beta_{\sigma(D)})$.
We will see theoretical and empirical benefits to ECov in later sections, but note that the ECov assumption is often a priori natural.
For instance, regarding microcredit, we may have no a priori beliefs about how effects differ for distinct small-business characteristics.
And we may a priori believe that different countries could exhibit more similar effects -- and wish to learn the degree of relatedness across those countries.

We may apply a similar rationale as \citeauthor{lindley1972bayes} to motivate a conditionally independent model.
Analogous to \citeauthor{lindley1972bayes}, we propose a Gaussian prior: $\beta_d \overset{i.i.d.}{\sim} \mathcal{N}(0, \SB)$. $\SB$ is now a $Q\times Q$ covariance matrix
whose $(q,q')$ entry captures the similarity between the effects in the $q$ and $q'$ datasets.
For simplicity, we restrict to $\E[\beta_d]=0$; see \Cref{sec:non_zero_means} for discussion.
Another potential benefit to ECov relative to EData is that we might expect a statistically easier problem, with $O(Q^2)$ rather than $O(D^2)$ values to learn in the relatedness matrix.
We provide a rigorous theoretical analysis in \Cref{sec:nonasymptotic_theory,sec:asymptotic_gains}.

\section{Our method}\label{sec:our_method}

We next describe our inference method for specific instances of the exchangeable covariate effects model of \Cref{sec:exch_cov}.
We compute the $\beta$ posterior and take an empirical Bayes approach to estimate $\SB$.
We find that an expectation maximization (EM) algorithm estimates $\SB$ effectively;
\Cref{sec:inference_of_gamma_approaches} compares our approach to existing methods for the related problem of estimating $\Gamma$ for EData.

\textbf{Notation.} We identify estimates of $\beta$ and $\SB$ with hats.
For instance, $\betaLS$ is the least squares estimate, with $\betaLS^q := (X^{q\top} X^q)\inv X^{q\top}Y^q.$ 
We will sometimes find it useful to stack the columns of $\beta$ or its estimates into a length $DQ$ vector;
we denote such vectors with an arrow;
for example, $\vec \beta := [\beta^{1\top}, \beta^{2\top},\dots, \beta^{Q\top}]^\top.$
For a natural number $N,$ we use $I_N, \ones{N},$ and $e_N$ to denote the $N\times N$ identity matrix, $N$-vector of ones, and $N$th basis vector, respectively.
We use $\otimes$ to denote the Kronecker product.

\subsection{Posterior inference with a Gaussian likelihood}\label{sec:conjugate_gaussian_estimate}
We first consider a Gaussian likelihood:
for each dataset $q$ and observation $n$, we take $Y^q_n |X_n^q, \beta^q  \overset{indep}{\sim} \mathcal{N}(X^{q\top}_n \beta^q, \sigma_q^2)$
where $\sigma_q^2$ is a dataset-specific variance.
When the relatedness matrix $\SB$ is known, a natural estimate of $\beta$ is its posterior mean.
We obtain the full posterior, including its mean, via a standard conjugacy argument; see \Cref{sec:proof_of_conj_form}:
\begin{prop}\label{prop:conj_form}
For each covariate $d$, let
 $\beta_d \overset{i.i.d.}{\sim} \mathcal{N}(0, \SB)$ a priori.
For each dataset $q$ and data point $n,$ let $Y^q_n|X_n^q,\beta^q \overset{indep}{\sim}\mathcal{N}(X^{q\top}_n\beta^q,\sigma_q^2).$
Then $\vec \beta | \data, \SB \sim \mathcal{N}(\vec \mu, V)$ for 
$\vec \mu = 
V [\sigma_1^{-2}Y^{1\top}X^1, \dots, \sigma_Q^{-2}Y^{Q\top}X^Q]^\top$
and
$V\inv= \SB\inv \otimes I_D +  \diag(\sigma_1^{-2}X^{1\top}X^1, \dots,\sigma_Q^{-2}X^{Q\top}X^Q),$
where $\diag(\sigma_1^{-2}X^{1\top}X^1, \dots,\sigma_Q^{-2}X^{Q\top}X^Q)$ denotes a $DQ\times DQ$ block-diagonal matrix.
\end{prop}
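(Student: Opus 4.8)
The plan is to recognize this as a standard conjugate Gaussian computation and to obtain the posterior by completing the square in the joint log density $\log p(\vec\beta,\data\mid\SB)=\log p(\data\mid\vec\beta)+\log p(\vec\beta\mid\SB)$. Since both the prior and the likelihood are Gaussian in $\beta$, the joint is (up to $\beta$-free constants) a quadratic form in $\vec\beta$, so the posterior is Gaussian, and it suffices to read off the quadratic coefficient (the posterior precision $V\inv$) and the linear coefficient (which equals $V\inv\vec\mu$). Concretely, I would show that the posterior precision is the sum of a prior precision and a data-dependent precision, and that the posterior mean solves the resulting normal equations.

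First I would rewrite the prior $\beta_d \overset{i.i.d.}{\sim}\mathcal{N}(0,\SB)$ as a single Gaussian over the stacked vector $\vec\beta$. Because the rows $\beta_d$ are independent across $d$, we have $\mathrm{Cov}(\beta^q_d,\beta^{q'}_{d'})=\SB_{qq'}\,\mathbbm{1}[d=d']$, and under the column-stacking convention $\vec\beta_{(q-1)D+d}=\beta^q_d$ this covariance matrix is exactly $\SB\otimes I_D$. Using the Kronecker identity $(\SB\otimes I_D)\inv=\SB\inv\otimes I_D$ then gives the prior precision $\SB\inv\otimes I_D$, which is the first term of $V\inv$. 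Getting the Kronecker ordering right here—so that $I_D$ sits on the right rather than the left—is the one place that requires genuine care, since it depends entirely on whether we vectorize by stacking the columns or the rows of $\beta$.

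Next I would expand the log-likelihood. Because the datasets are conditionally independent given $\beta$ and the $q$th likelihood depends only on the column $\beta^q$, the log-likelihood splits as $\sum_q -\tfrac{1}{2\sigma_q^2}\|Y^q-X^q\beta^q\|^2$ up to constants. The quadratic-in-$\beta^q$ part contributes $\sigma_q^{-2}X^{q\top}X^q$ to the precision of the $q$th diagonal block and nothing off-diagonal, yielding the block-diagonal matrix $\diag(\sigma_1^{-2}X^{1\top}X^1,\dots,\sigma_Q^{-2}X^{Q\top}X^Q)$; the linear part contributes $\sigma_q^{-2}X^{q\top}Y^q=\sigma_q^{-2}(Y^{q\top}X^q)^\top$ to the $q$th block of the linear coefficient. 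Adding the prior and likelihood contributions and completing the square then gives $V\inv=\SB\inv\otimes I_D+\diag(\sigma_1^{-2}X^{1\top}X^1,\dots,\sigma_Q^{-2}X^{Q\top}X^Q)$ together with $\vec\mu=V[\sigma_1^{-2}Y^{1\top}X^1,\dots,\sigma_Q^{-2}Y^{Q\top}X^Q]^\top$, matching the claim. The remaining work is purely routine algebra; the only real obstacle is the index bookkeeping of the Kronecker product and the vectorization described above.
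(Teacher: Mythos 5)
Your proof is correct, but it takes a different route from the paper's. The paper first reduces the data to the least squares estimates, noting that $\betaLS$ is a sufficient statistic for $\beta$, writes the sampling model $\betaLS^q \mid \beta \overset{indep}{\sim} \mathcal{N}\bigl(\beta^q, \sigma_q^2 (X^{q\top}X^q)\inv\bigr)$, and then invokes a standard Gaussian conjugacy formula (citing Bishop) before simplifying the resulting block expressions; you instead complete the square directly in the joint log density $\log p(\data \mid \vec\beta) + \log p(\vec\beta \mid \SB)$, reading off the posterior precision as the sum of the prior precision $\SB\inv \otimes I_D$ and the block-diagonal likelihood precision, and the posterior mean from the linear coefficient. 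The two computations are at heart the same conjugacy calculation, but your direct approach is more elementary (it re-derives rather than cites the conjugate update) and, notably, slightly more general: the paper's proof as written passes through $\betaLS^q = (X^{q\top}X^q)\inv X^{q\top}Y^q$ and the covariance $\sigma_q^2(X^{q\top}X^q)\inv$, which implicitly requires each $X^{q\top}X^q$ to be invertible, whereas your argument needs no such assumption --- the stated formula for $V\inv$ remains well-defined and the posterior remains proper even for rank-deficient designs (e.g.\ $N^q < D$), which is precisely the high-dimensional regime the paper is concerned with. Your attention to the vectorization convention (columns of $\beta$ stacked, hence $\SB\inv \otimes I_D$ rather than $I_D \otimes \SB\inv$) is exactly the right bookkeeping point and matches the paper's convention.
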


At first glance, the posterior mean $\vec \mu$ for this model 
might seem to introduce a computational challenge because exact computation of $V$ involves an $O(D^3Q^3)$-time matrix inversion.
Our experiments (\Cref{sec:experiments}), however, involve on the order of $DQ\approx 1{,}000$ parameters, so
direct inversion of $V$ demands less than a single second.
Moreover, in much larger problems $\vec \mu$ may still be computed very efficiently using the conjugate gradient algorithm \citep[Chapter 5]{nocedal2006numerical},
with convergence in a small number of $O(D^2Q)$ time iterations; see \Cref{sec:conj_gradients}.

\subsection{Empirical Bayes estimation of $\SB$ by expectation maximization}\label{sec:empirical_bayes_and_EM}
The posterior mean of $\beta$ in \Cref{prop:conj_form} requires $\SB,$
which is typically unknown.
Accordingly, we propose an empirical Bayes approach of estimating $\SB$ by maximum marginal likelihood:
\[\label{eqn:empirical_bayes_estimate}
\betaEC := \E[\beta \mid \data, \SBhat]
\text{ where }\SBhat := \argmax_{\SB\succeq 0} p(\data \mid \SB).
\]
\Cref{eqn:empirical_bayes_estimate} defines a two step procedure.
In the first step, we learn the similarity between datasets via estimation of $\SB.$
In the second step, we use this similarity to compute an estimate, $\betaEC,$ that correspondingly shares strength.
Though we have been unable to identify a general analytic form for $\SBhat,$ we can compute it with an expectation maximization (EM) algorithm \citep[Chapter 1.5]{mclachlan2007algorithm}.
\Cref{alg:EM_general} summarizes this procedure; see \Cref{sec:em_details} for details.

\begin{figure}[!t]
\begin{minipage}{0.50\textwidth}
\begin{algorithm}[H]
\caption{
    Expectation Maximization for Exchangeability Among Covariate Effects
}
\label{alg:EM_general}
\begin{algorithmic}[1]
\State \text{// Initialize covariance}
\State $\SB^{(0)} \leftarrow I_Q$
\State \text{// Run EM algorithm}
\For{$t = 0,1,\dots$}
    \State \text{// Expectation step}
    \State $\mu_1, \mydots, \mu_D, V_1, \mydots, V_D \leftarrow\texttt{E\_Step}(\SB^{(t)})$
    \State \text{ }
    \State \text{// Maximization step}
    \State $\SB^{(t+1)}  \leftarrow D\inv \sum_{d=1}^D (\mu_d \mu_d^\top + V_d)$
\EndFor
\State \text{ }
\State \text{Return} $\SB^{(t+1)}$
\end{algorithmic}
\end{algorithm}
\end{minipage}
\hfill
\begin{minipage}{0.46\textwidth}
\begin{algorithm}[H]
\caption{
E-Step: Linear Regression
}
\label{alg:E_step_conj}
\begin{algorithmic}[1]
    \State $\vec \mu, V  \leftarrow \E[\vec \beta | \data, \SB], \Var[\vec \beta | \data, \SB]$ 
    \For{$d = 1,\mydots, D$}
        \State $\mu_d \leftarrow (e_d \otimes I_Q)^\top \vec \mu$ 
        \State $V_d \leftarrow (e_d \otimes I_Q)^\top V (e_d \otimes I_Q)$ 
    \EndFor
    \State \text{Return } $\mu_1, \mydots, \mu_D, V_1, \mydots, V_D$
\end{algorithmic}
\end{algorithm}
\vspace{-15pt}
\begin{algorithm}[H]
\caption{
E-Step: Logistic Regression
}
\label{alg:E_step_logistic}
\begin{algorithmic}[1]
 \State $\vec \mu^* \leftarrow \argmax_{\vec\beta} \log p(\vec\beta | \data, \SB)$ 
 \State $V \leftarrow -[\nabla_\beta^2 \log p(\vec\beta | \data, \SB)\big|_{\vec \beta = \vec \mu^*}]\inv$ 
 \For{$d = 1,\mydots, D$}
    \State $\mu_d \leftarrow (e_d \otimes I_Q)^\top \vec \mu^*$ 
    \State $V_d \leftarrow (e_d \otimes I_Q)^\top V (e_d \otimes I_Q)$ 
\EndFor
\State \text{Return } $\mu_1, \mydots, \mu_D, V_1, \mydots, V_D$
\end{algorithmic}
\end{algorithm}
\end{minipage}
\end{figure}

\subsection{Classification with logistic regression}\label{sec:logistic_regression_empirical_bayes_main}
We can extend the approach above to inference for multiple related classification problems.
We assume a logistic likelihood; for each $q$ and $n$,
$Y^q_n | X_n^q, \beta^q \overset{indep}{\sim} \text{Bern}[(1+\exp\{-X_n^{q\top} \beta^q\})\inv].$
In the classification case, we cannot use Gaussian conjugacy directly, so we apply an approximation.
Specifically, we adapt the original E-step in \Cref{alg:E_step_logistic} by using
a Laplace approximation to the posterior \citep[Chapter 4.4]{Bishop2006}.
We approximate the posterior mean of $\beta$ by the maximum a posteriori value.
We leave extensions to other generalized linear models to future work.

\section{Theoretical comparison of frequentist risk}\label{sec:nonasymptotic_theory}

In this section, we prove theory that suggests ECov has better frequentist risk than EData when $D$ is large relative to $Q$.
Analyzing $\betaEC$ directly is challenging due to its non-differentiability as a function of the data, so we take a multipart approach.
First, in \cref{theorem:exch_reg_domination}, we show that an ECov estimate based on moment-matching (MM), $\betaMM$, dominates least squares, $\betaLS$, when $D$ is large relative to $Q$; $\betaLS$ in turn dominates $\betaEDMM$ (a similar estimator for EData).
Second, in \cref{theorem:pos_part_dominance}, we show that $\betaEC$ uniformly improves on $\betaMM$.

\textbf{Setup.}
Take a fixed value of $\beta$ and an estimator $\hat \beta$. We use squared error risk,
$\risk(\beta, \hat \beta) := \E\left[ \|\hat \beta - \beta\|_F^2 \mid \beta\right]$,
as our measure of performance. $\|\cdot\|_F$ is the Frobenius norm of a matrix,
and the expectation is over all observations $Y^1, \dots, Y^Q$ jointly.
We require the following orthogonal design condition.
\begin{condition}\label{condition:orthogonal_design}
    For each dataset $q$, $\sigma_q^{-2} X^{q\top}X^q = \precision I_D$ for some shared variance $\noise$.
\end{condition}
Though restrictive, this condition is useful for theory, as other authors have found; see \Cref{sec:identity_covariance_condition}.
We empirically demonstrate that our theoretical conclusions apply more broadly in \Cref{sec:experiments}.

\textbf{ECov vs.\ EData when using moment matching in high dimensions.}
For ECov, the following estimate for $\SB$ is unbiased under correct prior specification:
$\SBMM := D\inv \betaLS^\top \betaLS 
- D\inv\diag(\sigma_1^2 \|X^{1\dagger}\|_F^2, \dots, \sigma_Q^2 \|X^{Q\dagger}\|_F^2),$
where $\dagger$ denotes the Moore-Penrose pseudoinverse of a matrix and $\betaLS$ is the least squares estimate.
We define $\betaMM := \E[\beta |\data, \SBMM]$ to be the resulting parameter estimate,
and define $\betaEDMM$ analogously for EData; see \Cref{sec:moment_ests_supp} for details.
While $\betaMM$ and $\betaEDMM$ are naturally defined only when $D\ge Q$ and $D\le Q,$ respectively,
we find it informative to compare how their performances depend on $D$ and $Q$ nonetheless.

Before our theorem, a lemma provides concise expressions for the risks of $\betaMM$ and $\betaEDMM.$
\begin{lemma}\label{lemma:form_of_risks}
Under \Cref{condition:orthogonal_design}
and when $D\ge Q,$
$\risk(\beta, \betaMM) =
\noise DQ - \sigma^4 D(D-2-2Q) \E[\|\betaLS^\dagger\|_F^2 \mid \beta].$
Additionally, when $D\le Q,$
$\risk(\beta, \betaEDMM) =
\noise DQ - \sigma^4 Q(Q-2-2D) \E[\|\betaLS^\dagger\|_F^2 \mid \beta].$
\end{lemma}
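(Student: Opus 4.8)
The plan is to use \Cref{condition:orthogonal_design} to reduce $\betaMM$ to an explicit matrix James--Stein map and then evaluate its risk via Stein's lemma. First I would record the distributional consequences of the orthogonal design. Since $\betaLS^q = (X^{q\top}X^q)\inv X^{q\top}Y^q$ and $\sigma_q^{-2}X^{q\top}X^q = \precision I_D$, the least-squares estimate satisfies $\betaLS^q \mid \beta \sim \mathcal{N}(\beta^q, \noise I_D)$ independently across $q$, so the $D\times Q$ matrix $\betaLS$ has independent $\mathcal{N}(0,\noise)$ entries about $\beta$. The same condition collapses the moment-matching estimate: each correction term is $\sigma_q^2\|X^{q\dagger}\|_F^2 = \sigma_q^2\trace((X^{q\top}X^q)\inv) = D\noise$, so $\SBMM = D\inv\betaLS^\top\betaLS - \noise I_Q$.

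Next I would specialize \Cref{prop:conj_form} to this design. The block-diagonal likelihood term becomes $\precision I_{DQ}$, so $V\inv = (\SBMM\inv + \precision I_Q)\otimes I_D$ and the posterior-mean map is the right-multiplication $\betaMM = \betaLS A$ with $A = \precision(\SBMM\inv + \precision I_Q)\inv$. Writing $S := D\inv\betaLS^\top\betaLS$ so that $\SBMM = S - \noise I_Q$, a short manipulation gives $(\SBMM\inv + \precision I_Q)\inv = \noise S\inv\SBMM = \noise(I_Q - \noise S\inv)$, hence $A = I_Q - \noise S\inv$ and
\[
\betaMM = \betaLS - \noise D\,\betaLS(\betaLS^\top\betaLS)\inv = \betaLS - \noise D\,(\betaLS^\dagger)^\top. \notag
\]
This is exactly a matrix James--Stein estimator, which is the form that makes the risk tractable. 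Setting $Z := \betaLS - \beta$ (independent $\mathcal{N}(0,\noise)$ entries), expanding $\risk(\beta,\betaMM) = \E[\|Z - \noise D(\betaLS^\dagger)^\top\|_F^2\mid\beta]$ yields three pieces: the constant $\E[\|Z\|_F^2\mid\beta] = \noise DQ$; the quadratic term $\sigma^4 D^2\,\E[\|\betaLS^\dagger\|_F^2\mid\beta]$; and the cross term $-2\noise D\,\E[\langle Z, (\betaLS^\dagger)^\top\rangle_F\mid\beta]$, which I would evaluate by Stein's lemma $\E[\langle\betaLS - \beta, g(\betaLS)\rangle_F\mid\beta] = \noise\,\E[\mathrm{div}\,g(\betaLS)\mid\beta]$ for $g(B) = B(B^\top B)\inv$.

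The main obstacle is this divergence computation. Differentiating $g_{iq} = \sum_k B_{ik}(W\inv)_{kq}$ with $W = B^\top B$, and using $\partial_{B_{iq}}(W\inv)_{kq} = -\sum_{a,b}(W\inv)_{ka}(\partial_{B_{iq}}W_{ab})(W\inv)_{bq}$ together with $\partial_{B_{iq}}W_{ab} = \delta_{aq}B_{ib} + \delta_{bq}B_{ia}$, one finds $\partial_{B_{iq}}g_{iq} = (W\inv)_{qq} - g_{iq}^2 - (W\inv)_{qq}H_{ii}$, where $H = B(B^\top B)\inv B^\top$ is the column-space projection. Summing over $i,q$ and using $\sum_{i,q}g_{iq}^2 = \|BW\inv\|_F^2 = \trace(W\inv)$ and $\trace(H) = Q$ collapses everything to $\mathrm{div}\,g = (D - Q - 1)\trace((\betaLS^\top\betaLS)\inv) = (D-Q-1)\|\betaLS^\dagger\|_F^2$.

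Assembling the three terms gives coefficient $\sigma^4 D^2 - 2\sigma^4 D(D-Q-1) = -\sigma^4 D(D - 2 - 2Q)$ on $\E[\|\betaLS^\dagger\|_F^2\mid\beta]$, which is exactly the claimed ECov formula. Finally I would obtain the EData risk for free by transpose symmetry: $\betaEDMM$ is the ECov estimator applied to $\beta^\top$ with the roles of $D$ and $Q$ exchanged, the orthogonal design is preserved under transposition, and $\|(\betaLS^\top)^\dagger\|_F^2 = \|\betaLS^\dagger\|_F^2$, so the $D\le Q$ expression follows by swapping $D\leftrightarrow Q$ in the ECov result. I would also check the regularity conditions that make the above rigorous, namely that $\betaLS$ has full rank almost surely under the stated dimension constraints (so the pseudoinverse and $S\inv$ are well defined) and that $g$ satisfies the integrability/weak-differentiability hypotheses required for Stein's lemma.
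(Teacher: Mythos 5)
Your core computation coincides with the paper's: under \Cref{condition:orthogonal_design} you reduce $\betaMM$ to the matrix James--Stein form $\betaLS - \noise D\betaLS^{\dagger\top}$, apply Stein's lemma to the cross term, and evaluate the divergence of $B\mapsto B(B^\top B)\inv$ as $(D-Q-1)\trace[(B^\top B)\inv]$; the assembly $\sigma^4 D^2 - 2\sigma^4 D(D-Q-1) = -\sigma^4 D(D-2-2Q)$ and the transpose symmetry for the EData half are likewise exactly the paper's argument (its \Cref{lemma:SURE_with_pseudo_inv,prop:moments}).

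There is, however, a genuine gap at the boundary of the claimed range. The lemma asserts the identity for all $D\ge Q$, but the Stein argument is valid only for $D>Q+1$. When $Q\le D\le Q+1$ the integrability hypothesis of Stein's lemma \emph{fails}: the natural bound on $\E[\,|\partial g_{dq}/\partial (\betaLS)_{dq}|\mid\beta]$ is a constant multiple of $\E[\trace((\betaLS^\top\betaLS)\inv)\mid\beta]$, the expected trace of an inverse non-central Wishart matrix, which is infinite precisely when $D-Q\le 1$. So the regularity check you defer to the end is not a routine verification -- it would reveal that the three-term expansion is unjustified on part of the claimed range (the quadratic term $\sigma^4 D^2\,\E[\|\betaLS^\dagger\|_F^2\mid\beta]$ is itself $+\infty$ there, so the decomposition is meaningless as written). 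The paper closes these cases with a separate two-sided infiniteness argument: \Cref{lemma:infinite_risk} shows $\risk(\beta,\betaMM)=\infty$ directly via a Cauchy--Schwarz lower bound on the loss, while the right-hand side of the claimed identity is also $+\infty$ because $\E[\|\betaLS^\dagger\|_F^2\mid\beta]=\infty$ and the coefficient $-\sigma^4 D(D-2-2Q)$ is positive in this regime, so the identity holds in the extended reals. The analogous boundary cases $Q-1\le D\le Q$ for $\betaEDMM$ need the same treatment. Adding this argument would complete your proof; as proposed, it covers only $D>Q+1$.
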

\Cref{lemma:form_of_risks} reveals forms for the risks of $\betaMM$ and $\betaEDMM$ that are surprisingly simple.
The symmetry between the forms and risks of these estimators, however,
is intuitive; under \Cref{condition:orthogonal_design},
$\betaMM$ and $\betaEDMM$ can be seen as respectively arising from the same procedure applied to $\betaLS$ and its transpose.

With \Cref{lemma:form_of_risks} in hand, we can now compare the risk of $\betaMM$, $\betaLS$, and $\betaEDMM$.
\begin{theorem}\label{theorem:exch_reg_domination}
Let \Cref{condition:orthogonal_design} hold.
Then 
(1) if $D>2Q+2,$ $\betaMM$ dominates $\betaLS$ with respect to squared error risk. 
In particular, for any $\beta,\  
\risk(\beta, \betaMM ) < 
\risk(\beta,  \betaLS ).$
Additionally, (2) if $D> Q/2 -1,$
$\betaEDMM$ is dominated by $\betaLS.$
\end{theorem}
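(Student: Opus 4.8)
The plan is to reduce both parts of the theorem to a sign analysis layered on top of \Cref{lemma:form_of_risks}, after first pinning down the risk of the baseline estimator $\betaLS$. First I would compute $\risk(\beta, \betaLS)$ under \Cref{condition:orthogonal_design}. Writing $Y^q = X^q\beta^q + \epsilon^q$ with $\epsilon^q \sim \mathcal{N}(0,\sigma_q^2 I_{N^q})$, the per-dataset estimate $\betaLS^q = \beta^q + (X^{q\top}X^q)\inv X^{q\top}\epsilon^q$ is unbiased with covariance $\sigma_q^2(X^{q\top}X^q)\inv$. Inverting both sides of the identity $\sigma_q^{-2}X^{q\top}X^q = \precision I_D$ gives $\sigma_q^2(X^{q\top}X^q)\inv = \noise I_D$, so each of the $DQ$ entries of $\betaLS$ has variance $\noise$, and hence
\[\risk(\beta, \betaLS) = \noise DQ.\]

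For part (1), since $D > 2Q+2$ forces $D \ge Q$, \Cref{lemma:form_of_risks} applies and I would difference the two risks:
\[\risk(\beta, \betaLS) - \risk(\beta, \betaMM) = \sigma^4 D(D - 2 - 2Q)\,\E[\|\betaLS^\dagger\|_F^2 \mid \beta].\]
Because $\betaLS \mid \beta$ is a nondegenerate Gaussian matrix (covariance $\noise I_D$ per column), it is almost surely nonzero, so $\|\betaLS^\dagger\|_F^2 > 0$ almost surely and thus $\E[\|\betaLS^\dagger\|_F^2 \mid \beta] > 0$ for every $\beta$. As $D > 0$ and $D - 2 - 2Q > 0$ exactly when $D > 2Q + 2$, the right-hand side is strictly positive, giving $\risk(\beta, \betaMM) < \risk(\beta, \betaLS)$ for all $\beta$, which is the claimed domination.

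For part (2), I would work in the regime $D \le Q$ where $\betaEDMM$ is defined and again difference risks via \Cref{lemma:form_of_risks}, rewriting $-Q(Q - 2 - 2D) = Q(2D + 2 - Q)$:
\[\risk(\beta, \betaEDMM) - \risk(\beta, \betaLS) = \sigma^4 Q(2D + 2 - Q)\,\E[\|\betaLS^\dagger\|_F^2 \mid \beta].\]
The hypothesis $D > Q/2 - 1$ is equivalent to $2D + 2 - Q > 0$, so with $Q > 0$ and the strictly positive expectation established above, the difference is strictly positive, i.e.\ $\betaLS$ dominates $\betaEDMM$ for all $\beta$.

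Setting the routine algebra aside, the genuinely heavy lifting sits in \Cref{lemma:form_of_risks}, which I take as given; relative to it the theorem is essentially a corollary. The one analytic point that still requires care is the strict positivity of $\E[\|\betaLS^\dagger\|_F^2 \mid \beta]$ for every fixed $\beta$, since it is what upgrades the inequalities from weak to strict and hence delivers domination (as opposed to mere non-inferiority). I would also double-check that the boundary constants (the $-2$ and $+2$ shifts) transcribe exactly from the lemma, because the entire comparison turns on the sign of these factors that are linear in $D$ and $Q$.
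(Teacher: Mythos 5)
Your part (1) is exactly the paper's argument: under \Cref{condition:orthogonal_design} one has $\risk(\beta,\betaLS)=\noise DQ$, and \Cref{lemma:form_of_risks} gives $\risk(\beta,\betaLS)-\risk(\beta,\betaMM)=\sigma^4 D(D-2-2Q)\,\E[\|\betaLS^\dagger\|_F^2\mid\beta]$, which is strictly positive when $D>2Q+2$ since the expectation of an almost surely positive quantity is positive (and here also finite, since $D>2Q+2$ implies $D>Q+1$). No issues with that half.

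Part (2), however, has a genuine coverage gap. The theorem asserts domination for \emph{all} $D>Q/2-1$, and the paper deliberately reads this beyond the regime where $\betaEDMM$ is ``naturally defined'' (see the remark in \Cref{sec:nonasymptotic_theory} preceding \Cref{lemma:form_of_risks}). The paper's proof accordingly proceeds in stages: (i) on $Q/2-1<D<Q-1$ it gives precisely your sign analysis; (ii) on $Q-1\le D\le Q$ the quantity $\E[\|\betaLS^\dagger\|_F^2\mid\beta]$ is \emph{infinite} (it is the expected trace of an inverse non-central Wishart matrix with too few degrees of freedom), so instead of comparing finite quantities the paper appeals, via the transpose symmetry with $\betaMM$, to \Cref{lemma:infinite_risk} to conclude $\risk(\beta,\betaEDMM)=\infty$; and (iii) for $D>Q$ the defining expression for $\betaEDMM$ requires inverting a matrix of rank at most $Q<D$, and the paper adopts the convention that $\betaEDMM$ then has infinite norm and hence infinite risk. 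Your restriction to ``the regime $D\le Q$ where $\betaEDMM$ is defined'' omits case (iii) entirely, and in band (ii) your difference-of-risks manipulation silently treats $\E[\|\betaLS^\dagger\|_F^2\mid\beta]$ as a finite positive real; the conclusion does survive in the extended reals (the coefficient $\sigma^4 Q(2D+2-Q)$ is positive, so $\risk(\beta,\betaEDMM)=\infty>\noise DQ$), but this should be said, since your stated worry about transcribing the boundary constants is not where the delicacy lies --- the delicate boundary issue is the finiteness of the Wishart expectation. To prove the theorem as stated, you need to add cases (ii) and (iii).
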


Since $\betaLS$ is minimax \citep[Chapter 5]{lehmann2006theory}, \Cref{theorem:exch_reg_domination} implies that $\betaMM$ has minimax risk in the high-dimensional setting.
It follows that, regardless of how well the ECov prior assumptions hold, $\betaMM$ will not perform very poorly.

\textbf{Further improvement with maximum marginal likelihood.}
The moment based approach analyzed above has a limitation:
with positive probability, $\SBMM$ is not positive semi-definite (PSD).
Though our expression for $\betaMM$ remains well-defined in this case, this non-positive definiteness obscures the interpretation of $\betaMM$ as a Bayes estimate.
We next show that performance further improves if $\SB$ is instead estimated by maximum marginal likelihood (\Cref{eqn:empirical_bayes_estimate}) and is thereby constrained to be PSD.

Our next lemma characterizes the form of the resulting estimator, $\betaEC,$
and establishes a connection to the positive part James-Stein estimator \citep{baranchik1964multiple}.
\begin{lemma}\label{lemma:form_of_mle_est}
Assume $D>Q$ and consider the singular value decomposition $\betaLS = V \diag(\lambda\msqrt) U^\top$ 
where $V$ and $U$ satisfy $V^\top V= U^\top U=  I_Q,$
and $\lambda$ is a $Q$-vector of non-negative reals.
Under \Cref{condition:orthogonal_design},
\Cref{eqn:empirical_bayes_estimate} reduces to
$\SBhat= U\diag\left[ 
(D\inv\lambda - \noise \ones{Q})_+
\right]U^\top $
and 
$
\betaEC = V\diag\left[
    \lambda\msqrt\odot(\ones{Q} - \noise D\lambda\inv)_+
\right]U^\top,
$
where $(\cdot)_+$ is shorthand for $\max(\cdot, 0)$ element-wise,
$\odot$ is the Hadamard (i.e. element-wise) product, 
and the powers in $\lambda\msqrt$ and $\lambda\inv$ are applied element-wise.
\end{lemma}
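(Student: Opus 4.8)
The plan is to reduce the empirical Bayes problem to a collection of decoupled scalar optimizations by exploiting the orthogonal design, and then to recognize the solution as a positive-part shrinkage rule.

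First I would specialize \Cref{prop:conj_form} to \Cref{condition:orthogonal_design}. Under the condition each block $\precision X^{q\top}X^q$ equals $\precision I_D$, so the block-diagonal term collapses and the posterior precision factorizes as a Kronecker product, $V\inv = (\SB\inv + \precision I_Q)\otimes I_D$. Because $\precision X^{q\top}Y^q = \precision (X^{q\top}X^q)\betaLS^q = \precision\betaLS^q$ under the condition, the stacked information vector is simply $\precision\vec\betaLS$, and undoing the vectorization (using $(M\otimes I_D)\vec\beta = \vec(\beta M^\top)$) gives the compact matrix form $\E[\beta\mid\data,\SB] = \betaLS(\precision I_Q + \SB\inv)\inv\precision = \betaLS\,\SB(\SB+\noise I_Q)\inv$. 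I would note that this rightmost expression never inverts $\SB$ and hence extends by continuity to singular $\SB$, which will matter once $\SBhat$ turns out rank-deficient.

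Second I would obtain the marginal likelihood. Writing $\betaLS^q = \beta^q + (X^{q\top}X^q)\inv X^{q\top}\epsilon^q$ and applying \Cref{condition:orthogonal_design}, the estimation noise has covariance $\sigma_q^2(X^{q\top}X^q)\inv = \noise I_D$; combined with independence across datasets this shows that, row by row, $\betaLS_d \overset{i.i.d.}{\sim}\mathcal{N}(0,\SB+\noise I_Q)$ across covariates $d$. Decomposing each $Y^q$ into its least-squares fit and an orthogonal residual whose law is free of $\beta$ (hence of $\SB$) shows that the $\SB$-dependent factor of $p(\data\mid\SB)$ is exactly $\prod_d \mathcal{N}(\betaLS_d;0,\SB+\noise I_Q)$, so maximizing $p(\data\mid\SB)$ is, up to $\SB$-free constants, maximizing
\[
\ell(\SB) = -\tfrac{D}{2}\log\det(\SB+\noise I_Q) - \tfrac12\trace\!\big[(\SB+\noise I_Q)\inv \betaLS^\top\betaLS\big].
\]

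Third—the step I expect to be the crux—I would solve $\argmax_{\SB\succeq 0}\ell(\SB)$. Substituting $S := \SB+\noise I_Q$ turns the constraint $\SB\succeq 0$ into $S\succeq\noise I_Q$ and the objective into $-\tfrac{D}{2}[\log\det S + \trace(S\inv C)]$ with $C := D\inv\betaLS^\top\betaLS = U\diag(D\inv\lambda)U^\top$. The main obstacle is arguing that the optimal $S$ is simultaneously diagonalizable with $C$; I would establish this with a von Neumann / trace-rearrangement inequality, which shows that for fixed eigenvalues of $S$ the term $\trace(S\inv C)$ is minimized, while $\log\det S$ and the eigenvalue-only constraint are unaffected, precisely when the eigenvectors of $S$ align with those of $C$, i.e.\ with $U$. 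The problem then decouples across the $Q$ eigendirections into scalar minimizations $\min_{s\ge\noise}[\log s + (D\inv\lambda_q)/s]$, each solved by $s_q^\star = \max(D\inv\lambda_q,\noise)$; subtracting $\noise$ recovers $\SBhat = U\diag[(D\inv\lambda - \noise\ones{Q})_+]U^\top$.

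Finally I would substitute $\SBhat$ into the posterior mean. Since $S^\star = \SBhat+\noise I_Q = U\diag(s^\star)U^\top$, the shrinkage operator becomes $(\SBhat+\noise I_Q)\inv\SBhat = U\diag(r)U^\top$ with $r_q = (D\inv\lambda_q-\noise)_+/\max(D\inv\lambda_q,\noise)$, and a short case check (splitting on $\lambda_q\gtrless\noise D$) collapses this to $r_q = (1-\noise D\lambda_q\inv)_+$. Plugging in the singular value decomposition $\betaLS = V\diag(\lambda\msqrt)U^\top$ and using $U^\top U = I_Q$ then yields $\betaEC = V\diag(\lambda\msqrt)\diag[(\ones{Q}-\noise D\lambda\inv)_+]U^\top = V\diag[\lambda\msqrt\odot(\ones{Q}-\noise D\lambda\inv)_+]U^\top$, as claimed; the direction-wise truncation is exactly what the rank-deficient case of the posterior-mean formula requires, so no separate singularity argument is needed.
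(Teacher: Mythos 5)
Your proof is correct, and it differs from the paper's at the one step that matters. The paper parameterizes $\SB = V^\top\diag(\nu)V$, profiles out the eigenvalues first for every fixed eigenvector matrix (the scalar problem of \Cref{lemma:optimum}), and then bounds the profiled objective over eigenvector matrices via Schur-concavity plus the Schur--Horn theorem (\Cref{lemma:schur_and_majorization,theorem:schur_horm}); i.e., it derives a lower bound on the negative log marginal likelihood and verifies that the candidate $\SBhat$ attains it. You instead substitute $S=\SB+\noise I_Q$, which turns the PSD constraint into the purely spectral constraint $S\succeq \noise I_Q$, and optimize in the opposite order: fix the spectrum of $S$, invoke the von Neumann/Ruhe trace inequality to align the eigenvectors of $S$ with $U$, then solve $Q$ decoupled problems $\min_{s\ge\noise}[\log s + D\inv\lambda_q/s]$ with solutions $\max(D\inv\lambda_q,\noise)$, recovering $\SBhat$ after subtracting $\noise$. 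The two tools are cousins (both are majorization-type results), but your ordering makes the diagonalization of the optimizer an explicit consequence of a single inequality rather than a bound-then-verify argument, and the substitution $S=\SB+\noise I_Q$ is what makes that ordering possible; the paper's route avoids the trace inequality and reuses majorization machinery it needs anyway for \Cref{theorem:asymptotic_gain}. Your endgame is the same algebra as the paper's: \Cref{lemma:conj_form_under_condition} is exactly your identity, since $\betaLS - \betaLS[I_Q+\precision\SB]\inv = \betaLS\,\SB(\SB+\noise I_Q)\inv$, and your explicit sufficiency argument for reducing $p(\data\mid\SB)$ to $p(\betaLS\mid\SB)$, as well as the continuity remark at singular $\SBhat$, fill in steps the paper leaves implicit. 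One detail worth a sentence in a final write-up: von Neumann's inequality pins down the eigenvectors only together with a pairing of eigenvalues (largest $s_q$ with largest $\lambda_q$); this is harmless here because the subsequent per-direction minimization over each $s_q\ge\noise$ is free and symmetric across directions, but as stated your alignment claim quietly absorbs that relabeling.
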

\Cref{lemma:form_of_mle_est} allows us to see $\betaEC$ as shrinking $\betaLS$ toward $0$ in the direction of each singular vector to an extent proportional to the inverse of the associated singular value.
The transition from $\betaMM$ to $\betaEC$ is then analogous to the taking the ``positive part'' of the James-Stein estimator in vector estimation,
which provides a uniform improvement in risk \citep{baranchik1964multiple}.
Though $\risk(\beta,\betaEC)$ is not easily available analytically,
we nevertheless find that it dominates its moment-based counterpart.

\begin{theorem}\label{theorem:pos_part_dominance}
Assume $D>Q+1.$ 
Under \Cref{condition:orthogonal_design} $\betaEC$ dominates $\betaMM$ with respect to squared error loss, achieving strictly lower risk for every value of $\beta$.
\end{theorem}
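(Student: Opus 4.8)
The plan is to reduce both estimators to explicit singular-value shrinkage maps of $\betaLS$, observe that they differ only where the positive part truncates a negative multiplier, and then show this truncation strictly lowers risk — the matrix analogue of positive-part James–Stein domination. First I would pin down $\betaMM$ in the same coordinates used for $\betaEC$ in \Cref{lemma:form_of_mle_est}. Starting from \Cref{prop:conj_form} and \Cref{condition:orthogonal_design}, the posterior-mean map collapses to $\mu = \precision\,\betaLS(\SB\inv + \precision I_Q)\inv$, and since $\SBMM$ and $\SBhat$ share the right singular vectors $U$ of $\betaLS$, both estimators are diagonalized in the SVD basis $\betaLS = V\diag(\lambda\msqrt)U^\top$. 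This gives $\betaMM = V\diag[\lambda\msqrt\odot(\ones{Q} - \noise D\lambda\inv)]U^\top$, which matches the $\betaEC$ of \Cref{lemma:form_of_mle_est} except that $\betaEC$ replaces each multiplier $h_i := 1 - \noise D/\lambda_i$ by its positive part $h_i^+ = \max(h_i,0)$. Under \Cref{condition:orthogonal_design} the data reduce to a Gaussian matrix-denoising model $\betaLS = \beta + E$ with $E$ having i.i.d.\ $\mathcal N(0,\noise)$ entries (the per-dataset covariance is $\noise I_D$), so all risks are expectations over $E$; the hypothesis $D>Q+1$ (i.e.\ $D\ge Q+2$) is exactly what makes $\E[\|\betaLS^\dagger\|_F^2\mid\beta]$ finite and hence, via \Cref{lemma:form_of_risks}, makes the risks finite and their difference well-defined.

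Next I would write the pointwise loss difference. Because $\{v_i\}$ and $\{u_i\}$ are orthonormal, $\|\betaMM-\beta\|_F^2 - \|\betaEC-\beta\|_F^2$ telescopes over singular components and, using $h_i=h_i^+$ wherever $h_i\ge0$, reduces to a sum over the truncated coordinates $\{i : h_i<0\}$ of the terms $\lambda_i h_i^2 - 2\lambda_i^{1/2} h_i\,v_i^\top\beta u_i$. Substituting $v_i^\top\beta u_i = \lambda_i^{1/2} - v_i^\top E u_i$ (which follows from $v_i^\top\betaLS u_i = \lambda_i^{1/2}$) turns each summand into $\lambda_i h_i(h_i-2) + 2\lambda_i^{1/2}h_i\,v_i^\top E u_i$, whose first piece $\lambda_i h_i(h_i-2)$ is strictly positive on $\{h_i<0\}$ (there $h_i<0$ and $h_i-2<0$). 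It then remains to control the expectation of the noise cross term $\E\big[\lambda_i^{1/2}h_i\,(v_i^\top E u_i)\,\mathbbm{1}\{h_i<0\}\big]$.

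The main obstacle is precisely this cross term: the singular vectors $v_i,u_i$ and values $\lambda_i$ are all functions of $E$ and are mutually coupled, so $v_i^\top E u_i$ is neither mean-zero nor independent of $h_i$, and the coordinates cannot be decoupled. I would handle it with Stein's identity applied to the matrix $\betaLS$, namely $\E[E_{jk}\,g(\betaLS)] = \noise\,\E[\partial g/\partial(\betaLS)_{jk}]$, taking $g$ to be the relevant spectral function; the positive part is Lipschitz and weakly differentiable (its kink at $h_i=0$ carries no mass), so the identity still applies and the boundary contribution vanishes. Carrying this out should show that the cross-term expectation is dominated by the strictly positive quadratic piece, yielding $\risk(\beta,\betaMM) - \risk(\beta,\betaEC) > 0$. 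Strictness then follows because the truncation event $\{\min_i \lambda_i < \noise D\}$ has positive probability for every $\beta$, as $\betaLS$ is Gaussian with full support, so the positive part genuinely changes the estimate on a set of positive measure.

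An alternative to the direct Stein computation would be to condition on the singular subspaces and reduce each truncated component to Baranchik's scalar positive-part domination \citep{baranchik1964multiple}; the delicate point there is verifying that the conditional problem is exactly a one-dimensional normal-mean problem, which again runs into the coupling of the spectrum. Either way, I expect the differentiation of singular values and vectors with respect to the matrix entries — and the careful treatment of the positive part's nondifferentiability — to be the technical heart of the argument, with the rest being the bookkeeping of \Cref{prop:conj_form} and the orthonormality of the SVD factors.
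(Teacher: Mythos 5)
Your setup is correct and coincides with the paper's: under \Cref{condition:orthogonal_design} both estimators are SVD shrinkage rules, $\betaMM = V\diag[\lambda\msqrt\odot(\ones{Q}-\noise D\lambda\inv)]U^\top$ versus the positive-parted multipliers for $\betaEC$, and the loss difference telescopes over singular components into an almost-surely nonnegative quadratic piece plus a cross term supported on the truncation event $\{\lambda_q < \noise D\}$. The genuine gap is that you never establish the key step. Your primary route --- Stein's identity applied to spectral functions of $\betaLS$ --- is left as a hope (``carrying this out should show\dots''), and it is far from routine: differentiating $v_q^\top E u_q$ times an indicator of the spectrum requires derivatives of singular vectors, which bring in coupling factors of the form $1/(\lambda_q-\lambda_{q'})$, and it is not at all evident that the resulting divergence terms have the sign you need. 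A SURE-type route of this kind does exist in the literature (it is essentially the orthogonally-invariant-shrinkage analysis of Tsukuma, which the paper's appendix notes as an alternative proof), but it is a substantial calculation that your proposal does not carry out, so as written the proof is incomplete at exactly its technical heart.

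What is more, the alternative you dismiss is the paper's actual argument, and your reason for dismissing it does not apply. The paper never reduces the conditional problem to a one-dimensional normal-means problem. It keeps the cross term in the form $\E[(V_q^\top\beta U_q)(\rho_q-\rho_{+q})]$ --- crucially \emph{without} your substitution $V_q^\top\beta U_q = \lambda_q\msqrt - V_q^\top E U_q$, which is what pushes the randomness into $E$ and forces the Stein computation --- and then conditions on $\lambda_q = c < \noise D$ together with the remaining columns $U_{-q}, V_{-q}$. Given this conditioning, $(U_q,V_q)$ is pinned down up to signs, so $V_q^\top\beta U_q$ takes only the two values $\pm|V_q^\top\beta U_q|$ with deterministic magnitude; comparing the conditional likelihoods of the two sign configurations, $\ln p(\betaLS\mid\beta,U_{-q},V_{-q},\lambda) = -\tfrac{1}{2}(\lambda_q\msqrt - V_q^\top\beta U_q)^2 + h'$ with $h'$ sign-independent and $\lambda_q\msqrt>0$, shows the positive sign is strictly more probable, hence $\E[V_q^\top\beta U_q\mid\cdot]>0$ and the cross term is strictly negative in expectation. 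The spectral coupling you worried about is immaterial because everything coupled is held fixed by the conditioning; only the sign ambiguity remains random. So your architecture is sound and your reductions are correct, but the proof needs either the conditional symmetry argument you set aside, or the full matrix-calculus computation you only gestured at.
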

We establish \Cref{theorem:pos_part_dominance} using a proof technique adapted from \citet{baranchik1964multiple}; see also \citet{lehmann2006theory}[Thm. 5.5.4].
The standard approach we build upon is complicated by the fact that the directions in which we apply shrinkage are themselves random.

\Cref{theorem:pos_part_dominance} provides a strong line of support for using $\betaEC$ over $\betaMM$ that does not rely on any assumption of ``correct'' prior specification;
in particular the risk improves without any subjective assumptions on $\beta.$
We discuss related earlier work in \Cref{sec:related_work_normal_means}.

\section{Gains from \textrm{ECov} in the high-dimensional limit}\label{sec:asymptotic_gains}
The results of \Cref{sec:nonasymptotic_theory} give a promising endorsement of ECov but face two important limitations.
First, the domination results relative to least squares do not directly demonstrate that $\betaEC$ attains improvements by leveraging similarities across datasets in a meaningful way;
indeed for a single dataset (i.e.\ $Q=1$) $\betaEC$ can be understood as a ridge regression estimate \citep{hoerl1970ridge}, 
and \Cref{theorem:exch_reg_domination,theorem:pos_part_dominance} provide that $\betaEC$ dominates $\betaLS$ for $D>3.$
Second, domination results reveal nothing about the size of the improvement or how it depends on any structure of $\beta$;
intuitively, we should expect better performance when $\beta$ is in some way representative of the assumed prior.
To address these limitations, we analyze the size of the gap between the risk of (1) $\betaEC$ and (2) our method applied to each dataset independently (ID),
which we denote by $\betaID$.
We will characterize the dependence of this gap on $\beta.$

Reasoning quantitatively about the dependence of the risk on the unknown parameter poses significant analytical challenges.
In particular, \Cref{lemma:form_of_risks} shows that $\risk(\beta, \betaMM)$ depends on $\beta$ through $\E[\| \betaLS^\dagger\|_F^2| \beta];$
however, $\| \betaLS^\dagger\|_F^2$ is the sum of the eigenvalues of a non-central inverse Wishart matrix,
a notoriously challenging quantity to work with; see e.g. \citep{letac2004tutorial,hillier2019properties}.
To regain tractability, we 
(1) develop an analysis asymptotic in the number of covariates $D$ and
(2) shift to a Bayesian analysis in order to sensibly consider a growing collection of covariate effects.
In particular, we consider a sequence of regression problems, with parameters $\{\beta_d\}_{d=1}^\infty$ distributed as
$\beta_d\overset{i.i.d.}{\sim}\pi$ for some distribution $\pi.$
Accordingly, instead of using the frequentist risk as in \Cref{sec:nonasymptotic_theory}, we now use the Bayes risk to measure performance.
Specifically, for a dataset with $D$ covariates and an estimator $\hat \beta,$ the Bayes risk is $\risk_\pi^D(\hat \beta) := \E_\pi[ \risk(\beta, \hat \beta)]$ where $\risk(\beta, \hat \beta)$ is the usual frequentist risk.

For a single metric characterizing the benefits of joint modeling, we will define the \emph{asymptotic gain} as the relative performance between our two estimators of interest here, $\betaEC$ and $\betaID$.
\begin{definition}\label{definition:gain}
Consider a sequence of datasets of $Q$ regression problems with an increasing number of covariates $D, \{\data_D\}_{D=1}^\infty.$
Assume that for each dataset \Cref{condition:orthogonal_design} holds with variance $\noise$ and that each $\beta_d\overset{i.i.d.}{\sim} \pi.$
The asymptotic gain of joint modeling is
$\gain := \lim_{D\rightarrow\infty} (\noise DQ)\inv[ \risk^D_\pi(\betaID) -\risk^D_\pi(\betaEC)].$
\end{definition}
The factor of $\noise DQ$ in \Cref{definition:gain} puts $\gain$ on a scale that is roughly invariant to the size and noise level of the problem;
for example, $(\noise DQ)\inv \risk^D_\pi(\betaLS)=1$ for any $\pi, D,$ and $Q.$

Our next lemma gives an analytic expression for $\gain$ that provides a starting point for understanding its problem dependence.
\begin{lemma}\label{lemma:gain}
Assume $\SBTrue := \Var_\pi[\beta_1]$ is finite and has eigenvalues $\lambda_1, \dots, \lambda_Q.$
Under \Cref{condition:orthogonal_design},
$\gain =  \noise Q\inv[\sum_{q=1}^Q (\lambda_{q} + \noise)\inv  - \sum_{q=1}^Q (\SBTrue_{q,q} + \noise)\inv]
.$
\end{lemma}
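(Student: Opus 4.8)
The plan is to show that as $D\to\infty$ the empirical-Bayes estimator $\betaEC$ collapses onto an oracle linear shrinkage map, to evaluate the limiting per-covariate Bayes risks of $\betaEC$ and $\betaID$ in closed form, and to subtract them.

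First I would use \Cref{condition:orthogonal_design} and the Gaussian likelihood to put $\betaLS$ in a transparent form: the $Q$-vector of least-squares effects for covariate $d$ satisfies $\betaLS_d = \beta_d + \eta_d$ with $\eta_d \distind \mathcal{N}(0,\noise I_Q)$ independent of $\beta_d\dist\pi$ and independent across $d$, so the rows of $\betaLS$ are \iid with second moment $\SBTrue + \noise I_Q$. By the strong law of large numbers, $D\inv\betaLS^\top\betaLS \to \SBTrue + \noise I_Q$ almost surely, so the eigenvalues of $D\inv\betaLS^\top\betaLS$ converge to $\lambda_q + \noise$. Feeding this into the explicit form of \Cref{lemma:form_of_mle_est} --- where both $\SBhat$ and $\betaEC$ are built from the SVD of $\betaLS$ --- gives $\SBhat\to\SBTrue$ and shows that the symmetric contraction $S$ with $\betaEC = \betaLS S$ converges to the oracle posterior-mean operator $A := \SBTrue(\SBTrue+\noise I_Q)\inv$.

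Next I would compute the limiting per-covariate risk. Writing $\betaLS S - \beta = \beta(S-I_Q) + \eta S$ and averaging over covariates, the cross term vanishes in the limit (independence and $\E_\pi[\beta_1]=0$), while $D\inv\beta^\top\beta\to\SBTrue$ and $D\inv\eta^\top\eta\to\noise I_Q$, so $D\inv\|\betaEC-\beta\|_F^2 \to \trace[(A-I_Q)\SBTrue(A-I_Q)] + \noise\trace[A^2]$. Diagonalizing $\SBTrue$ makes $A$ simultaneously diagonal and collapses this to $\sum_q \noise\lambda_q/(\lambda_q+\noise)$. The same scalar computation applied dataset-by-dataset --- where $\betaID$ asymptotically uses the single-dataset variance $\SBTrue_{q,q}$, since its marginal-likelihood estimate for dataset $q$ converges to $\SBTrue_{q,q}$ --- gives $\sum_q \noise\SBTrue_{q,q}/(\SBTrue_{q,q}+\noise)$. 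Substituting both into \Cref{definition:gain}, the factors of $D$ cancel, and rewriting $\noise x/(x+\noise)=\noise - \noise^2(x+\noise)\inv$ turns the difference into $\noise Q\inv[\sum_q(\lambda_q+\noise)\inv - \sum_q(\SBTrue_{q,q}+\noise)\inv]$, as claimed.

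The main obstacle is upgrading almost-sure convergence of the random estimator to convergence of its Bayes risk, i.e.\ justifying the interchange of limit and expectation behind every ``$\to$'' above. The difficulty concentrates on the rare events where $D\inv\betaLS^\top\betaLS$ has a near-degenerate spectrum, so that the positive-part thresholds and inverses defining $\SBhat$, $S$, and $A$ behave erratically, and on the fluctuation of the random singular directions $U$. I expect to control this by a uniform-integrability argument: since the shrinkage factors in \Cref{lemma:form_of_mle_est} lie in $[0,1]$, one has $\|\betaEC\|_F\le\|\betaLS\|_F$, so $D\inv\|\betaEC-\beta\|_F^2$ is dominated by an integrable envelope built from $D\inv\|\betaLS\|_F^2$ and $D\inv\|\beta\|_F^2$ whose expectations are uniformly bounded by the entries of $\SBTrue+\noise I_Q$. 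Dominated convergence, together with continuity of the spectral maps at the limiting matrix $\SBTrue+\noise I_Q$, then yields the risk limits. Checking that finiteness of $\SBTrue$ alone supplies this envelope is the delicate point.
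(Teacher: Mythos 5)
Your proposal is correct, but it takes a genuinely different route from the paper's. The paper never analyzes the loss of $\betaEC$ directly: it first shows (its \Cref{lemma:mle_to_moments}) that $\|\betaEC - \betaMM\|_F \to 0$ almost surely, because the positive-part thresholds in \Cref{lemma:form_of_mle_est} are eventually inactive; it then plugs in the exact finite-$D$ risk formula of \Cref{lemma:form_of_risks}, $\risk(\beta,\betaMM) = \noise DQ - \sigma^4 D(D-2-2Q)\E[\|\betaLS^\dagger\|_F^2 \mid \beta]$, and passes to the limit via $D\|\betaLS^\dagger\|_F^2 \to \trace[(\SBTrue + \noise I_Q)\inv]$ a.s.\ (\Cref{lemma:frob_norm_limit}), repeating the computation dataset-by-dataset for $\betaID$. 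You bypass $\betaMM$ and the SURE-based risk formula entirely: you write $\betaEC = \betaLS S$ with $S$ a spectral function of $D\inv\betaLS^\top\betaLS$, show $S \to \SBTrue(\SBTrue+\noise I_Q)\inv$ by continuity of matrix functions, and evaluate the limiting per-covariate loss by the strong law; your closed forms agree with the paper's after the rewriting $\noise x/(x+\noise) = \noise - \sigma^4(x+\noise)\inv$. What your route buys is a cleaner treatment of the limit--expectation interchange, which is in fact the weakest point of the paper's own proof: there, the interchange requires controlling $\E[(D-2Q-2)\|\betaLS^\dagger\|_F^2]$, the trace of a non-central inverse-Wishart matrix whose expectation is infinite for $D \le Q+1$ and whose uniform integrability the paper never verifies (it silently moves an almost-sure $o(1)$ inside an expectation). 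Your envelope uses only the contraction property $\|\betaEC\|_F \le \|\betaLS\|_F$ from \Cref{lemma:form_of_mle_est}, giving $D\inv\|\betaEC - \beta\|_F^2 \le 2D\inv(\|\betaLS\|_F^2 + \|\beta\|_F^2)$, whose expectation is constant in $D$ given only finiteness of $\SBTrue$ (and $\E_\pi[\beta_1]=0$, which the paper's proof also implicitly uses). One technical caveat: your dominating sequence depends on $D$, so plain dominated convergence does not apply verbatim; you need the generalized dominated convergence (Pratt) lemma, or your uniform-integrability phrasing made precise --- the envelopes converge a.s.\ with constant expectations, hence in $L^1$ by Scheff\'e, hence are uniformly integrable, and UI plus a.s.\ convergence of the losses yields convergence of the risks. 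What the paper's route buys is economy: it reuses finite-sample machinery already proved for \Cref{theorem:exch_reg_domination}, so the asymptotic argument itself is short; yours is self-contained, avoids pseudo-inverses and inverse-Wishart expectations altogether, and makes explicit the oracle interpretation $\betaEC \approx \betaLS\,\SBTrue(\SBTrue+\noise I_Q)\inv$ that the paper's argument leaves implicit.
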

\Cref{lemma:gain} reveals that the diagonals and eigenvalues and $\SBTrue$ are key determinants of $\gain,$
but does not directly provide an interpretation of when $\betaEC$ offers benefits over $\betaID$.
Our next theorem demonstrates when an improvement can be achieved from joint modeling.
\begin{theorem}\label{theorem:asymptotic_gain}
$\gain \ge 0,$
with equality only when $\SBTrue = \Var_\pi[\beta_1]$ is diagonal.
\end{theorem}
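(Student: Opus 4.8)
The plan is to reduce the claim to the closed form from \Cref{lemma:gain}. Since the prefactor $\noise Q\inv$ is strictly positive, it suffices to show $\sum_{q=1}^Q (\lambda_q + \noise)\inv \ge \sum_{q=1}^Q (\SBTrue_{q,q} + \noise)\inv$, with equality exactly when $\SBTrue$ is diagonal. The key structural fact I would exploit is that the diagonal entries of $\SBTrue$ are convex combinations of its eigenvalues. Writing the spectral decomposition $\SBTrue = \sum_{k=1}^Q \lambda_k v_k v_k^\top$ with orthonormal eigenvectors $v_k$, the $q$th diagonal entry is $\SBTrue_{q,q} = \sum_{k=1}^Q (v_k)_q^2\, \lambda_k$. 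Setting $P_{q,k} := (v_k)_q^2$, the matrix $P$ is doubly stochastic, since each row sum $\sum_k (v_k)_q^2$ and each column sum $\sum_q (v_k)_q^2$ equals $1$ (the former because $\sum_k v_k v_k^\top = I_Q$, the latter because each $v_k$ is a unit vector).

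Next I would apply Jensen's inequality to the map $f(x) := (x+\noise)\inv$, which is strictly convex on $[0,\infty)$ because $f''(x) = 2(x+\noise)^{-3}>0$, and whose domain contains all $\lambda_q \ge 0$ and $\SBTrue_{q,q}\ge 0$ since $\SBTrue$ is positive semi-definite. For each $q$, convexity together with $\sum_k P_{q,k}=1$ gives $f(\SBTrue_{q,q}) = f\bigl(\sum_k P_{q,k}\lambda_k\bigr) \le \sum_k P_{q,k} f(\lambda_k)$. Summing over $q$ and using $\sum_q P_{q,k}=1$ yields $\sum_q f(\SBTrue_{q,q}) \le \sum_k f(\lambda_k)$, which is precisely the inequality above and hence $\gain\ge 0$.

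For the equality case, note that $\SBTrue_{q,q} = \sum_k P_{q,k}\lambda_k$ with $P$ doubly stochastic says exactly that the eigenvalue vector majorizes the diagonal vector. Because $f$ is strictly convex, the summed Jensen bound is tight if and only if these two vectors coincide as multisets (the equality condition in the Hardy--Littlewood--P\'olya majorization inequality). To finish, I would invoke the trace-of-square identity $\sum_q \lambda_q^2 = \trace(\SBTrue^2) = \sum_{q,q'} \SBTrue_{q,q'}^2$: multiset equality forces $\sum_q \lambda_q^2 = \sum_q \SBTrue_{q,q}^2$, so $\sum_{q\ne q'}\SBTrue_{q,q'}^2 = 0$ and every off-diagonal entry vanishes, i.e.\ $\SBTrue$ is diagonal. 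The converse is immediate, since a diagonal $\SBTrue$ has eigenvalues equal to its diagonal entries and thus $\gain = 0$. I expect the equality direction to be the main obstacle: translating tightness of the pooled Jensen bound into the concrete conclusion that $\SBTrue$ is diagonal is where the argument is least mechanical, and the trace identity is the cleanest bridge, sidestepping a delicate case analysis of the per-row Jensen equalities.
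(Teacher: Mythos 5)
Your proof is correct, and at its core it rests on the same structural fact as the paper's proof: the diagonal of $\SBTrue$ is majorized by its spectrum, and $x \mapsto (x+\noise)\inv$ is strictly convex, so the eigenvalue sum in \Cref{lemma:gain} dominates the diagonal sum. The difference is in execution. The paper invokes the Schur--Horn theorem and strict Schur-convexity as black boxes, disposing of both the inequality and the equality case in two sentences; you instead re-derive the needed inequality from first principles, writing $\SBTrue_{q,q} = \sum_k P_{q,k}\lambda_k$ with $P_{q,k} = (v_k)_q^2$ doubly stochastic and summing Jensen's inequality over rows --- which is exactly the classical proof of the Schur part of Schur--Horn, unbundled. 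Where your write-up genuinely adds value is the equality case: the paper's proof stops at ``strict Schur-convexity plus majorization,'' which only yields that the diagonal entries of $\SBTrue$ form a permutation of its eigenvalues, and it leaves implicit how this forces $\SBTrue$ to be diagonal. Your trace identity $\sum_q \lambda_q^2 = \trace(\SBTrue^2) = \sum_{q,q'} \SBTrue_{q,q'}^2$ supplies precisely that missing bridge: multiset equality kills the off-diagonal sum of squares. So your argument is not a different route, but it is a more self-contained one, and it closes a gap that the paper's terse proof glosses over.
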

\begin{proof}
From \Cref{lemma:gain} we see $\gain$ is the difference between a strictly Schur-convex function applied to the eigenvalues of $\SBTrue$ and to its diagonals (since  $(x+ \noise)\inv$ is convex on $\R_+$).
By the Schur-Horn theorem, the eigenvalues of $\SBTrue$ majorize its diagonals, providing the result.
\end{proof}
\Cref{theorem:asymptotic_gain} tells us that $\betaEC$ succeeds at adaptively learning and leveraging similarities among datasets in the high-dimensional limit.
In particular, $\gain$ reduces to zero only when the eigenvalues of $\SBTrue$ are arbitrarily close to the entries of its diagonal,
which occurs only when the covariate effects are uncorrelated across datasets.
However, when covariate effects are correlated, we obtain an improvement.

Our next theorem quantifies this relationship through upper and lower bounds.
\begin{theorem}\label{thm:gain_bounds}
Let $\lambda^\downarrow$ and $\ell^\downarrow$ denote the eigenvalues and diagonals of $\SBTrue$, respectively, sorted in descending order.
Then $\gain \le  2\noise Q\inv \|\lambda\|_2 \| \ell^\downarrow - \lambda^\downarrow \|_2/(\lambdaMin+\noise)^3$
and
$\gain \ge \noise Q\inv \| \ell^\downarrow - \lambda^\downarrow \|_2^2/(\lambdaMax + \noise)^{3},$
where $\lambdaMax$ and  $\lambdaMin$ are the largest and smallest, respectively, eigenvalues of $\SBTrue$.
\end{theorem}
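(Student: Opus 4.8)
The plan is to start from \Cref{lemma:gain}, which reduces the claim to two-sided bounds on $\Delta := \sum_{q=1}^Q (\lambda^\downarrow_q + \noise)\inv - \sum_{q=1}^Q(\ell^\downarrow_q+\noise)\inv$, since $\gain = \noise Q\inv \Delta$. Three structural facts about $\SBTrue$ will drive everything: (i) its eigenvalues and diagonals share the same sum (the trace), so $\sum_q (\lambda^\downarrow_q - \ell^\downarrow_q) = 0$; (ii) by the Schur--Horn theorem the eigenvalues majorize the diagonals, so the partial sums $A_k := \sum_{q\le k}(\lambda^\downarrow_q - \ell^\downarrow_q)$ are nonnegative with $A_Q = 0$; and (iii) each diagonal is a Rayleigh quotient, hence $\ell^\downarrow_q \in [\lambdaMin, \lambdaMax]$. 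I will also record that $g(x) := (x+\noise)\inv$ is convex and decreasing, with $\tfrac12 g''(x) = (x+\noise)^{-3}$ lying in $[(\lambdaMax+\noise)^{-3}, (\lambdaMin+\noise)^{-3}]$ on the relevant interval.

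For the lower bound I would write each summand by its exact first-order Taylor expansion about the diagonal, $g(\lambda^\downarrow_q) - g(\ell^\downarrow_q) = g'(\ell^\downarrow_q)\,a_q + a_q^2/[(\lambda^\downarrow_q+\noise)(\ell^\downarrow_q+\noise)^2]$ with $a_q := \lambda^\downarrow_q - \ell^\downarrow_q$, so that $\Delta = T_1 + R$ with a strictly positive quadratic remainder $R$. The first-order term $T_1 = \sum_q g'(\ell^\downarrow_q) a_q$ is handled by summation by parts: since $g'$ is increasing and $\ell^\downarrow$ is sorted descending, the increments $g'(\ell^\downarrow_q) - g'(\ell^\downarrow_{q+1})$ are nonnegative, and multiplying by the nonnegative partial sums $A_q$ (with $A_Q=0$) shows $T_1 \ge 0$. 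Bounding the denominators in $R$ by their largest possible value $(\lambdaMax+\noise)^3$ then gives $R \ge (\lambdaMax+\noise)^{-3}\|\ell^\downarrow - \lambda^\downarrow\|_2^2$, which is the claimed lower bound after the $\noise Q\inv$ rescaling.

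For the upper bound I would instead expand about the eigenvalue, $g(\lambda^\downarrow_q) - g(\ell^\downarrow_q) = g'(\lambda^\downarrow_q)\,a_q - a_q^2/[(\lambda^\downarrow_q+\noise)^2(\ell^\downarrow_q+\noise)]$, so that $\Delta = \tilde T_1 - \tilde R$ with $\tilde R \ge 0$; dropping $\tilde R$ leaves $\Delta \le \tilde T_1$. Applying summation by parts to $\tilde T_1$ together with the mean value bound $g'(\lambda^\downarrow_q) - g'(\lambda^\downarrow_{q+1}) \le 2(\lambdaMin+\noise)^{-3}(\lambda^\downarrow_q - \lambda^\downarrow_{q+1})$, I obtain $\tilde T_1 \le 2(\lambdaMin+\noise)^{-3}\sum_q (\lambda^\downarrow_q - \lambda^\downarrow_{q+1})A_q$. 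The remaining sum telescopes (re-expanding $A_q$ and using $\sum_q a_q = 0$) to the clean inner product $\sum_q a_q \lambda^\downarrow_q$, and Cauchy--Schwarz bounds this by $\|\ell^\downarrow - \lambda^\downarrow\|_2\,\|\lambda\|_2$; rescaling yields the stated upper bound.

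The main obstacle is the first-order term. Unlike the naive intuition that the zero-sum constraint $\sum_q a_q = 0$ should kill it, $T_1$ (and $\tilde T_1$) is genuinely nonzero and must be signed and sized using the full strength of majorization, i.e.\ that \emph{every} partial sum $A_q \ge 0$, not merely the trace equality. Summation by parts is the device that converts majorization into a usable bound, and the asymmetry in the choice of expansion point---about the diagonals for the lower bound and about the eigenvalues for the upper bound---is exactly what lets each remainder be dropped or signed in the favorable direction and makes the two bounds emerge with matching, interpretable constants. The remaining steps, namely verifying the two exact Taylor remainders and the diagonal containment $\ell^\downarrow_q \in [\lambdaMin, \lambdaMax]$, are routine.
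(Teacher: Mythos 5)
Your proposal is correct, and it lands on exactly the constants in the theorem: the two expansions you use are genuine algebraic identities for $f(x)=(x+\noise)\inv$ (your $g$), e.g.\ $f(\lambda^\downarrow_q)-f(\ell^\downarrow_q)-f'(\ell^\downarrow_q)(\lambda^\downarrow_q-\ell^\downarrow_q)=(\lambda^\downarrow_q-\ell^\downarrow_q)^2/[(\lambda^\downarrow_q+\noise)(\ell^\downarrow_q+\noise)^2]$, your summation-by-parts arguments against the nonnegative majorization partial sums are sound, and the telescoping identity $\sum_{q<Q}(\lambda^\downarrow_q-\lambda^\downarrow_{q+1})A_q=\langle\lambda^\downarrow-\ell^\downarrow,\lambda^\downarrow\rangle$ holds because the total sum of the $a_q$ vanishes. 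Your route is, however, genuinely different from the paper's. The paper (\Cref{sec:asymptotic_supp}) sandwiches $\vec f(\lambda)-\vec f(\ell)$ between differences of a \emph{single global} second-order Taylor approximation of $f$, expanded at $\lambdaMax$ for the lower bound and at $\lambdaMin$ for the upper bound (\Cref{lemma:quadratic_approx}); because quadratics difference out across vectors of equal sum (\Cref{lemma:quadratic_gain}), both bounds collapse onto the one scalar $\|\lambda\|^2-\|\ell\|^2$, which is then bounded above by $2\|\lambda\|\,\|\lambda-\ell\|$ via Cauchy--Schwarz and below by $\|\lambda-\ell\|^2$ via $\langle\lambda-\ell,\ell\rangle\ge 0$ (\Cref{lemma:pos_inner_product}). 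You never pass through $\|\lambda\|^2-\|\ell\|^2$: your lower-bound quantity $\|\ell^\downarrow-\lambda^\downarrow\|_2^2$ falls directly out of the exact remainder, and your upper-bound inner product comes from the telescoped first-order term. The shared pillars are the same---Schur--Horn, the curvature of $f$ on $[\lambdaMin,\lambdaMax]$, Abel summation (the paper's \Cref{lemma:pos_inner_product} is precisely that statement, proved by induction), and Cauchy--Schwarz---and your asymmetric choice of expansion points (diagonals for the lower bound, eigenvalues for the upper) plays the same role as the paper's asymmetric choice of $\lambdaMax$ versus $\lambdaMin$. What your decomposition buys is economy: the exact rational remainders replace the most technical step of the paper's argument, namely the verification inside \Cref{lemma:quadratic_approx} that the difference-quotient vector $h$ has non-increasing entries, which there requires expanding a bivariate function $\phi(a,b)$ and signing its partial derivatives. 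What the paper's route buys is modularity and transparency: the quadratic sandwich and the inner-product lemma are isolated, reusable statements, and reducing both bounds to the single quantity $\|\lambda\|^2-\|\ell\|^2$ makes the role of majorization especially plain.
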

\Cref{thm:gain_bounds} allows us to see several aspects of when our method will and will not perform well.
First, the presence of $\|\ell^\downarrow - \lambda^\downarrow \|_2^2$ in both the upper and lower bounds demonstrates that
$\gain$ will be small when the eigenvalues are close to the diagonal entries, with Euclidean distance as an informative metric.

As we find in our next corollary, \Cref{thm:gain_bounds} additionally allows us to see that nontrivial gains may be obtained only in an intermediate signal-to-noise regime, 
where signal is given by the size of the covariate effects
and noise is the variance level $\noise.$ 
Notably, under \Cref{condition:orthogonal_design}, $\noise$ relates directly to the variance of $\betaLS,$ and is influenced by both the residual variances and the dataset sizes; see \Cref{sec:identity_covariance_condition}.
In particular we interpret $\lambdaMin$ as a proxy for signal strength since it captures the magnitude of typical $\beta_d$'s along their direction of least variation.
\begin{corollary}\label{cor:gain_SNR}
$\gain \le  4\kappa^2\lambdaMin/\noise$
and $\gain \le 4\kappa^2(\lambdaMin/\noise)\inv,$
where $\kappa := \lambdaMax/\lambdaMin$ is the condition number of $\SBTrue.$
\end{corollary}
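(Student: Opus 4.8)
The plan is to obtain both inequalities as short, purely algebraic consequences of the upper bound already established in \Cref{thm:gain_bounds}, namely $\gain \le 2\noise Q\inv \|\lambda\|_2\,\|\ell^\downarrow - \lambda^\downarrow\|_2/(\lambdaMin + \noise)^3$. The one structural fact I would invoke is that every eigenvalue and every diagonal entry of the covariance matrix $\SBTrue$ lies in the interval $[\lambdaMin, \lambdaMax]$: the eigenvalue statement is immediate, and the diagonal statement follows from the Rayleigh-quotient bound $\lambdaMin \le e_q^\top \SBTrue e_q \le \lambdaMax$. Since $\SBTrue$ is a covariance (hence positive semi-definite) matrix, we also have $\lambdaMin \ge 0$.

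From this I would derive two coordinatewise estimates. First, $\|\lambda\|_2 = \|\lambda^\downarrow\|_2 \le \sqrt{Q}\,\lambdaMax$, since each of the $Q$ eigenvalues is at most $\lambdaMax$. Second, because each coordinate of $\ell^\downarrow$ and of $\lambda^\downarrow$ lies in $[\lambdaMin, \lambdaMax]$, every coordinate of their difference has magnitude at most $\lambdaMax - \lambdaMin \le \lambdaMax$, so $\|\ell^\downarrow - \lambda^\downarrow\|_2 \le \sqrt{Q}(\lambdaMax - \lambdaMin) \le \sqrt{Q}\,\lambdaMax$. Substituting both into the upper bound of \Cref{thm:gain_bounds}, the two factors of $\sqrt{Q}$ cancel the $Q\inv$, leaving $\gain \le 2\noise\,\lambdaMax^2/(\lambdaMin+\noise)^3$.

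The final step is to rewrite this in terms of the condition number $\kappa = \lambdaMax/\lambdaMin$ and the signal-to-noise ratio $s := \lambdaMin/\noise$. Writing $\lambdaMax = \kappa\lambdaMin$ and factoring $\noise$ out of the denominator gives the clean form $\gain \le 2\kappa^2 s^2/(s+1)^3$. I would then split into the two complementary regimes using elementary inequalities: since $(s+1)^3 \ge s+1$ one gets $s^2/(s+1)^3 \le s$, yielding $\gain \le 2\kappa^2 s \le 4\kappa^2 \lambdaMin/\noise$; and since $(s+1)^3 \ge s^3$ one gets $s^2/(s+1)^3 \le 1/s$, yielding $\gain \le 2\kappa^2/s \le 4\kappa^2 (\lambdaMin/\noise)\inv$.

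There is no real obstacle here: the analytic heavy lifting is already done in \Cref{thm:gain_bounds}, and this corollary is essentially a two-line deduction. The only point requiring a moment's care is recognizing that the single bound $2\kappa^2 s^2/(s+1)^3$ yields both advertised inequalities once it is relaxed in two different directions, one suited to small $s$ and one to large $s$. This is precisely what exhibits the \emph{intermediate} signal-to-noise phenomenon the corollary is meant to capture: the gain vanishes as $s \to 0$ and as $s \to \infty$, so meaningful improvement is possible only for moderate $\lambdaMin/\noise$.
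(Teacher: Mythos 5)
Your proof is correct and follows essentially the same route as the paper: both start from the upper bound in \Cref{thm:gain_bounds}, reduce it to a quantity of the form $\mathrm{const}\cdot\kappa^2\lambdaMin^2\noise/(\lambdaMin+\noise)^3$, and then relax the denominator in two complementary ways (equivalent to $(\lambdaMin+\noise)^3\ge\noise^2\lambdaMin$ and $(\lambdaMin+\noise)^3\ge\lambdaMin^3$) to obtain the two stated inequalities. The only immaterial difference is the intermediate norm bound: you control $\|\lambda\|_2$ and $\|\ell^\downarrow-\lambda^\downarrow\|_2$ coordinatewise by $\sqrt{Q}\,\lambdaMax$ (using the Rayleigh-quotient fact that the diagonals of $\SBTrue$ lie in $[\lambdaMin,\lambdaMax]$), whereas the paper uses $\|\ell^\downarrow-\lambda^\downarrow\|_2\le\sqrt{2}\,\|\lambda\|_2$ followed by $\|\lambda\|_2\le\sqrt{Q}\,\lambdaMax$; both yield the same form, yours with a slightly sharper constant.
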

\Cref{cor:gain_SNR} formalizes the intuitive result that with enough noise, the little recoverable signal is insufficient to effectively share strength.
And furthermore, in the low-noise and high-signal regime $\betaID$ is very accurate on its own and there is little need for joint modeling.
However, when there is a large gap between the largest and smallest eigenvalues of $\SBTrue,$ leading $\kappa$ to be large, the gain could be larger.
$\kappa$ will be large, for example, when the covariate effects are very correlated across datasets.

\section{Experiments}\label{sec:experiments}
\subsection{Simulated data}
We first conduct simulations, where we can directly control the relatedness among datasets and where we know the ground truth values of the parameters. We show that ECov is more accurate than EData when covariates outnumber datasets, whether effects are correlated across datasets or not.

In particular, we simulated covariates, parameters, and responses for $Q=10$ datasets across a range of covariate dimensions.
We generated covariate effects as $\beta_d\overset{i.i.d.}{\sim}\mathcal{N}(0, \SB)$. We chose $\SB$ so that effects were either 
correlated (\Cref{fig:simulation} Left) or independent (\Cref{fig:simulation} Right) across datasets; see \Cref{sec:experiments_supp} for details.
We compare performance of six estimates on these datasets.
These are estimates assuming EData/ECov using moment matching and maximum marginal likelihood to choose $\Sigma$/$\Gamma$ ($\betaEDMM$/$\betaMM$ and $\betaED$/$\betaEC,$ respectively),
as well as least squares ($\betaLS$), 
and ECov applied to each dataset independently ($\betaID$).

\begin{figure}
    \centering
    \includegraphics[width=0.96\textwidth]{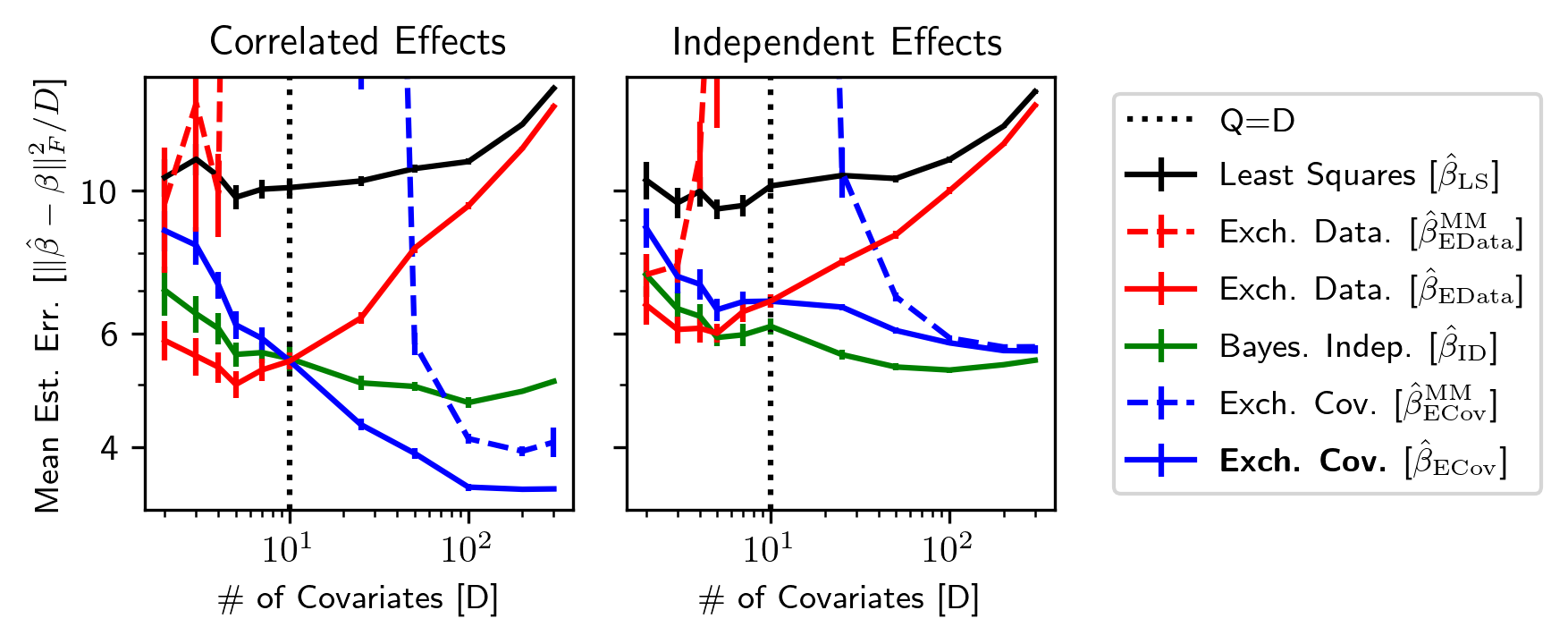}
    \caption{
Dimension dependence of parameter estimation error in simulation.
Covariate effects are either [Left] correlated or [Right] independent across the $Q=10$ datasets.
Each point is the mean $\pm 1 \texttt{SEM}$ across 20 replicates.}
\label{fig:simulation}
\end{figure}

\Cref{fig:simulation} reinforces our theoretical conclusions that (1) $\betaEC$ is more accurate when covariates outnumber datasets
and (2) $\betaED$ is more accurate when datasets outnumber covariates.
Our simulated $X$ matrices are somewhat relaxed from a strict orthogonal design (\Cref{sec:experiments_supp}),
so these experiments suggest that our conclusions may hold beyond \Cref{condition:orthogonal_design}.
Additionally, $\betaEC$ and $\betaED$ both outperform their moment based counterparts, $\betaMM$ and $\betaEDMM.$

Even for the simulations with independent effects,
\Cref{theorem:exch_reg_domination} suggests $\betaEC$ should still outperform $\betaLS$ and $\betaED$ in the higher dimensional regime,
and we see this behavior in the right panel of \Cref{fig:simulation}.
Additionally, in agreement with \Cref{theorem:asymptotic_gain}, $\betaEC$ does not improve over $\betaID$ in the presence of independent effects,
and the performances of these two estimators converge as $D$ grows.

\subsection{Real data}
We find that ECov beats EData, as well as least squares and independent estimation, across three real datasets. We describe the datasets (with additional details in \cref{sec:real_data_supp}) and then our results.

\textbf{Community level law enforcement in the United States.}
Policing rates vary dramatically across different communities, mediating disparate impacts of criminal law enforcement across racial and socioeconomic groups \citep{weisburd2019proactive,slocum2020enforcement}.
Understanding how demographic and socioeconomic attributes of communities relate to variation in rates of law enforcement is crucial to understanding these impacts. Linear models provide the desired interpretability.
We use a dataset \citep{redmond2002data} consisting of $D=117$ community characteristics and their rates of law enforcement (per capita) for different crimes.
We consider $Q=4$ data subsets corresponding to distinct (region, crime) pairs: (Midwest, Robbery), (South, Assault), (Northeast, Larceny), and (West, Auto-theft). This data setup illustrates a small $Q$ and accords with the independent residuals assumption in the likelihood shared by ECov and EData (\Cref{sec:exch_cov}).
Across $q$, $N^q$ represents between 400 and 600 communities.

\textbf{Blog post popularity.}
We regress reader engagement (responses) on $D=279$ characteristics of blog posts (covariates) \citep{buza2014feedback}.
We divided the corpus based on an included length attribute into $Q=3$ datasets,
corresponding to (1) long posts, (2) short posts, and (3) posts from an earlier corpus with missing length attribute.
We hypothesized that the relationships between the characteristics of posts and engagement would differ across these three datasets.
We randomly downsampled to $N^q=500$ posts in each dataset to mimic a low sample-size regime, in which sharing strength is crucial.

\begin{figure}[h]
    \centering
    \includegraphics[width=\textwidth]{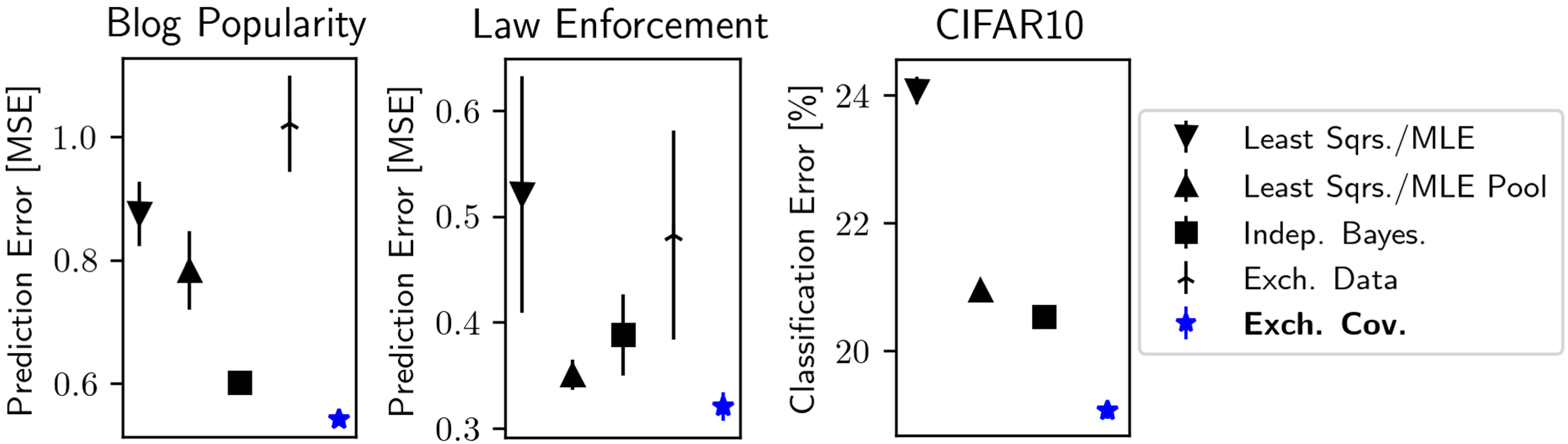}
    \caption{
        Prediction performance on held out data in three applications (mean $\pm 1\texttt{SEM}$ across 5-fold cross-validation splits).
    }\label{fig:applications}
\end{figure}

\textbf{Multiple binary classifications using pre-trained neural network embeddings on CIFAR10.}
Modern machine learning methods have proved very successful on large datasets. Translating this success to smaller datasets is one of the most actively pursued algorithmic challenges in machine learning.
It has spurred the development of frameworks from transfer learning \citep{weiss2016survey} to one-shot learning \citep{vinyals2016matching} to meta-learning \citep{finn2017model}.
One common and simple strategy starts with a learned representation (or ``embedding'') from an expressive neural network fit to a large dataset. Then one can use this embedding as a covariate vector for classification tasks with few labeled data points.

We take a $D=128$ dimensional embedding of the CIFAR10 image dataset \citep{krizhevsky2009learning,alibi2019detect}.
We create $Q=8$ different binary classification tasks using the classes in CIFAR10 (\cref{sec:real_data_supp}).
We downsampled to $N^q$ varying from 100 to 1000 to mimic a setting in which we hope to share strength from large datasets to improve performance on smaller datasets.

\textbf{Discussion of evaluation and results.}
In previous sections we have focused on parameter estimation. Here we instead evaluate with prediction error on held-out data since the true parameters are not observed.
Specifically we perform 5-fold cross-validation and report the mean squared errors and classification errors on test splits.
To reduce variance of out-of-sample error estimates on the applications in which we downsampled, we also evaluate on the additional held-out data.
Because the residual variances were unknown, we estimated these for each application and dataset as
$\hat \sigma^2_q := \|P_{X^q}^\perp Y^q\|^2/(N_q - D),$
where $P_{X^q}^\perp := I_{N_q} - X^q(X^{q\top}X^q)\inv X^{q^\top}$ (see e.g.\ \citep[Chapter 18.1]{gelman2006data}).
All methods ran quickly on a 36 CPU machine; computation of $\betaEC,$ including the EM algorithm, required 2.04 $\pm$ 0.64, 6.89 $\pm$ 3.19 and 37.14 $\pm$ 3.39 seconds (\texttt{mean} $\pm$ \texttt{st-dev} across splits) on the law enforcement, blog, and CIFAR10 tasks, respectively. 

Our results further reinforce the main aspects of our theory.
$\betaEC$ outperformed $\betaED$,
independent Bayes estimates ($\betaID$), and least squares ($\betaLS$) in all applications (at $> 95\%$ nominal confidence with a paired t-test).\footnote{
We did not develop an extension akin to \Cref{alg:E_step_logistic} for \textrm{EData}, and so do not report $\betaED$ for CIFAR10.
Additionally, we report a maximum likelihood estimate (MLE) instead of $\betaLS$ for CIFAR10.}
Additionally, $\betaEC$ outperformed the baseline of ignoring heterogeneity, pooling datasets together,
and using the same effect estimates for every dataset (``Least Sqrs./MLE Pool'').

\Cref{sec:experiments_supp} includes additional results and comparisons.
In particular, we provide the performance of the estimators on each component dataset for each application.
Additionally, we report the performances of
(1) stable and computationally efficient moment based alternatives to $\betaEC$ and $\betaED$
and (2) variants of $\betaEC$ and $\betaED$ that include a learned (rather than zero) prior mean.
\Cref{sec:licenses} reports the licenses of software we used.

\section{Discussion}\label{sec:discussion}
The Bayesian community has long used hierarchical modeling with priors encoding exchangeability of effects across datasets (EData).
In the present work, we have made a case for instead using priors that encode exchangeability across \emph{covariates} (ECov) -- in particular, when the number of covariates exceeds the number of datasets.
We have presented a corresponding concrete model and inference method. We have
shown that ECov outperforms EData in theory and practice when the number of covariates exceeds the number of datasets.

Our approach is, of course, not a panacea. In some settings, a priori exchangeability among covariate effects will be inconsistent with prior beliefs. For example, imagine in the CIFAR10 application if meta-data covariates (such as geo-location and date) were available, in addition to embeddings. Then we might achieve better performance by treating meta-data covariates as distinct from embedding covariates.
Additionally, we focused on a Gaussian prior for convenience. In cases where practitioners have more specific prior beliefs about effects, alternative priors and likelihoods may be warranted, though they may be more computationally challenging.
Moreover, while relatively interpretable, linear models have their downsides. The linear assumption can be overly simplistic in many applications. It is common to misinterpret effects as causal rather than associative. Both the linear model and squared error loss lend themselves naturally to reporting means, but in many applications a median or other summary is more appropriate; so using a mean for convenience can be misleading.


Many exciting directions for further investigation remain.
For example, the covariance $\SB$ may provide an informative measure of task similarity; this similarity measure can be useful in, e.g., meta learning \citep{jerfel2019reconciling} and statistical genetics \citep{bulik_finucane2015atlas}.
It also remains to extend our methodology to other generalized linear models.

\begin{ack}
The authors thank 
    Sameer K. Deshpande, 
    Ryan Giordano, 
    Alex Bloemendal,
    Lorenzo Masoero,
    and Diana Cai
for insightful discussions and comments on the manuscript.
This work was supported in part by ONR Award N00014-18-S-F006 and an NSF CAREER Award.
BLT is supported by NSF GRFP.
\end{ack}

\medskip

\small
\bibliography{references}

\begin{thebibliography}{67}
\providecommand{\natexlab}[1]{#1}
\providecommand{\url}[1]{\texttt{#1}}
\expandafter\ifx\csname urlstyle\endcsname\relax
  \providecommand{\doi}[1]{doi: #1}\else
  \providecommand{\doi}{doi: \begingroup \urlstyle{rm}\Url}\fi

\bibitem[Abadi et~al.(2016)Abadi, Barham, Chen, Chen, Davis, Dean, Ghemawat,
  Irving, Isard, Kudlur, Levenberg, Monga, Moore, Murray, Steiner, Tucker,
  Vasudevan, Warden, Wicke, Yu, and Zheng]{abadi2016tensorflow}
Martín Abadi, Paul Barham, Jianmin Chen, Zhifeng Chen, Andy Davis, Matthieu
  Dean, Jeffrey an~Devin, Sanjay Ghemawat, Geoffrey Irving, Michael Isard,
  Manjunath Kudlur, Josh Levenberg, Rajat Monga, Sherry Moore, Derek~G. Murray,
  Benoit Steiner, Paul Tucker, Vijay Vasudevan, Pete Warden, Martin Wicke, Yuan
  Yu, and Xiaoqiang Zheng.
\newblock Tensorflow: A system for large-scale machine learning.
\newblock In \emph{12th USENIX symposium on operating systems design and
  implementation (OSDI 16)}, pages 265--283, 2016.

\bibitem[Ando and Zellner(2010)]{ando2010hierarchical}
Tomohiro Ando and Arnold Zellner.
\newblock {Hierarchical Bayesian analysis of the seemingly unrelated regression
  and simultaneous equations models using a combination of direct Monte Carlo
  and importance sampling techniques}.
\newblock \emph{Bayesian Analysis}, 5\penalty0 (1):\penalty0 65--95, 2010.

\bibitem[Avery et~al.(1944)Avery, MacLeod, and McCarty]{avery1944studies}
Oswald~T Avery, Colin~M MacLeod, and Maclyn McCarty.
\newblock {Studies on the chemical nature of the substance inducing
  transformation of pneumococcal types: induction of transformation by a
  desoxyribonucleic acid fraction isolated from pneumococcus type III}.
\newblock \emph{The Journal of Experimental Medicine}, 79\penalty0
  (2):\penalty0 137--158, 1944.

\bibitem[Baranchik(1964)]{baranchik1964multiple}
Alvin~J Baranchik.
\newblock Multiple regression and estimation of the mean of a multivariate
  normal distribution.
\newblock Technical report, Stanford University, 1964.

\bibitem[Bates et~al.(2015)Bates, M{\"a}chler, Bolker, and
  Walker]{bates2015fitting}
Douglas Bates, Martin M{\"a}chler, Ben Bolker, and Steve Walker.
\newblock Fitting linear mixed-effects models using lme4.
\newblock \emph{Journal of Statistical Software}, 67\penalty0 (1):\penalty0
  1--48, 2015.

\bibitem[Bhadra and Mallick(2013)]{bhadra2013joint}
Anindya Bhadra and Bani~K Mallick.
\newblock {Joint high-dimensional Bayesian variable and covariance selection
  with an application to eQTL analysis}.
\newblock \emph{Biometrics}, 69\penalty0 (2):\penalty0 447--457, 2013.

\bibitem[Bishop(2006)]{Bishop2006}
Christopher~M. Bishop.
\newblock \emph{{Pattern Recognition and Machine Learning}}.
\newblock Springer, 2006.

\bibitem[Blattberg and George(1991)]{blattberg1991shrinkage}
Robert~C Blattberg and Edward~I George.
\newblock Shrinkage estimation of price and promotional elasticities: Seemingly
  unrelated equations.
\newblock \emph{Journal of the American Statistical Association}, 86\penalty0
  (414):\penalty0 304--315, 1991.

\bibitem[Breiman and Friedman(1997)]{breiman1997predicting}
Leo Breiman and Jerome~H Friedman.
\newblock Predicting multivariate responses in multiple linear regression.
\newblock \emph{Journal of the Royal Statistical Society: Series B},
  59\penalty0 (1):\penalty0 3--54, 1997.

\bibitem[Brown and Zidek(1980)]{brown1980adaptive}
Philip~J Brown and James~V Zidek.
\newblock Adaptive multivariate ridge regression.
\newblock \emph{The Annals of Statistics}, 8\penalty0 (1):\penalty0 64--74,
  1980.

\bibitem[Brown et~al.(1998)Brown, Vannucci, and Fearn]{brown1998multivariate}
Philip~J Brown, Marina Vannucci, and Tom Fearn.
\newblock {Multivariate Bayesian variable selection and prediction}.
\newblock \emph{Journal of the Royal Statistical Society: Series B},
  60\penalty0 (3):\penalty0 627--641, 1998.

\bibitem[Bulik-Sullivan et~al.(2015)Bulik-Sullivan, Finucane, Anttila, Gusev,
  Day, Loh, Duncan, Perry, Patterson, Robinson, Daly, Price, and
  Neal]{bulik_finucane2015atlas}
Brendan Bulik-Sullivan, Hilary~K Finucane, Verneri Anttila, Alexander Gusev,
  Felix~R Day, Po-Ru Loh, Laramie Duncan, John~RB Perry, Nick Patterson,
  Elise~B Robinson, Mark~J Daly, Alkes~L Price, and Benjamin~M Neal.
\newblock An atlas of genetic correlations across human diseases and traits.
\newblock \emph{Nature Genetics}, 47\penalty0 (11):\penalty0 1236, 2015.

\bibitem[Buza(2014)]{buza2014feedback}
Krisztian Buza.
\newblock Feedback prediction for blogs.
\newblock In \emph{{Data Analysis, Machine Learning and Knowledge Discovery}},
  pages 145--152. Springer, 2014.

\bibitem[Cai et~al.(2020)Cai, Sheth, Mackey, and Fusi]{cai2020weighted}
Diana Cai, Rishit Sheth, Lester Mackey, and Nicolo Fusi.
\newblock Weighted meta-learning.
\newblock \emph{arXiv preprint arXiv:2003.09465}, 2020.

\bibitem[Chib and Greenberg(1995)]{chib1995hierarchical}
Siddhartha Chib and Edward Greenberg.
\newblock {Hierarchical analysis of SUR models with extensions to correlated
  serial errors and time-varying parameter models}.
\newblock \emph{Journal of Econometrics}, 68\penalty0 (2):\penalty0 339--360,
  1995.

\bibitem[Dawid(1981)]{dawid1981some}
A~Philip Dawid.
\newblock {Some matrix-variate distribution theory: notational considerations
  and a Bayesian application}.
\newblock \emph{Biometrika}, 68\penalty0 (1):\penalty0 265--274, 1981.

\bibitem[Deshpande et~al.(2019)Deshpande, Ro{\v{c}}kov{\'a}, and
  George]{deshpande2019simultaneous}
Sameer~K Deshpande, Veronika Ro{\v{c}}kov{\'a}, and Edward~I George.
\newblock Simultaneous variable and covariance selection with the multivariate
  spike-and-slab lasso.
\newblock \emph{Journal of Computational and Graphical Statistics}, 2019.

\bibitem[Efron and Morris(1972{\natexlab{a}})]{efron1972empirical}
Bradley Efron and Carl Morris.
\newblock {Empirical Bayes on vector observations: An extension of Stein's
  method}.
\newblock \emph{Biometrika}, 59\penalty0 (2):\penalty0 335--347,
  1972{\natexlab{a}}.

\bibitem[Efron and Morris(1972{\natexlab{b}})]{efron1972limiting}
Bradley Efron and Carl Morris.
\newblock {Limiting the risk of Bayes and empirical Bayes estimators—Part II:
  The empirical Bayes case}.
\newblock \emph{Journal of the American Statistical Association}, 67\penalty0
  (337):\penalty0 130--139, 1972{\natexlab{b}}.

\bibitem[Fan and Li(2001)]{fan2001variable}
Jianqing Fan and Runze Li.
\newblock Variable selection via nonconcave penalized likelihood and its oracle
  properties.
\newblock \emph{Journal of the American Statistical Association}, 96\penalty0
  (456):\penalty0 1348--1360, 2001.

\bibitem[Finn et~al.(2017)Finn, Abbeel, and Levine]{finn2017model}
Chelsea Finn, Pieter Abbeel, and Sergey Levine.
\newblock Model-agnostic meta-learning for fast adaptation of deep networks.
\newblock In \emph{International Conference on Machine Learning}, pages
  1126--1135. PMLR, 2017.

\bibitem[Gelfand et~al.(1990)Gelfand, Hills, Racine-Poon, and
  Smith]{gelfand1990illustration}
Alan~E Gelfand, Susan~E Hills, Amy Racine-Poon, and Adrian~FM Smith.
\newblock {Illustration of Bayesian inference in normal data models using Gibbs
  sampling}.
\newblock \emph{Journal of the American Statistical Association}, 85\penalty0
  (412):\penalty0 972--985, 1990.

\bibitem[Gelman and Hill(2006)]{gelman2006data}
Andrew Gelman and Jennifer Hill.
\newblock \emph{{Data Analysis using Regression and Multilevel/Hierarchical
  Models}}.
\newblock Cambridge University Press, 2006.

\bibitem[Gelman et~al.(2013)Gelman, Carlin, Stern, Dunson, Vehtari, and
  Rubin]{gelman2013bayesian}
Andrew Gelman, John~B Carlin, Hal~S Stern, David~B Dunson, Aki Vehtari, and
  Donald~B Rubin.
\newblock \emph{{Bayesian Data Analysis}}.
\newblock Chapman and Hall/CRC, 2013.

\bibitem[Golan and Perloff(2002)]{golan2002comparison}
Amos Golan and Jeffrey~M Perloff.
\newblock Comparison of maximum entropy and higher-order entropy estimators.
\newblock \emph{Journal of Econometrics}, 107\penalty0 (1-2):\penalty0
  195--211, 2002.

\bibitem[Grant et~al.(2018)Grant, Finn, Levine, Darrell, and
  Griffiths]{grant2018recasting}
Erin Grant, Chelsea Finn, Sergey Levine, Trevor Darrell, and Thomas Griffiths.
\newblock {Recasting gradient-based meta-learning as hierarchical Bayes}.
\newblock In \emph{International Conference on Learning Representations}, 2018.

\bibitem[Griffiths(2003)]{griffiths2003bayesian}
William~E Griffiths.
\newblock Bayesian inference in the seemingly unrelated regressions model.
\newblock In \emph{Computer-Aided Econometrics}, pages 287--314. CRC Press,
  2003.

\bibitem[Haitovsky(1987)]{haitovsky1987multivariate}
Yoel Haitovsky.
\newblock On multivariate ridge regression.
\newblock \emph{Biometrika}, 74\penalty0 (3):\penalty0 563--570, 1987.

\bibitem[Hazel(1943)]{hazel1943genetic}
Lanoy~Nelson Hazel.
\newblock The genetic basis for constructing selection indexes.
\newblock \emph{Genetics}, 28\penalty0 (6):\penalty0 476--490, 1943.

\bibitem[Hillier and Kan(2019)]{hillier2019properties}
Grant Hillier and Raymond Kan.
\newblock {Properties of the inverse of a noncentral Wishart matrix}.
\newblock \emph{Available at SSRN 3370864}, 2019.

\bibitem[Hoerl and Kennard(1970)]{hoerl1970ridge}
Arthur~E Hoerl and Robert~W Kennard.
\newblock Ridge regression: Biased estimation for nonorthogonal problems.
\newblock \emph{Technometrics}, 12\penalty0 (1):\penalty0 55--67, 1970.

\bibitem[Horn(1954)]{horn1954doubly}
Alfred Horn.
\newblock Doubly stochastic matrices and the diagonal of a rotation matrix.
\newblock \emph{American Journal of Mathematics}, 76\penalty0 (3):\penalty0
  620--630, 1954.

\bibitem[James and Stein(1961)]{james1961estimation}
W.~James and Charles Stein.
\newblock Estimation with quadratic loss.
\newblock \emph{Proceedings of the Fourth Berkeley Symposium on Mathematical
  Statistics and Probability}, 1:\penalty0 361--379, 1961.

\bibitem[Jerfel et~al.(2019)Jerfel, Grant, Griffiths, and
  Heller]{jerfel2019reconciling}
Ghassen Jerfel, Erin Grant, Thomas~L Griffiths, and Katherine Heller.
\newblock Reconciling meta-learning and continual learning with online mixtures
  of tasks.
\newblock \emph{Advances in Neural Information Processing Systems}, 32, 2019.

\bibitem[Jordan(2010)]{jordan2010bayesian}
Michael~I Jordan.
\newblock Bayesian nonparametric learning: Expressive priors for intelligent
  systems.
\newblock \emph{{Heuristics, Probability and Causality: A Tribute to Judea
  Pearl}}, 11:\penalty0 167--185, 2010.

\bibitem[Kingma and Welling(2013)]{kingma2013auto}
Diederik~P Kingma and Max Welling.
\newblock {Auto-encoding variational Bayes}.
\newblock \emph{arXiv preprint arXiv:1312.6114}, 2013.

\bibitem[Krizhevsky(2009)]{krizhevsky2009learning}
Alex Krizhevsky.
\newblock Learning multiple layers of features from tiny images.
\newblock \emph{Technical Paper, University of Toronto}, 2009.

\bibitem[Laird and Ware(1982)]{laird1982random}
Nan~M Laird and James~H Ware.
\newblock Random-effects models for longitudinal data.
\newblock \emph{Biometrics}, 38\penalty0 (4):\penalty0 963--974, 1982.

\bibitem[Lee et~al.(2012)Lee, Yang, Goddard, Visscher, and
  Wray]{lee2012estimation}
Sang~Hong Lee, Jian Yang, Michael~E Goddard, Peter~M Visscher, and Naomi~R
  Wray.
\newblock Estimation of pleiotropy between complex diseases using
  single-nucleotide polymorphism-derived genomic relationships and restricted
  maximum likelihood.
\newblock \emph{Bioinformatics}, 28\penalty0 (19):\penalty0 2540--2542, 2012.

\bibitem[Lehmann and Casella(2006)]{lehmann2006theory}
Erich~L Lehmann and George Casella.
\newblock \emph{{Theory of Point Estimation}}.
\newblock Springer Science \& Business Media, 2006.

\bibitem[Letac and Massam(2004)]{letac2004tutorial}
Guy Letac and H{\'e}lene Massam.
\newblock {A tutorial on non central Wishart distributions}.
\newblock \emph{Technical Paper, Toulouse University}, 2004.

\bibitem[Lewin et~al.(2015)Lewin, Saadi, Peters, Moreno-Moral, Lee, Smith,
  Petretto, Bottolo, and Richardson]{lewin2015mt}
Alex Lewin, Habib Saadi, James~E Peters, Aida Moreno-Moral, James~C Lee,
  Kenneth~GC Smith, Enrico Petretto, Leonardo Bottolo, and Sylvia Richardson.
\newblock {MT-HESS: an efficient Bayesian approach for simultaneous association
  detection in OMICS datasets, with application to eQTL mapping in multiple
  tissues}.
\newblock \emph{Bioinformatics}, 32\penalty0 (4):\penalty0 523--532, 2015.

\bibitem[Lindley and Smith(1972)]{lindley1972bayes}
Dennis~V Lindley and Adrian~FM Smith.
\newblock Bayes estimates for the linear model.
\newblock \emph{Journal of the Royal Statistical Society: Series B},
  34\penalty0 (1):\penalty0 1--18, 1972.

\bibitem[Luenberger(1973)]{luenberger1973introduction}
David~G Luenberger.
\newblock \emph{{Introduction to Linear and Nonlinear Programming}}.
\newblock Addison-Wesley Reading, MA, 1973.

\bibitem[Maier et~al.(2015)Maier, Moser, Chen, Ripke, of~the Psychiatric
  Genomics~Consortium, Coryell, Potash, Scheftner, Shi, Weissman, Hultman,
  Landén, Levinson, Kendler, Smoller, Wray, and Lee]{maier2015joint}
Robert Maier, Gerhard Moser, Guo-Bo Chen, Stephan Ripke, Cross-Disorder
  Working~Group of~the Psychiatric Genomics~Consortium, William Coryell,
  James~B Potash, William~A Scheftner, Jianxin Shi, Myrna~M Weissman,
  Christina~M Hultman, Mikael Landén, Douglas~F Levinson, Kenneth~S Kendler,
  Jordan~W Smoller, Naomi~R Wray, and S~Hong Lee.
\newblock Joint analysis of psychiatric disorders increases accuracy of risk
  prediction for schizophrenia, bipolar disorder, and major depressive
  disorder.
\newblock \emph{The American Journal of Human Genetics}, 96\penalty0
  (2):\penalty0 283--294, 2015.

\bibitem[McLachlan and Krishnan(2007)]{mclachlan2007algorithm}
Geoffrey~J McLachlan and Thriyambakam Krishnan.
\newblock \emph{{The EM Algorithm and Extensions}}, volume 382.
\newblock John Wiley \& Sons, 2007.

\bibitem[Meager(2019)]{meager2019understanding}
Rachael Meager.
\newblock {Understanding the average impact of microcredit expansions: A
  Bayesian hierarchical analysis of seven randomized experiments}.
\newblock \emph{American Economic Journal: Applied Economics}, 11\penalty0
  (1):\penalty0 57--91, 2019.

\bibitem[Nocedal and Wright(2006)]{nocedal2006numerical}
Jorge Nocedal and Stephen Wright.
\newblock \emph{{Numerical Optimization}}.
\newblock Springer Science \& Business Media, 2006.

\bibitem[Redmond and Baveja(2002)]{redmond2002data}
Michael Redmond and Alok Baveja.
\newblock A data-driven software tool for enabling cooperative information
  sharing among police departments.
\newblock \emph{European Journal of Operational Research}, 141\penalty0
  (3):\penalty0 660--678, 2002.

\bibitem[Reinsel(1985)]{reinsel1985mean}
Gregory~C Reinsel.
\newblock {Mean squared error properties of empirical Bayes estimators in a
  multivariate random effects general linear model}.
\newblock \emph{Journal of the American Statistical Association}, 80\penalty0
  (391):\penalty0 642--650, 1985.

\bibitem[Runcie et~al.(2020)Runcie, Qu, Cheng, and Crawford]{runcie2020megalmm}
Daniel~E Runcie, Jiayi Qu, Hao Cheng, and Lorin Crawford.
\newblock {MegaLMM: Mega-scale linear mixed models for genomic predictions with
  thousands of traits}.
\newblock \emph{BioRxiv}, 2020.

\bibitem[Slocum et~al.(2020)Slocum, Huebner, Greene, and
  Rosenfeld]{slocum2020enforcement}
Lee~A Slocum, Beth~M Huebner, Claire Greene, and Richard Rosenfeld.
\newblock {Enforcement trends in the city of St. Louis from 2007 to 2017:
  Exploring variability in arrests and criminal summonses over time and across
  communities}.
\newblock \emph{Journal of Community Psychology}, 48\penalty0 (1):\penalty0
  36--67, 2020.

\bibitem[Smith and Kohn(2000)]{smith2000nonparametric}
Michael Smith and Robert Kohn.
\newblock Nonparametric seemingly unrelated regression.
\newblock \emph{Journal of Econometrics}, 98\penalty0 (2):\penalty0 257--281,
  2000.

\bibitem[Stephens(2013)]{stephens2013unified}
Matthew Stephens.
\newblock A unified framework for association analysis with multiple related
  phenotypes.
\newblock \emph{PloS One}, 8\penalty0 (7):\penalty0 e65245, 2013.

\bibitem[Thompson(1973)]{thompson1973estimation}
Robin Thompson.
\newblock The estimation of variance and covariance components with an
  application when records are subject to culling.
\newblock \emph{Biometrics}, pages 527--550, 1973.

\bibitem[Tsukuma(2008)]{tsukuma2008admissibility}
Hisayuki Tsukuma.
\newblock {Admissibility and minimaxity of Bayes estimators for a normal mean
  matrix}.
\newblock \emph{Journal of Multivariate Analysis}, 99\penalty0 (10):\penalty0
  2251--2264, 2008.

\bibitem[Van Der~Merwe and Zidek(1980)]{van1980multivariate}
A~Van Der~Merwe and James~V Zidek.
\newblock Multivariate regression analysis and canonical variates.
\newblock \emph{Canadian Journal of Statistics}, 8\penalty0 (1):\penalty0
  27--39, 1980.

\bibitem[Van~Looveren et~al.(2019)Van~Looveren, Vacanti, Klaise, and
  Coca]{alibi2019detect}
Arnaud Van~Looveren, Giovanni Vacanti, Janis Klaise, and Alexandru Coca.
\newblock {Alibi-Detect}: Algorithms for outlier and adversarial instance
  detection, concept drift and metrics.
\newblock 2019.
\newblock URL \url{https://github.com/SeldonIO/alibi-detect}.

\bibitem[van Wieringen(2015)]{van2015lecture}
Wessel~N van Wieringen.
\newblock Lecture notes on ridge regression.
\newblock \emph{arXiv preprint arXiv:1509.09169}, 2015.

\bibitem[Vinyals et~al.(2016)Vinyals, Blundell, Lillicrap, Kavukcuoglu, and
  Wierstra]{vinyals2016matching}
Oriol Vinyals, Charles Blundell, Timothy Lillicrap, Koray Kavukcuoglu, and Daan
  Wierstra.
\newblock Matching networks for one shot learning.
\newblock In \emph{Proceedings of the 30th International Conference on Neural
  Information Processing Systems}, pages 3637--3645, 2016.

\bibitem[Wainwright(2019)]{wainwright2019high}
Martin~J Wainwright.
\newblock \emph{High-Dimensional Statistics: A Non-Asymptotic Viewpoint}.
\newblock {Cambridge University Press}, 2019.

\bibitem[Weisburd et~al.(2019)Weisburd, Majmundar, Aden, Braga, Bueermann,
  Cook, Goff, Harmon, Haviland, Lum, Manski, Mastrofski, Meares, Nagin, Owens,
  Raphael, Ratcliffe, and Tyler]{weisburd2019proactive}
David Weisburd, Malay~K Majmundar, Hassan Aden, Anthony Braga, Jim Bueermann,
  Philip~J Cook, Phillip~Atiba Goff, Rachel~A Harmon, Amelia Haviland, Cynthia
  Lum, Charles Manski, Stephen Mastrofski, Tracey Meares, Daniel Nagin, Emily
  Owens, Steven Raphael, Jerry Ratcliffe, and Tom Tyler.
\newblock Proactive policing: A summary of the report of the national academies
  of sciences, engineering, and medicine.
\newblock \emph{Asian Journal of Criminology}, 14\penalty0 (2):\penalty0
  145--177, 2019.

\bibitem[Weiss et~al.(2016)Weiss, Khoshgoftaar, and Wang]{weiss2016survey}
Karl Weiss, Taghi~M Khoshgoftaar, and DingDing Wang.
\newblock A survey of transfer learning.
\newblock \emph{Journal of Big Data}, 3\penalty0 (1):\penalty0 1--40, 2016.

\bibitem[Zellner(1962)]{zellner1962efficient}
Arnold Zellner.
\newblock An efficient method of estimating seemingly unrelated regressions and
  tests for aggregation bias.
\newblock \emph{Journal of the American Statistical Association}, 57\penalty0
  (298):\penalty0 348--368, 1962.

\bibitem[Zellner and Huang(1962)]{zellner1962further}
Arnold Zellner and David~S Huang.
\newblock Further properties of efficient estimators for seemingly unrelated
  regression equations.
\newblock \emph{International Economic Review}, 3\penalty0 (3):\penalty0
  300--313, 1962.

\bibitem[Zhou and Stephens(2014)]{zhou2014efficient}
Xiang Zhou and Matthew Stephens.
\newblock Efficient multivariate linear mixed model algorithms for genome-wide
  association studies.
\newblock \emph{Nature Methods}, 11\penalty0 (4):\penalty0 407, 2014.

\bibitem[Zidek(1978)]{zidek1978deriving}
Jim Zidek.
\newblock Deriving unbiased risk estimators of multinormal mean and regression
  coefficient estimators using zonal polynomials.
\newblock \emph{The Annals of Statistics}, pages 769--782, 1978.

\end{thebibliography}

\normalsize

\newpage

\appendix
\section{Additional Related Work}\label{sec:additional_related_work}

\subsection{Brown and Zidek details}\label{sec:brown_and_zidek_details}
As discussed in \Cref{sec:intro}, the papers of \citet{brown1980adaptive} and \citet{haitovsky1987multivariate} carry the only references of which we are aware of the idea of exchangeability among covariate effects.
We here provide additional discussion on this related prior work.
To aid our comparison, we slightly modify their notation to match ours.

In their paper, ``Adaptive Multivariate Ridge Regression'', \citet{brown1980adaptive} consider multiple related regression regression problems with a shared design (i.e. $X:=X^1 =X^2=\cdots = X^Q$) and seek to extend the univariate ridge regression estimator of \citet{hoerl1970ridge} to the multivariate setting.
Specifically, the authors propose a class of estimators of the form
$$
\hat {\vec \beta}  = (I_Q \otimes X^\top X + K \otimes I_D)\inv (I_Q \otimes X^\top )\vec Y,
$$
where $\vec Y := [Y^{1\top}, Y^{2\top}, \cdots,Y^{Q\top}]^\top$, $\otimes$ denotes the Kronecker product,
and $K$ is a $Q \times Q$ ridge matrix which they suggest be chosen by some ``adaptive rule'' (i.e.\ that $K$ be a function of the observed data).
Notably, this functional form closely resembles our expression for $\E[ \vec \beta | \data, \SB]$ in \Cref{prop:conj_form}, if we take $K=\SB\inv.$

The authors do not explicitly discuss the interpretation of $K\inv$ as the covariance of a Gaussian prior,
nor any interpretation for this quantity as capturing any notion of a priori similarity of the regression problems.
However, they do point to Bayesian motivations at the outset of the paper.
In particular,
\citet{brown1980adaptive} narrow their consideration of possible methods for choosing $K$ to those which satisfy two criteria:
\begin{enumerate}
    \item{For any $K$, $\hat {\vec \beta}$ correspond to a Bayes estimate.}
    \item{In the case that $X^\top X=I_D$, $\hat {\vec \beta}$ correspond to the \citet{efron1972limiting}
    extension of the \citet{james1961estimation} estimator to vector observations.\footnote{
            See \Cref{sec:related_work_normal_means} for further discussion of connections to \citet{efron1972limiting}.}}
\end{enumerate}
They present four such estimators (derived from existing estimators of a multivariate normal means that dominate the sample mean) and demonstrate conditions under which each of these estimators dominates the least squares estimator for $\beta$.

As a further point of connection, the authors claim 
in the their abstract that their ``result is implicitly in the work of \citet{lindley1972bayes} although not actually developed there.''
However, the authors give little support for, or clarification of this claim.
In particular, their analysis is entirely frequentist and they provide no explanation for how their proposed estimators for $K$ might be interpreted as reasonable empirical Bayes estimates.

In their short follow-up paper, \citet{haitovsky1987multivariate} elaborates on this Bayesian motivation.
The primary focus of \citet{haitovsky1987multivariate} is a matrix normal prior \citep{dawid1981some} that captures structure in effects across both datasets and covariates.
Though this prior is not exchangeable across covariate effects in general, they note 
that the special case of where effects are uncorrelated across different covariates satisfies the notion of exchangeability for which we have advocated in this paper.
%
%
%

\subsection{Methods of inference for $\Gamma$ in existing work assuming exchangeability of effects across datasets.}\label{sec:inference_of_gamma_approaches}
We here describe several existing approaches for estimating the covariance matrix $\Gamma$ in the exchangeability of effects among datasets model.
These existing methods do not translate directly to the exchangeability of effects among covariates model proposed in this paper.
However, in principle, one could likely adapt any of them to our setting.
We have chosen to use the EM algorithm described in \Cref{sec:our_method} for its simplicity, efficiency, and stability.
We leave the investigation of alternative estimation approaches to future work.

In their initial paper, \citeauthor{lindley1972bayes} (\citeyear{lindley1972bayes}) \citep{lindley1972bayes} suggest that a fully Bayesian approach would be ideal.
They advocate for placing a subjectively specified, conjugate Wishart prior on $\Gamma,$ and remark that one should ideally consider the posterior of $\Gamma$ rather than relying on a point estimate.
However, in the face of analytic intractability, they propose returning MAP estimates for $\Gamma$  and $\beta$ and provide an iterative optimization scheme that they show is stationary at $\hat \Gamma, \hat \beta = \argmax \log p(\Gamma, \beta | \data).$

Advances in computational methods since 1972 have given rise to other ways of estimating $\Gamma$ in this model.
\citet{gelfand1990illustration} describe a Gibbs sampling algorithm for posterior inference.
\citet[Chapter 15 sections 4-5]{gelman2013bayesian} describe an EM algorithm which returns a maximum a posteriori estimate marginalizing over $\beta$, $\hat \Gamma = \argmax p(\Gamma | \data) = \int p(\Gamma,\beta | \data) d \beta$;
notably, though the updates in our EM algorithm for the case of exchangeability in effects across covariates differ from those in the case of exchangeability among datasets, one can see the two algorithms as closely related through their shared dependence on Gaussian conjugacy.
Finally, in the software package \texttt{lme4}, \citet{bates2015fitting} use the maximum marginal likelihood estimate, $\hat \Gamma = \argmax p(\data | \Gamma),$ which they compute using gradient based optimization.

\subsection{Related work on estimation of normal means}\label{sec:related_work_normal_means}
As we discuss in \Cref{sec:identity_covariance_condition}, under \Cref{condition:orthogonal_design} and when $\noise =1,$
we have that 
$$
\betaLS^q \overset{indep}{\sim}\mathcal{N}(\beta^q,I_D).
$$
As such, inference reduces to the ``normal means problem'', with a matrix valued parameter.
Specifically, we can equivalently write
$$
\betaLS = \beta + \epsilon,
$$
for a random $D\times Q$ matrix $\epsilon$ with i.i.d.\ standard normal entries.

This problem has been studied closely outside of the context of regression. 
Notably, \citet{efron1972empirical} approach the problem from an empirical Bayesian perspective 
and recommend an approach analogous to estimating $\SB$ by 
$$\SBhat^{\mathrm{Ef}} := (D-Q-1)\inv \betaLS^\top \betaLS - I_Q.$$
\citet{efron1972empirical} argue for this estimate because it is unbiased for a transformation of the parameter.
In particular, $\SBhat^{\mathrm{Ef}}$ satisfies $\E[(I_Q + \SBhat^{\mathrm{Ef}})\inv] = (I_Q + \SB)\inv$
when each $\beta_d \overset{i.i.d.}{\sim} \mathcal{N}(0, \SB).$
They show that, among all estimates of the form $\alpha \betaLS^\top \betaLS - I_Q$ with real valued $\alpha$, this factor $\alpha=(D-Q-1)\inv$ is optimal in terms of squared error risk.
Notably, this includes the moment estimate $\SBMM$ we describe in \Cref{sec:nonasymptotic_theory}, which corresponds to $\alpha=D\inv$.
However, this optimality result does not translate to the associated positive part estimators.
In fact, in experiments not shown, we have found that $\betaEC$ reliably outperforms an analogous positive part variant that estimates $\SB$ by $\SBhat^{\mathrm{Ef}}$.

\begin{remark}
\citet[Theorem 5]{efron1972empirical} prove that an analogous positive part estimator is superior to their original estimator in term of ``relative savings loss'' (RSL).
Our domination result in \Cref{theorem:exch_reg_domination} is strictly stronger and implies an improvement in RSL as well.
Furthermore our proof technique immediately applies to their estimator.
\end{remark}

Several other works have noted the dependence of the risk of estimators for the matrix variate normal means problem on the expectations of the eigenvalues of inverse non-central Wishart matrices \citep{efron1972empirical,zidek1978deriving,van1980multivariate}.
In all of these cases, the authors did not document attempts to interpret or approximate these difficult expectations.

More recently, \citet{tsukuma2008admissibility} explores a large class of estimators for the matrix variate normal means problems that shrink $\betaLS$ along the directions of its singular vectors in different ways.
For subclass of these estimators, \citet{tsukuma2008admissibility}[Corollary 3.1] proves a domination result for associated positive part estimators.
In the orthogonal design case, $\betaEC$ can be shown to be a member of this subclass of estimators, providing an alternative route to proving \Cref{theorem:pos_part_dominance}.

\subsection{Additional related work on multiple related regressions}
Methods for simultaneously estimating the parameters of multiple related regression problems have a long history in statistics and machine learning,
with different assumptions and analysis goals leading to a diversity of inferential approaches. 
Perhaps the most famous is Zellner's landmark paper on seemingly unrelated regressions (SUR) \citep{zellner1962efficient}.
\citet{zellner1962efficient} addresses the situation where apparent independence of regression problems is confounded by covariance in the errors across $Q$ problems.
In the presence of such correlation in residuals, the parameter may be identified with greater asymptotic statistical efficiency 
by considering all $Q$ problems together \citep{zellner1962efficient,zellner1962further}.
While most work on SUR has taken a purely frequentist perspective in which $\beta$ is assumed fixed, some more recent works on SUR have considered Bayesian approaches to inference \citep{blattberg1991shrinkage,chib1995hierarchical,smith2000nonparametric,griffiths2003bayesian,ando2010hierarchical}.
However these do not address the scenario of interest here, in which we believe \textit{a priori} that there may be some covariance structure in 
the effects of covariates \textit{across} the regressions, or that some regression problems are more related than others.
The setting of the present paper further differs from SUR in that we do not consider correlation in residuals as a possible mechanism for sharing strength between datasets,
but instead explicitly assume independence in the noise.

\citet{breiman1997predicting} present a distinct, largely heuristic approach to multiple related regression problems where all $Q$ responses are observed for each dataset,
or equivalently each dataset has the same design.
The authors focus entirely on prediction
and obviate the need share information across regression problems when forming an initial estimate of $\beta$ by proposing to predict new responses in each regression with a linear combination of the predictions of linear models defined by the independently computed least squares estimate of each regression problem.
However this approach does not consider the problem of estimating parameters, which is a primary concern of the present work.

\citet{reinsel1985mean}'s paper, ``Mean Squared Error Properties of Empirical Bayes Estimators in a Multivariate Random Effects General Linear Model'', considers a mixed effects model in which a linear model for regression coefficients is specified
$
\beta^q= B a_q + \lambda_q
$
where $a:=[a_1, a_2, \dots, a_Q]$ is a $K\times Q$ known design matrix associated with the regression problems,\footnote{
Notably, though \citet{reinsel1985mean} refers to $a$ as a design matrix, it has little relation of the design matrices $X^q$ to which we frequently refer in the present work.}
$B$ is a $D\times K$ matrix of unknown parameters and $[\lambda_1, \lambda_2, \dots, \lambda_Q]$ is a $D\times Q$ matrix of error terms.
These error terms are assumed exchangeable across datasets.
In contrast to the present work, \citet{reinsel1985mean} requires the relatedness between datasets to be known a priori through the known design matrix $a.$

\citet{laird1982random} consider a random effects model for longitudinal data in which different individuals correspond to different regression problems with distinct parameters.
In their construction, covariance structure in the noise is allowed across the observations for each individual, but not across individuals.
Additionally, as in \cite{lindley1972bayes}, the authors model the covariance in effects of different covariates a priori within each regression, but not covariance across regressions.

\citet{brown1998multivariate} propose to use sparse prior for $\beta$ which encourages a shared sparsity pattern.
Conditioned on a binary $D-$vector $\gamma \in \{0, 1\}^D$, $\beta$ is supposed to follow a multivariate normal prior as
$$
\vec \beta \overset{i.i.d.}{\sim} \mathcal{N}(0, \SB\otimes H_\gamma)
$$
where $H_\gamma$ is a $D \times D$ covariance matrix which expresses that for $d$ such that $\gamma_d=0$ we expect each $\beta_{d,q}$ to be close to zero.
Notably, this is equivalent to the assumption that $\beta$ follows a matrix-variate multivariate normal distributed as $\beta \sim \mathcal{MN}(0, H_\gamma, \SB)$ \citep{dawid1981some}.
Curiously, and without stated justification, the same $\SB$ is also taken to parameterize the covariance of the residual errors,
as well as of an additional bias term.
We suspect this restriction is made for the sake of computational tractability.
Indeed, \citep{stephens2013unified} makes similar modeling assumptions for tractability in the context of statistical genetics.
In contrast to the present work, the premise of \citet{brown1998multivariate} is sharing strength through 
similar sparsity patterns and covariance in the residuals, rather than learning and leveraging patterns of similarity in effects of covariates across datasets.

Other more recent papers have considered alternative approaches for multiple regression with sparse priors \citep{bhadra2013joint,lewin2015mt,deshpande2019simultaneous}.  These methods are of course inappropriate when we do not expect sparsity \textit{a priori}.

\paragraph{Meta-Learning}
The popular ``Model Agnostic Meta-Learning'' (MAML) approach \citep{finn2017model} can be understood as a hierarchical Bayesian method that treats tasks / datasets exchangeably \citep{grant2018recasting}.
As such, MAML and its variations do not allow tasks to be related to different extents (as our approach does).
A few recent works on meta-learning are exceptions;  
for example, \citet{jerfel2019reconciling} model tasks as grouped into clusters by using a Dirichlet process prior,
and \citet{cai2020weighted} consider a weighted variant of MAML that allows,
for a given task of interest, the contribution of data from other tasks to vary.
However these works differ from the present paper in their focus on prediction with flexible black-box models, 
whereas the primary concern of the present is parameter estimation in linear models.

\paragraph{Exchangeability of effects across covariates in the single dataset context.}
In the context of regression problems consisting of only a single dataset (i.e.\ corresponding to the special case of $Q=1$)
\citet{lindley1972bayes} suggest modeling the $D$ scalar covariate effects exchangeable.
In particular, they suggest modeling scalar covariate effects as i.i.d.\ from a univariate Gaussian prior when this exchangeability assumption is appropriate.
However, because this development is restricted to analyses of a single dataset, it does not relate to the problem of sharing strength across multiple datasets,
which is the subject of the present work.

\section{\Cref{sec:our_method} supplementary proofs and discussion}\label{sec:our_method_supp}

\subsection{Proof of \Cref{prop:conj_form}}\label{sec:proof_of_conj_form}
\begin{proof}
First note that the least squares estimates $\betaLS := [(X^{1\top}X^1)\inv X^{1\top}Y^1, \dots, (X^{Q\top}X^Q)\inv X^{Q\top}Y^Q]$ are a sufficient statistic of $\data$ for $\beta,$
and so $\beta |\data,\SB \sim \beta | \betaLS, \SB.$
As such, it is sufficient to consider the likelihood of $\betaLS.$
Let $\hat {\vec {\beta}}_{\mathrm{LS}} := [Y^{1\top} X^1 (X^{1\top}X^1)\inv , \dots, Y^{Q\top}X^Q(X^{Q\top}X^Q)\inv]$
be the $DQ$-vector defined by stacking the least squares estimates for each dataset.
Since for each $q,$ we have $\betaLS^q|\beta \overset{indep.}{\sim}\mathcal{N}(\beta^q, \sigma_q^2 (X^{q\top}X^q)\inv ),$
we can write $\hat {\vec {\beta}}_{\mathrm{LS}} | \beta \sim\mathcal{N}\left[\vec\beta, \diag\left(\sigma_1^2 (X^{1\top}X^1)\inv, \dots, \sigma_Q^2 (X^{Q\top}X^Q)\inv\right)\right].$
Next, that each $\beta_d\overset{i.i.d.}{\sim} \mathcal{N}(0, \SB)$ a priori implies that we may write
$\vec \beta \sim \mathcal{N}(0, \SB\otimes I_D)$ a priori, where $\otimes$ is the Kronecker product.
Then, by Gaussian conjugacy (see e.g.\ \citet[Chapter 2.3]{Bishop2006}), 
we have that $\vec \beta | \data \sim \mathcal{N}(\vec \mu, V),$ where 
$\vec \mu = V \left[ (\SB\otimes I_D)\inv 0 + \diag\left(\sigma_1^2 (X^{1\top}X^1)\inv, \dots,\sigma_Q^2 (X^{Q\top}X^Q)\inv\right)\inv \hat{\vec{\beta}}_{\mathrm{LS}} \right]
$
for 
$V\inv = (\SB\otimes I_D)\inv + \diag\left(\sigma_1^2 (X^{1\top}X^1)\inv, \dots,\sigma_Q^2 (X^{Q\top}X^Q)\inv\right)\inv.$
Due to the block structure of the matrices above, these simplify to 
$\vec \mu = V \left[\frac{Y^{1\top}X^1}{\sigma^2_1}, \dots, \frac{Y^{Q\top}X^Q}{\sigma^2_Q}\right]$ and
$V\inv= \SB\inv \otimes I_D +  \diag(\frac{X^{1\top}X^1}{\sigma_1^2}, \dots,\frac{X^{Q\top}X^Q}{\sigma_Q^2}),$
as desired.
\end{proof}

\subsection{Efficient computation with the conjugate gradient algorithm}\label{sec:conj_gradients}
As mentioned in \Cref{sec:conjugate_gaussian_estimate}, $\vec \mu=\E[\vec\beta|\data, \SB]$ in \Cref{prop:conj_form} may be computed efficiently using the conjugate gradient algorithm (CG) for solving linear systems.
We here describe several properties of CG that make it surprisingly well-suited to this application.

We first note that \Cref{prop:conj_form} allows us to frame computation of $\vec \mu$ as the solution to the linear system
$$
A\vec \mu=b
$$
for $b=\left[Y^{1\top}X^1/\sigma^2_1, \dots, Y^{Q\top}X^Q/\sigma^2_Q\right]^\top$ and 
$A=\SB\inv \otimes I_D +  \diag\left(\sigma_1^{-2}X^{1\top}X^1, \dots,\sigma_Q^{-2} X^{Q\top}X^Q\right).$
A naive approach to computing $\vec \mu$ could then be to explicitly compute $A\inv$ and report the matrix vector product, $A\inv b.$
However, as mentioned in \Cref{sec:conjugate_gaussian_estimate}, since $A$ is a $DQ\times DQ$ matrix, 
explicitly computing its inverse would require roughly $O(D^3Q^3)$ time.
This operation becomes very cumbersome when $D$ and $Q$ are too large;
for instance if $D$ and $Q$ are in the hundreds the, $DQ$ is is the tens of thousands.

CG provides an exact solution to linear systems in at most $DQ$ iterations, with each iteration requiring only a small constant number of matrix vector multiplications by $A$.
This characteristic does not provide a complexity improvement for solving general linear systems because
for dense, unstructured $DQ\times DQ$ matrices, matrix vector multiplies require $O(D^2Q^2)$ time, and CG still demands $O(D^3Q^3)$ time overall.
However this property provides a substantial benefit in our setting.
In particular, the special form of $A$ allows computation of matrix vector multiplications in $O(D^2Q)$ rather than $O(D^2Q^2)$ time,
and storage of this matrix with $O(D^2Q)$ rather than $O(D^2Q^2)$ memory.
Specifically, if $v=[v_1, v_2, \dots, v_Q]$ is a $D\times Q$ matrix with $D$-vector columns $v_q,$
for the $DQ$-vector $\vec v=[v_1^\top, v_2^\top, \dots, v_Q^\top]^\top$ we can compute 
$A\vec v$ as $\texttt{vec}\left( v \SB\inv\right) + [\sigma_1^{-2}X^{1\top}X^1 v_1, \dots, \sigma_Q^{-2}X^{Q\top}X^Q v_Q]^\top,$
where $\texttt{vec}(\cdot)$ represents the operation of reshaping an $D\times Q$ matrix into a $DQ$-vector by stacking its columns.
When $D>Q,$ this operation is dominated by the $Q\ \ O(D^2)$ matrix-vector multiplications to compute the second term.
As such, CG provides an order $Q$ improvement in both time and memory.

Next, CG may be viewed as an iterative optimization method.
At each step it provides an iterate which is the closest to the $\vec \mu$ on a Krylov subspace of expanding dimension.
As such, the algorithm may be terminated after fewer than $DQ$ steps to provide an approximation of the solution.
Moreover, the algorithm may be provided with an initial estimate, and improves upon that estimate in each successive iteration.
In our case we may readily compute a good initialization.
For example, we can initialize with the posterior mean of the parameter for each dataset when conditioning on that dataset alone, i.e.\ $\vec \mu^{(0)} := \left[ \E[\beta^1|Y^1]^\top, \dots , \E[\beta^Q|Y^Q]^\top \right]^\top.$

Finally, the convergence properties of the conjugate gradient algorithm are well understood.
Notably the $i$th iterate of conjugate gradient $\vec \mu^{(i)}$ when initialized at $\vec \mu^{(0)}$ satisfies
$$
\|\vec \mu^{(i+1)}- \vec \mu\|_A \le 2\left(\frac{
            \kappa - 1
        }{
            \kappa + 1
    }\right)^i \|\vec \mu^{(0)} - \vec \mu\|_A,
$$
where $\kappa=\sqrt{\frac{\lambda_{\text{max}}(A)}{\lambda_\text{min}(A)}}$ is the square root of the condition number of $A$, and $\| \cdot \|_A$ is the $A-$quadratic norm \citep[Chapter 5.1]{nocedal2006numerical}, \citep{luenberger1973introduction}.
Since $A$ will often be reasonably well conditioned (note, for example, that $\lambda_\text{min}(A) \ge \lambda_{\text{min}}(\SB)$),
convergence can be rapid.
Notably, in an unpublished application the authors encountered (not described in this work) 
involving $D\approx 20,000$ covariates and $Q \approx 50$ datasets, 
the approximately million dimensional estimate $\vec \mu$ was computed 
in roughly 10 minutes on a 16 core machine.

\subsection{Expectation maximization algorithm further details}\label{sec:em_details}
In \Cref{sec:empirical_bayes_and_EM,sec:logistic_regression_empirical_bayes_main} we introduced EM algorithms for estimating $\SB$ for both linear and logistic regression models.
In this subsection we provide a derivation of the updates in \Cref{alg:EM_general} and discuss computational details of our fast implementation.

\paragraph{Derivations of EM updates for linear regression.}
Our notation inherits directly from \citep[Chapter 1.5]{mclachlan2007algorithm}, 
to which we refer the reader for context.
In our application of the EM algorithm, we take the collection of all covariate effects $\beta$ as the `missing data.'
For the expectation (E) step, we therefore require
\begin{align}\label{eqn:linear_EM_update}
\begin{split}
Q(\SB,\SB^{(i)}) :&=  \E[\log p(\beta| \SB) |\data,  \SB^{(i)}] \\
&= c +\frac{D}{2}\log |\SB\inv|-\frac{1}{2} \sum_{d=1}^D \E[\beta_d^\top \SB\inv \beta_d | \data, \SB^{(i)}] \\
&= c+\frac{D}{2}\log |\SB\inv|- \frac{1}{2} \sum_{d=1}^D \trace\left(\SB\inv\E[\beta_d\beta_d^\top | \data, \SB^{(i)}] \right)\\
&= c+\frac{D}{2}\log |\SB\inv|- \frac{1}{2} \sum_{d=1}^D \trace\left(\SB\inv(\mu_d\mu_d^\top + V_d) \right),
\end{split}
\end{align}
where $c$ is a constant that does not depend on $\SB,$
$\mu=[\mu_1 \dots, \mu_D]^\top := \E[\beta  | \data, \SB^{(i)}]$ and for each $d$ \ 
$V_d := (I_Q \otimes e_d)^\top \Var[\vec\beta | \data, \SB^{(i)}] (I_Q \otimes e_d).$
From the last line of \Cref{eqn:linear_EM_update} we may see that $\mu$ and $\{V_d\}_{d=1}^D,$ comprise the required posterior expectations.

The solution to the maximization step may then be found by considering a first order condition for maximizing over $\SB\inv$ rather than $\SB.$
Observe that
$\frac{\partial}{\partial \SB\inv} Q(\SB, \SB^{(i)})  
= \frac{D}{2}\SB  - \frac{1}{2}\sum_{d=1}^D (\mu_d\mu_d^\top + V_d).$
Setting this to zero we obtain $\SB^{(i+1)} = D\inv \sum (\mu_d\mu_d^\top +  V_d).$
This is the desired update for the M-step provided in \Cref{alg:E_step_conj}.


\paragraph{Logistic regression EM updates.}
The updates for the approximate EM algorithm described in \Cref{sec:our_method} are derived from a Gaussian approximation to the posterior 
under which the expectation of log prior is taken.
In particular we approximate the first line of \Cref{eqn:linear_EM_update} as
\begin{align}\label{eqn:logistic_EM_update}
\begin{split}
Q(\SB,\SB^{(i)}) :&=  \E[\log p(\beta| \SB) |\data,  \SB^{(i)}] \\
                  &=  \int p(\beta | \data,   \SB^{(i)}) \log p(\beta| \SB) d \beta  \\
                  &\approx  \int q^{(i)}(\beta) \log p(\beta| \SB) d \beta 
\end{split}
\end{align}
where $q^{(i)}$ denotes the Laplace approximation to $p(\beta| \data, \SB^{(i)}).$
Specifically, as we summarized in \Cref{alg:E_step_logistic}, we approximate the posterior mean by the maximum a posteriori estimate, 
$\vec \mu^* := \argmax_{\vec\beta} \log p(\vec\beta | \data, \SB^{(i)}),$ 
and the posterior variance by
$V := -[\nabla_\beta^2 \log p(\vec\beta | \data, \SB^{(i)})\big|_{\vec \beta = \vec \mu^*}]\inv.$ 
We the let $q^{(i)}$ be the Gaussian density with these moments.
This renders the integral in the last line of \Cref{eqn:logistic_EM_update} tractable,
and updates are derived in the same way as in the linear case.

Naively, the approximate EM algorithm for logistic regression could be much more demanding than its counterpart in the linear case.
In particular, at each iteration we need to solve a convex optimization problem, rather than linear system. 
However, in practice the algorithm is only little more demanding because, 
by using the maximum a posteriori estimate from the previous iteration to initialize the optimization, we can solve the optimization problem very easily.
In particular, after the first few EM iterations, only one or two additional Newton steps from this initialization are required.

To simplify our implementation, we used automatic differentiation in \texttt{Tensorflow} to compute gradients and Hessians when computing the maximum a posteriori values and Laplace approximations.

\paragraph{Computational efficiency.}
We have employed several tricks to provide a fast implementation of our EM algorithms.
The M-Steps for both linear and logistic regression involve a series of expensive matrix operations.
To accelerate this, we used \texttt{Tensorflow}\citep{abadi2016tensorflow} to optimize these steps by way of a computational graph representation generated using the \texttt{@tf.function} decorator in python.
Additionally, we initialize EM with a moment based estimate (see \Cref{sec:real_data_moments}).

\section{Frequentist properties of exchangeability among covariate effects -- supplementary proofs and discussion}\label{sec:nonasymptotic_supp}

\subsection{Discussion of \Cref{condition:orthogonal_design}}\label{sec:identity_covariance_condition}
The restriction on the design matrices in \Cref{condition:orthogonal_design} places strong limits the immediate scope of our theoretical results.
However, as with many statistical assumptions such as Gaussianity of residuals,
this condition lends considerable tractability to the problem that enables us to build insights that we can see hold in more relaxed settings in experiments (see \Cref{sec:experiments}).

Under \Cref{condition:orthogonal_design} estimation of the parameter $\beta$ may be reduced to a special matrix valued case of the normal means problem with each
$\hat \beta_{\mathrm{LS},d}^q \sim \mathcal{N}(\beta_d^q, \noise).$
Accordingly, we may recognize $\noise$ as a reflection of both the residual variances $\sigma^2_q$ and sample sizes $N_q.$
In particular, if within each dataset $q$ the covariates have sample second moment $N_q\inv \sum_{n=1}^{N_q} X^q_n X^{q\top}_n=I_D,$ 
and the residual variances and sample sizes are equal (i.e.\ $\sigma_1^2=\sigma_2^2=\dots=\sigma^2_Q$ and $N^1=N^2=\dots=N^Q$),
then $\noise = \sigma_1^2/N^1.$
Additionally, because $\betaLS$ is a sufficient statistic of $\data$ for $\beta,$
it suffices to consider $\betaLS$ alone, without needing to consider other aspects of $\data.$
For these reasons, conditions of this sort are commonly assumed by other authors in related settings (e.g.\ \citet[Chapters 1.4 and 6.2]{van2015lecture} and \citet{fan2001variable,golan2002comparison}).

That the trends predicted by our theoretical results persist beyond the limits of \Cref{condition:orthogonal_design} 
should not be surprising.
The likelihood, our estimators and their risks are all continuous in the $X^q,$ 
and so domination results may be seen to extends via continuity to settings with well-conditioned designs.
On the other hand, problems with design matrices that are more poorly conditioned are more challenging for both theory and estimation in practice (see e.g. \citet{brown1980adaptive}[Example 4.2]).

\subsection{A proposition on analytic forms of the risks of moment estimators}\label{sec:moment_ests_supp}
The following proposition characterizes analytic expressions for the moment based estimators.
These expressions provide a starting point for the theory in \Cref{sec:nonasymptotic_theory}

\begin{prop}\label{prop:moments}
    Assume each $Y^q_n | X_n^q, \beta^q \sim \mathcal{N}(X_n^{q\top} \beta^q, \sigma_q^2)$ and define
$\SBMM := D\inv \betaLS^\top \betaLS 
- D\inv\diag(\sigma_1^2 \|X^{1\dagger}\|_F^2, \dots, \sigma_Q^2 \|X^{Q\dagger}\|_F^2).$
Then
\begin{enumerate}
    \item{if each $\beta_d \overset{i.i.d.}{\sim}\mathcal{N}(0, \SB),$ $\E[\SBMM]=\SB.$}
\end{enumerate}
Furthermore, under \Cref{condition:orthogonal_design}
\begin{enumerate}
  \setcounter{enumi}{1}
\item{when $D\ge Q,\ \ \betaMM = \betaLS- \noise D\betaLS^{\dagger\top}$ and}
\item{when $D\le Q,\ \ \betaEDMM = \betaLS- \noise Q\betaLS^{\dagger\top},$}
\end{enumerate}
where $\dagger$ denotes the Moore-Penrose pseudoinverse of a matrix.
\end{prop}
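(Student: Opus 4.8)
The plan is to handle the three claims in turn. For the unbiasedness claim (1), which does not require \Cref{condition:orthogonal_design}, I would decompose each least-squares estimate as $\betaLS^q = \beta^q + \epsilon^q$ with $\epsilon^q := (X^{q\top}X^q)\inv X^{q\top}(Y^q - X^q\beta^q)$, so that conditionally on $\beta$ the errors $\epsilon^q \sim \mathcal{N}(0,\sigma_q^2(X^{q\top}X^q)\inv)$ are independent across datasets. Taking the marginal expectation over both the prior $\beta_d\overset{i.i.d.}{\sim}\mathcal{N}(0,\SB)$ and the noise, I would expand the $(q,q')$ entry $\E[\betaLS^{q\top}\betaLS^{q'}]$ into signal, cross, and noise terms: the cross terms vanish because $\epsilon$ is mean-zero and independent of $\beta$; the signal term gives $\E[\sum_d \beta_d^q\beta_d^{q'}] = D\,\SB_{q,q'}$; and the noise term is nonzero only on the diagonal, where $\E[\epsilon^{q\top}\epsilon^q] = \trace\Var(\epsilon^q) = \sigma_q^2\trace[(X^{q\top}X^q)\inv] = \sigma_q^2\|X^{q\dagger}\|_F^2$. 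Dividing by $D$ and subtracting the diagonal correction in the definition of $\SBMM$ cancels the noise contribution exactly, yielding $\E[\SBMM] = \SB$.

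For claim (2), I would specialize the posterior mean of \Cref{prop:conj_form} to \Cref{condition:orthogonal_design}. Under the condition the block-diagonal term collapses to $\precision I_{DQ}$, so $V\inv = (\SB\inv + \precision I_Q)\otimes I_D$, and since $\sigma_q^{-2}X^{q\top}Y^q = \precision\betaLS^q$ the stacked mean is $\vec\mu = \precision[(\SB\inv + \precision I_Q)\inv\otimes I_D]\vec\betaLS$. Unstacking via $\mathrm{vec}(AXB) = (B^\top\otimes A)\,\mathrm{vec}(X)$ gives the clean matrix identity $\E[\beta\mid\data,\SB] = \betaLS(I_Q + \noise\SB\inv)\inv = \betaLS\,\SB(\SB + \noise I_Q)\inv$. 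I would then evaluate the correction term: under the condition $(X^{q\top}X^q)\inv = (\noise/\sigma_q^2)I_D$, so $\sigma_q^2\|X^{q\dagger}\|_F^2 = D\noise$ and the diagonal matrix equals $D\noise I_Q$, whence $\SBMM = D\inv\betaLS^\top\betaLS - \noise I_Q$.

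To finish claim (2) I would diagonalize through the SVD $\betaLS = V\diag(\lambda\msqrt)U^\top$ of \Cref{lemma:form_of_mle_est}. This gives $\betaLS^\top\betaLS = U\diag(\lambda)U^\top$, hence $\SBMM = U\diag(D\inv\lambda - \noise\ones{Q})U^\top$ and $\SBMM + \noise I_Q = U\diag(D\inv\lambda)U^\top$. Substituting into $\betaMM = \betaLS\,\SBMM(\SBMM + \noise I_Q)\inv$ and using $\betaLS U = V\diag(\lambda\msqrt)$ collapses the product to $\betaMM = V\diag[\lambda\msqrt\odot(\ones{Q} - \noise D\lambda\inv)]U^\top = V\diag(\lambda\msqrt - \noise D\lambda\nsqrt)U^\top$, which is exactly $\betaLS - \noise D\betaLS^{\dagger\top}$ once I recognize $\betaLS^{\dagger\top} = V\diag(\lambda\nsqrt)U^\top$. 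Claim (3) then follows from the transpose symmetry noted after \Cref{lemma:form_of_risks}: $\betaEDMM^\top$ is the identical construction applied to $\betaLS^\top$ with the role of $D$ played by $Q$, so repeating the argument with $\betaLS\mapsto\betaLS^\top$ and $D\mapsto Q$ gives $\betaEDMM^\top = \betaLS^\top - \noise Q\,\betaLS^\dagger$; transposing and using $(\betaLS^\dagger)^\top = \betaLS^{\dagger\top}$ delivers $\betaEDMM = \betaLS - \noise Q\,\betaLS^{\dagger\top}$.

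I expect the main obstacles to be bookkeeping rather than conceptual. The first is the Kronecker/$\mathrm{vec}$ manipulation that turns the $DQ$-vector posterior mean into the compact $D\times Q$ matrix form; keeping the column-stacking convention of $\vec\beta$ consistent with $\mathrm{vec}(AXB)=(B^\top\otimes A)\mathrm{vec}(X)$ is the place most likely to introduce a transpose error. The second is the treatment of the Moore--Penrose pseudoinverse when $\betaLS$ is rank-deficient: the factorizations above implicitly assume the singular values $\lambda_q$ are positive, so $\SBMM + \noise I_Q$ is invertible. When $D\ge Q$ this holds almost surely, and the degenerate directions are absorbed by the convention $\lambda_q\nsqrt := 0$, under which $\lambda_q\msqrt - \noise D\lambda_q\nsqrt = 0$ matches the vanishing posterior-mean component; I would state this convention explicitly so that the identity $\betaMM = \betaLS - \noise D\betaLS^{\dagger\top}$ holds verbatim.
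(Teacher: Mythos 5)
Your proposal is correct and takes essentially the same route as the paper's proof: the signal-plus-noise decomposition $\betaLS \overset{d}{=} \beta + \epsilon$ for claim (1), specialization of the conjugate posterior mean to \Cref{condition:orthogonal_design} (your vec/Kronecker step re-derives the paper's \Cref{lemma:conj_form_under_condition}), and the same SVD substitution for claim (2). Your transpose-symmetry argument for claim (3) is precisely what the paper invokes as an ``exactly parallel'' argument, i.e.\ the same construction applied to $\betaLS^\top$ with the roles of $D$ and $Q$ swapped.
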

\begin{proof}
We begin with statement (1), that under \Cref{condition:orthogonal_design} and correct prior specification, 
$\E[\SBMM ]=\SB.$
Recall that $\SBMM := D\inv \betaLS^\top \betaLS 
- D\inv\diag(\sigma_1^2 \|X^{1\dagger}\|_F^2, \dots, \sigma_Q^2 \|X^{Q\dagger}\|_F^2).$
For any fixed $\beta,$ we have 
$\E[\SBMM |\beta ]= D\inv \E [\betaLS^\top \betaLS | \beta ]
- D\inv\diag(\sigma_1^2 \|X^{1\dagger}\|_F^2, \dots, \sigma_Q^2 \|X^{Q\dagger}\|_F^2),$
and so seek to characterize $\E[\betaLS^\top \betaLS | \beta ].$
Note that we may write $\betaLS \overset{d}{=} \beta  + \epsilon$ for a random $D\times Q$ matrix $\epsilon$ with each column $q$ distributed as 
$\epsilon^q \overset{indep.}{\sim} \mathcal{N}\left[0, \noise_q (X^{q\top} X^q)\inv \right].$
As such, for each $q$ we have $\E[\betaLS^{q\top}\betaLS^q | \beta]= \beta^{q\top}\beta^q + \E[\epsilon^{q\top}\epsilon^q].$
Next observe that $\E[\epsilon^{q\top}\epsilon^q]=\trace[\noise_q (X^{q\top} X^q)\inv]=\noise_q \|X^{q\dagger}\|_F^2,$
where $\dagger$ denotes the pseudo-inverse of a matrix and $\|\cdot\|_F$ is the Frobenius norm.
Additionally, for $q\ne q^\prime,$ we have $\E[\betaLS^{q\top}\betaLS^{q^\prime} | \beta]=\beta^{q\top}\beta^{q^\prime}.$
Putting these together into matrix form, we see $\E[ \betaLS^\top \betaLS | \beta ] = \beta^\top \beta + \diag(\sigma_1^2 \|X^{1\dagger}\|_F^2, \dots, \sigma_Q^2 \|X^{Q\dagger}\|_F^2),$
and so $\E[\SBMM |\beta ]= D\inv \beta^\top \beta.$
Under the additional assumption that for each $d,\, \beta_d\overset{i.i.d.}{\sim} \mathcal{N}(0, \SB),$
we have that $\E[D\inv \beta^\top \beta] = \SB,$ and (1) obtains from the law of iterated expectation.

We next prove statement (2), that $\betaMM := \E[\beta|\data, \SBMM]=\betaLS - \noise D \betaLS^{\dagger\top}.$
Consider the singular value decomposition (SVD), $\betaLS = V\diag(\lambda\msqrt)U^\top.$ 
Under \Cref{condition:orthogonal_design} substituting this expression into $\SBMM$ provides $\SBMM =D\inv U\diag(\lambda)U^\top - \noise I_Q.$
Therefore, \Cref{lemma:conj_form_under_condition} provides that we may write
\(
\betaMM &:= \E[\beta|\data, \SBMM] \\
        &= \betaLS - \betaLS\left[ \precision \SBMM + I_Q\right]\inv \\
        &= \betaLS - V\diag(\lambda\msqrt)U^\top \left[ \precision (D\inv U\diag(\lambda)U^\top - \noise I_Q) + I_Q\right]\inv U^\top \\
        &= \betaLS - V\diag\left[\lambda\msqrt \odot  ( \precision D\inv \lambda)\inv \right]U^\top \\
        &= \betaLS - \noise D V\diag(\lambda^{-\frac{1}{2}})U^\top \\
        &= \betaLS - \noise D \betaLS^{\dagger\top},
\)
where $\odot$ is the Hadamard (i.e.\ elementwise) product, as desired.

We lastly prove (3), that the analogous moment based estimator constructed under the assumption of a priori exchangeability among datasets is 
$\betaEDMM = \betaLS - \noise Q \betaLS^{\dagger\top}.$
We begin by making explicit the assumed model and estimate.
Specifically we assume each $\beta^q \overset{i.i.d.}{\sim} \mathcal{N}(0, \Gamma)$ a priori,
where $\Gamma$ is a $D\times D$ covariance matrix.

In this case, we obtain an unbiased moment based estimate of $\Gamma$ as 
$\hat \Gamma^{\mathrm{MM}} := Q\inv \betaLS\betaLS^\top - Q\inv \sum_{q=1}^Q \noise_q(X^{q\top}X^q)\inv.$
Following an argument exactly parallel to the one in the proof of (1),
we find that under the prior $\beta^q\overset{i.i.d.}{\sim} \mathcal{N}(0, \Gamma),$ we have 
$\E[\hat \Gamma^{\mathrm{MM}}]=\Gamma.$
Furthermore, following an argument exactly parallel to the one in the proof of (2),
we find that under \Cref{condition:orthogonal_design} the corresponding empirical Bayes estimate 
$\betaEDMM := \E[\beta | \hat \Gamma^{\mathrm{MM}}] = \betaLS - \noise Q\betaLS^{\dagger\top}.$
We omit full details to spare repetition.
\end{proof}

\begin{lemma}\label{lemma:conj_form_under_condition}
Under \Cref{condition:orthogonal_design}
$\E[\beta |\data, \SB] = \betaLS - \betaLS\left[\precision\SB + I_Q \right]\inv.$
\end{lemma}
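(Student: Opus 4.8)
The plan is to specialize the general posterior of \Cref{prop:conj_form} to the orthogonal design setting and then reshape the stacked mean vector back into matrix form. Since the lemma only concerns the posterior mean $\E[\beta\mid\data,\SB]$, I would work with the expression $\vec\mu = V[\sigma_1^{-2}Y^{1\top}X^1,\dots,\sigma_Q^{-2}Y^{Q\top}X^Q]^\top$, where $V\inv = \SB\inv\otimes I_D + \diag(\sigma_1^{-2}X^{1\top}X^1,\dots,\sigma_Q^{-2}X^{Q\top}X^Q)$. First I would substitute \Cref{condition:orthogonal_design}, $\sigma_q^{-2}X^{q\top}X^q = \precision I_D$, into the precision matrix. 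The block-diagonal term then collapses to $\precision I_{DQ} = (\precision I_Q)\otimes I_D$, so that $V\inv = (\SB\inv + \precision I_Q)\otimes I_D$ and hence $V = (\SB\inv + \precision I_Q)\inv\otimes I_D$, using that the Kronecker product commutes with inversion.

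Next I would simplify the data-dependent vector. The condition gives $X^{q\top}X^q = \noise_q\,\precision\, I_D$, so $\betaLS^q = (X^{q\top}X^q)\inv X^{q\top}Y^q = \noise\,\noise_q\inv X^{q\top}Y^q$, and therefore each block satisfies $\sigma_q^{-2}X^{q\top}Y^q = \precision\,\betaLS^q$. Stacking over $q$ shows the data vector equals $\precision$ times the stacked least-squares vector $[\betaLS^{1\top},\dots,\betaLS^{Q\top}]^\top$, so that $\vec\mu = \precision\,[(\SB\inv + \precision I_Q)\inv\otimes I_D]\,[\betaLS^{1\top},\dots,\betaLS^{Q\top}]^\top$. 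Now I would apply the vectorization identity $\mathrm{vec}(AXB) = (B^\top\otimes A)\,\mathrm{vec}(X)$ with the column-stacking convention that defines our arrow notation, taking $A = I_D$ and $X = \betaLS$ (a $D\times Q$ matrix). Because $M := \precision(\SB\inv + \precision I_Q)\inv$ is a function of the symmetric matrix $\SB$ and hence symmetric, un-stacking yields the clean matrix form $\E[\beta\mid\data,\SB] = \betaLS\,M$.

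The final and most delicate step is the matrix algebra identifying $M = I_Q - (\precision\SB + I_Q)\inv$. Writing $A := \precision\SB + I_Q$, I would use the identity $I_Q - A\inv = A\inv(A - I_Q) = A\inv\,\precision\SB$ together with the factorization $A = \precision\SB + I_Q = \SB(\precision I_Q + \SB\inv)$, which (since $\SB$ commutes with any rational function of $\SB$) gives $A\inv\,\precision\SB = \precision(\SB\inv + \precision I_Q)\inv = M$. Substituting back produces $\E[\beta\mid\data,\SB] = \betaLS\big(I_Q - (\precision\SB + I_Q)\inv\big) = \betaLS - \betaLS[\precision\SB + I_Q]\inv$, as claimed. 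The main obstacle is not any single hard computation but the bookkeeping: keeping the Kronecker/vectorization convention consistent between $\vec\beta$ and the matrix $\beta$, and verifying the symmetry and commutation facts that let the two equivalent forms of $M$ be matched up.
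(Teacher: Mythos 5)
Your proposal is correct and takes essentially the same route as the paper's proof: both specialize \Cref{prop:conj_form} under \Cref{condition:orthogonal_design}, rewrite the data vector as $\precision$ times the stacked least-squares estimates, use the Kronecker structure $(\SB\inv + \precision I_Q)\otimes I_D$ to return to matrix form, and conclude with the identity $\precision(\SB\inv + \precision I_Q)\inv = I_Q - (\precision\SB + I_Q)\inv$. The only differences are cosmetic: the paper un-stacks per covariate $d$ instead of invoking the $\mathrm{vec}(AXB)$ identity, and it asserts without derivation the final algebraic identity that you verify explicitly.
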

\begin{proof}
    By \Cref{prop:conj_form}, we have 
\(
    \E[\vec \beta |\data, \SB] = 
V \left[\frac{Y^{1\top}X^1}{\sigma^2_1}, \dots, \frac{Y^{Q\top}X^Q}{\sigma^2_Q}\right]
\ \text{  where  }\ 
V\inv= \SB\inv \otimes I_D +  \diag(\frac{X^{1\top}X^1}{\sigma_1^2}, \dots,\frac{X^{Q\top}X^Q}{\sigma_Q^2}).
\)
Under \Cref{condition:orthogonal_design}, we can simplify this as 
\(
\E[\vec \beta |\data, \SB] &= 
\left[\SB\inv \otimes I_D +  \diag(\frac{X^{1\top}X^1}{\sigma_1^2}, \dots,\frac{X^{Q\top}X^Q}{\sigma_Q^2})\right]\inv
 \left[\frac{Y^{1\top}X^1}{\sigma^2_1}, \dots, \frac{Y^{Q\top}X^Q}{\sigma^2_Q}\right]\\
 &= \left[\SB\inv \otimes I_D +  \precision I_{DQ}\right]\inv \precision \left[\betaLS^1, \dots, \betaLS^Q\right]\\
 &= \left[\noise \SB\inv \otimes I_D +  I_{DQ}\right]\inv \left[\betaLS^1, \dots, \betaLS^Q\right].
\)
As a result, for each $d,\, \E[\beta_d | \data, \SB] =\left[ \noise \SB\inv + I_Q\right] \inv \beta_{\mathrm{LS},d}$ and so,
in matrix form, we may write
\(
\E[\beta |\data, \SB] &=  \betaLS \left[ \noise \SB\inv + I_Q\right]\inv  \\
&= \betaLS - \betaLS\left[I_Q + \precision\SB\right]\inv.
\)

\end{proof}

\subsection{Proof of \Cref{lemma:form_of_risks}}
\begin{proof}
We prove the lemma in two parts;
first for the case that $D>Q+1,$
and then for the case that $Q\le D\le Q+1.$

Our proof for the case that $D>Q+1$ relies on an expression for the squared error risk for estimators of the form
$\hat \beta = \betaLS -\noise c\betaLS^{\dagger \top}$
for real $c.$
In particular, \Cref{lemma:SURE_with_pseudo_inv} provides that when $D>Q+1$ and under \Cref{condition:orthogonal_design},
$$\E[\|\beta - (\betaLS - c\betaLS^{\dagger\top})\|_F^2 \mid \beta ] = DQ + \sigma^4 c(c +2 +2Q -2D)\E[\|\betaLS^\dagger\|_F^2 \mid \beta ].$$
Notably, since under \Cref{condition:orthogonal_design}, by \Cref{prop:moments} we have that $\betaMM=\betaLS - \noise D\betaLS^{\dagger\top}$
we obtain
$\E[\|\beta - \betaMM\|_F^2 \mid \beta ] = \noise DQ - \sigma^4 D(D - 2Q -2)\E[\|\betaLS^\dagger\|_F^2 \mid \beta ],$
as desired.

We next consider $Q \le D \le Q+1.$
In this case, both $\risk(\beta, \betaMM)$ and $\noise DQ - \sigma^4 D(D - 2Q -2)\E[\|\betaLS^\dagger\|_F^2 \mid \beta ]$ are positive infinity.
In particular, observe that $\|\betaLS^\dagger\|_F^2 = \trace[(\betaLS^\top \betaLS)\inv]$ is the trace of the inverse of a non-central Wishart matrix, which is known to have infinite expectation for $Q\le D \le Q+1$ (see e.g.\ \citet{hillier2019properties}).
Likewise, \Cref{lemma:infinite_risk} reveals that $\risk(\beta, \betaMM)=\infty$ as well.

The second assertion of \Cref{lemma:form_of_risks},
that when $D\le Q$ and under \Cref{condition:orthogonal_design}
$\E[\|\beta - \betaEDMM\|_F^2 \mid \beta] = \noise DQ - \sigma^4 Q(Q - 2D -2)\E[\|\betaLS^\dagger\|_F^2 \mid \beta],$
obtains similarly.
Specifically, under these conditions an identical argument to that provided in \Cref{lemma:SURE_with_pseudo_inv} provides that 
$$\E[\|\beta - (\betaLS - \noise c \betaLS^{\dagger\top})\|_F^2\mid \beta] = DQ + \sigma^4 c(c +2 +2D -2Q)\E[\|\betaLS^\dagger\|_F^2\mid \beta]$$
when $D<Q-1.$
The desired expression is then obtained by taking $c=Q$ to reflect $\betaEDMM = \betaLS - \noise Q\betaLS^{\dagger\top},$
again as specified by \Cref{prop:moments}.
\end{proof}

\begin{lemma}\label{lemma:SURE_with_pseudo_inv}
Let $D>Q+1$ and let $\hat \beta = \betaLS - \noise c \betaLS^{\dagger\top}.$
Then under \Cref{condition:orthogonal_design}
$\E[\|\beta - \hat \beta\|_F^2\mid \beta] = \noise DQ + \sigma^4 c(c + 2 + 2Q - 2D)\E[ \|\betaLS^\dagger \|_F^2\mid \beta].$
\end{lemma}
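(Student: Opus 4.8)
The plan is to treat this as a matrix-valued normal-means / Stein's unbiased risk estimate (SURE) calculation. First I would invoke \Cref{condition:orthogonal_design} to reduce to the pure normal-means setting (as in \Cref{sec:identity_covariance_condition}), writing $\betaLS = \beta + \epsilon$ where $\epsilon$ is a $D\times Q$ matrix with i.i.d.\ $\mathcal{N}(0,\noise)$ entries. Since $\beta - \hat\beta = -\epsilon + \noise c\,\betaLS^{\dagger\top}$, the squared error decomposes as
$$\|\beta - \hat\beta\|_F^2 = \|\epsilon\|_F^2 - 2\noise c\,\langle \epsilon, \betaLS^{\dagger\top}\rangle + \sigma^4 c^2 \|\betaLS^\dagger\|_F^2,$$
where $\langle A,B\rangle := \trace(A^\top B)$ and I have used $\|\betaLS^{\dagger\top}\|_F = \|\betaLS^\dagger\|_F$. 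The first expectation is immediate, $\E[\|\epsilon\|_F^2\mid\beta] = \noise DQ$, and the third term is already in the desired form.

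The crux is the cross term $\E[\langle\epsilon,\betaLS^{\dagger\top}\rangle\mid\beta]$, which I would evaluate by Gaussian integration by parts (Stein's lemma): for i.i.d.\ $\mathcal{N}(0,\noise)$ noise and a weakly differentiable statistic $h(\betaLS)$, $\E[\epsilon_{ij}\,h(\betaLS)] = \noise\,\E[\partial h/\partial(\betaLS)_{ij}]$. Applying this entrywise gives $\E[\langle\epsilon,\betaLS^{\dagger\top}\rangle\mid\beta] = \noise\,\E\big[\sum_{i,j}\partial(\betaLS^\dagger)_{ji}/\partial(\betaLS)_{ij}\,\big|\,\beta\big]$, i.e.\ $\noise$ times the expected divergence of the pseudoinverse map.

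I would then compute this divergence explicitly. Because $D>Q$, the matrix $\betaLS$ generically has full column rank $Q$, so $\betaLS^\dagger = S^{-1}\betaLS^\top$ with $S := \betaLS^\top\betaLS$; note also that $\|\betaLS^\dagger\|_F^2 = \trace(S^{-1})$. Differentiating via the product rule together with the identity $\partial S^{-1} = -S^{-1}(\partial S)S^{-1}$ and $\partial S_{ab}/\partial(\betaLS)_{ij} = \delta_{aj}(\betaLS)_{ib} + \delta_{bj}(\betaLS)_{ia}$, and then contracting indices, the resulting contributions collapse to traces of the column-space projector $\betaLS\betaLS^\dagger$ (whose trace is $Q$) and of $S^{-1}$. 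The net result is $\sum_{i,j}\partial(\betaLS^\dagger)_{ji}/\partial(\betaLS)_{ij} = (D-Q-1)\|\betaLS^\dagger\|_F^2$, so the cross term equals $-2\noise c\cdot\noise(D-Q-1)\E[\|\betaLS^\dagger\|_F^2\mid\beta] = -2\sigma^4 c(D-Q-1)\E[\|\betaLS^\dagger\|_F^2\mid\beta]$. Collecting the three pieces gives coefficient $\sigma^4 c\,[c - 2(D-Q-1)] = \sigma^4 c(c+2+2Q-2D)$ on $\E[\|\betaLS^\dagger\|_F^2\mid\beta]$, as claimed.

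I expect the main obstacle to be the index bookkeeping in the divergence of the pseudoinverse, together with justifying the hypotheses of Stein's lemma. The restriction $D>Q+1$ is exactly what is needed for $\E[\|\betaLS^\dagger\|_F^2\mid\beta] = \E[\trace(S^{-1})\mid\beta]$ (the trace of an inverse non-central Wishart) to be finite, and hence for the integration-by-parts step and the overall risk to be well-defined; I would confirm that $\betaLS\mapsto\betaLS^\dagger$ is almost-everywhere differentiable (the rank-deficient set having Lebesgue measure zero) and that the requisite integrability holds precisely in this regime.
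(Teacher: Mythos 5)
Your proposal is correct and takes essentially the same approach as the paper: the paper simply packages your expansion-plus-Stein's-lemma argument by citing Stein's unbiased risk estimate (\Cref{lemma:SURE_lehmann}), and its central computation is the same divergence identity $\sum_{d,q} \partial (\betaLS^\dagger)_{q,d}/\partial (\betaLS)_{d,q} = (D-Q-1)\|\betaLS^\dagger\|_F^2$ that you derive. Your justification of the integrability hypothesis also matches the paper's (\Cref{lemma:SURE_regularity_condition}), which bounds the relevant partial derivatives by a multiple of $\trace[(\betaLS^\top\betaLS)\inv]$ and invokes finiteness of the expected trace of an inverse non-central Wishart matrix precisely when $D>Q+1$.
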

\begin{proof}
The results follows by considering Stein's unbiased risk estimate (SURE) \citep[Chapter 4, Corollary 7.2]{lehmann2006theory} (restated as \Cref{lemma:SURE_lehmann}) and making several algebraic simplifications.
In order to apply the lemma, we note that under \Cref{condition:orthogonal_design} 
$\hat {\vec \beta}_{\mathrm{LS}} 
\sim \mathcal{N}(\vec \beta, \noise I_{DQ})$ and $\hat { \vec\beta} = \hat {\vec \beta}_{\mathrm{LS}} - g(\hat {\vec \beta}_{\mathrm{LS}})$ for 
$g(\hat {\vec \beta}_{\mathrm{LS}}) = -\noise c\cdot \texttt{vec}(\betaLS^{\dagger\top})$,
where $\texttt{vec}(\cdot)$ represents the operation of reshaping an $D\times Q$ matrix into a $DQ$-vector by stacking its columns.

We first simplify the sum of partial derivatives in \Cref{eqn:SURE} of \Cref{lemma:SURE_lehmann}. 
Observe that 
$$\sum_{n=1}^{DQ}
\frac{\partial g_{n}(\hat {\vec \beta}_{\mathrm{LS}})}{\partial \hat {\vec \beta}_{\mathrm{LS},n}} 
=-\noise c \sum_{d=1}^D \sum_{q=1}^Q 
\frac{\partial\hat \beta_{\mathrm{LS},d}^{\dagger, q} }{\partial \hat \beta_{\mathrm{LS},d}^{q}},
$$
where $\hat \beta_{\mathrm{LS},d}^{\dagger, q}$ denotes the entry in the $q$th row and $d$th column of $\betaLS^\dagger.$

Next, letting $e_q$ be the $q$th basis vector in $\R^Q,$
for each $q$ and $d$ we may write 
\(
\frac{\partial\hat \beta_{\mathrm{LS},d}^{\dagger, q} }{\partial \hat \beta_{\mathrm{LS},d}^{q}} 
&=\frac{\partial}{\partial \hat \beta_{\mathrm{LS},d}^{q}} \hat \beta_{\mathrm{LS},d}(\betaLS^\top \betaLS)\inv e_q \\
&=e_q^\top (\betaLS^\top \betaLS)\inv e_q +
\hat \beta_{\mathrm{LS},d}\frac{\partial}{\partial \hat \beta_{\mathrm{LS},d}^{q}} (\betaLS^\top \betaLS)\inv e_q \\
&= e_q^\top (\betaLS^\top \betaLS)\inv e_q  -
\hat \beta_{\mathrm{LS},d}^\top(\betaLS^\top\betaLS)\inv
\left[\frac{\partial}{\partial \hat \beta_{\mathrm{LS},d}^{q}} (\betaLS^\top\betaLS)\right]
 (\betaLS^\top\betaLS)\inv e_q \\
 &=\|\betaLS^{\dagger,q}\|^2-
 \hat \beta_{\mathrm{LS},d}^{\dagger\top}\left[e_q \hat\beta_{\mathrm{LS},d}^\top  + 
 \hat\beta_{\mathrm{LS},d} e_q^\top \right] (\betaLS^\top\betaLS)\inv e_q \\
 &=\|\betaLS^{\dagger,q}\|^2- 
\left[ \hat \beta_{\mathrm{LS},d}^{\dagger\top} e_q \hat \beta_{\mathrm{LS},d}^\top(\betaLS^\top\betaLS)\inv e_q +
\hat \beta_{\mathrm{LS},d}^{\dagger\top} \hat \beta_{\mathrm{LS},d} e_q^\top(\betaLS^\top\betaLS)\inv e_q \right]\\
 &=\|\betaLS^{\dagger,q}\|^2- 
(\hat \beta_{\mathrm{LS},d}^{\dagger,q})^2 -
\hat \beta_{\mathrm{LS},d}^{\dagger\top} \hat \beta_{\mathrm{LS},d}\|\betaLS^{\dagger,q}\|^2,
\)
where in the fourth and last lines we have used that $e_q^\top (\betaLS^\top \betaLS)\inv e_q = \|\betaLS^{\dagger,q}\|^2,$ as can be seen by observing that $(\betaLS^\top\betaLS)\inv = \betaLS^\dagger\betaLS^{\dagger\top} .$

Adding these terms together we find
\(
\sum_{d=1}^D\sum_{q=1}^Q \frac{\partial\hat \beta_{\mathrm{LS},d}^{\dagger, q} }{\partial \hat \beta_{\mathrm{LS},d}^{q}} 
&= \sum_{d=1}^D\sum_{q=1}^Q \Big\{
\|\betaLS^{\dagger,q}\|^2- 
(\hat \beta_{\mathrm{LS},d}^{\dagger,q})^2 -
\hat \beta_{\mathrm{LS},d}^{\dagger\top} \hat \beta_{\mathrm{LS},d}\|\betaLS^{\dagger,q}\|^2
\Big\}\\
 &= D\|\betaLS^{\dagger}\|_F^2- 
\|\betaLS^{\dagger}\|_F^2- 
 \|\betaLS^{\dagger}\|_F^2 \sum_{d=1}^D\hat \beta_{\mathrm{LS},d}^{\dagger\top} \hat \beta_{\mathrm{LS},d} \\
 &= D\|\betaLS^{\dagger}\|_F^2- 
\|\betaLS^{\dagger}\|_F^2- 
\|\betaLS^{\dagger}\|_F^2 \trace(\betaLS^\dagger \betaLS)\\
&= (D-Q-1)\|\betaLS^{\dagger}\|_F^2.
 \)

We next note that the regularity condition required by \Cref{lemma:SURE_lehmann} is satisfied, as demonstrated in \Cref{lemma:SURE_regularity_condition},
and so we may write
\(
\E[\|\beta - \hat \beta\|_F^2\mid\beta] &= \noise DQ + \E[\|g(\hat {\vec \beta}_{\mathrm{LS}}) \|^2\mid\beta] 
- 2 \noise \sum_{d=1}^D\sum_{q=1}^Q \E[\frac{\partial\hat \beta_{\mathrm{LS},d}^{\dagger, q} }{\partial \hat \beta_{\mathrm{LS},d}^{q}}\mid \beta]  \\
&= \noise DQ +  \sigma^4 c^2 \E[\|\betaLS^\dagger\|^2\mid\beta] - 2 \sigma^4 c (D-Q-1)\E[\|\betaLS^{\dagger}\|_F^2\mid \beta] \\
&= \noise DQ + \sigma^4 c (c +2 +2Q -2D)\E[\|\betaLS^\dagger\|^2\mid\beta].
\)
as desired.
\end{proof}

\begin{lemma}[Stein's Unbiased Risk Estimate -- Lehmann and Casella Corollary 7.2]\label{lemma:SURE_lehmann}
Let $X\sim \mathcal{N}(\theta, \noise I_N),$ 
and let the estimator $\hat \theta$ be of the form
$\hat \theta = X - g(X)$
where $g(X) = [g_1(X), g_2(X), \dots, g_N(X)]$ is differentiable.
If $\E[|\frac{\partial }{\partial X_n}g_n(X)|] < \infty$ for each $n=1,\dots, N,$
then 
\[\label{eqn:SURE}
\risk(\theta, \hat \theta) = \noise N + \E[\|g(X)\|^2] - 2 \noise \sum_{n=1}^N \frac{\partial }{\partial X_n}g_n(X).
\]
\end{lemma}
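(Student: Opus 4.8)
The plan is to expand the squared-error risk into three pieces and then dispatch the cross term with Gaussian integration by parts (Stein's identity). Writing $\hat\theta = X - g(X)$ and setting $u := X - \theta$, I would first expand
\begin{align*}
\risk(\theta, \hat\theta) = \E[\|X - g(X) - \theta\|^2] = \E[\|X-\theta\|^2] - 2\,\E[(X-\theta)^\top g(X)] + \E[\|g(X)\|^2].
\end{align*}
The first term equals $\noise N$, since $X \sim \mathcal{N}(\theta, \noise I_N)$ contributes variance $\noise$ in each of its $N$ coordinates, and the third term already matches the claimed identity. Everything therefore reduces to showing that the cross term satisfies $\E[(X-\theta)^\top g(X)] = \noise \sum_{n=1}^N \E[\partial g_n(X)/\partial X_n]$, after which substituting back reproduces \Cref{eqn:SURE} (with the expectation on the derivative term that is left implicit in the statement).

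For the cross term I would work coordinate by coordinate. Because the covariance $\noise I_N$ is diagonal, the coordinates $X_1,\dots,X_N$ are independent and $\E[(X-\theta)^\top g(X)] = \sum_{n} \E[(X_n - \theta_n) g_n(X)]$. Fixing the remaining coordinates, for each $n$ I would apply the univariate Stein identity by integrating in the $X_n$ direction against the Gaussian density $\phi$, using the defining relation $(X_n - \theta_n)\phi(X_n) = -\noise\,\phi'(X_n)$. Integration by parts in $x_n$ transfers the derivative onto $g_n$ and yields $\E[(X_n - \theta_n) g_n(X)] = \noise\,\E[\partial g_n(X)/\partial X_n]$, after which Fubini lets me reinstate the outer expectation over the other coordinates. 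Summing over $n$ gives the desired expression for the cross term.

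The main obstacle is making this integration by parts rigorous, i.e.\ ensuring the boundary terms $[\,g_n(x)\phi(x_n)\,]_{x_n\to\pm\infty}$ vanish and that Fubini applies. This is exactly what the hypothesis $\E[|\partial g_n(X)/\partial X_n|] < \infty$ secures: together with differentiability of $g_n$, it controls the growth of $g_n$ against the Gaussian tails so the boundary contribution drops out. Concretely, I would write $g_n(x) = g_n(\theta_n,\cdots) + \int_{\theta_n}^{x_n} \partial_t g_n\,dt$ along the $n$th coordinate and use this absolute-continuity representation with Fubini to justify the interchange and the vanishing boundary term. Since the statement is quoted as Lehmann--Casella Corollary 7.2, I would alternatively just cite that reference for the precise regularity argument, treating the expansion above as the working derivation.
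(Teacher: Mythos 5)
Your derivation is correct, but note that the paper never proves this lemma at all: it is quoted verbatim as Corollary 7.2 of Lehmann and Casella and used as a black box in the proof of \Cref{lemma:SURE_with_pseudo_inv}. Your argument is the standard textbook derivation of SURE, and it is the same one Lehmann and Casella give: expand $\|\hat\theta-\theta\|^2$ into the variance term $\noise N$, the term $\E[\|g(X)\|^2]$, and the cross term, then reduce the cross term coordinatewise to the univariate Stein identity $\E[(X_n-\theta_n)g_n(X)] = \noise\,\E[\partial g_n(X)/\partial X_n]$ via the relation $(x_n-\theta_n)\phi(x_n) = -\noise\,\phi'(x_n)$ and integration by parts. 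You also handle the one genuinely delicate point correctly: differentiability plus $\E\bigl[\lvert\partial g_n(X)/\partial X_n\rvert\bigr]<\infty$ does not by itself force the boundary terms $g_n(x)\phi(x_n)$ to vanish, and the rigorous route is exactly the absolute-continuity representation $g_n(x) = g_n(\theta_n,\dots) + \int_{\theta_n}^{x_n}\partial_t g_n\,dt$ combined with Fubini, which is how Stein's lemma is proved in the cited reference. Finally, your observation that the displayed identity \Cref{eqn:SURE} in the paper is missing an expectation around the derivative term (it should read $2\noise\sum_{n=1}^N \E\bigl[\partial g_n(X)/\partial X_n\bigr]$, as is clear from how the lemma is invoked in \Cref{lemma:SURE_with_pseudo_inv}) is a correct catch of a typo in the statement, not a flaw in your proof.
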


\begin{lemma}\label{lemma:SURE_regularity_condition}
Let $D>Q+1.$ 
Then under \Cref{condition:orthogonal_design}
$\E\left[\left|\frac{\partial\hat \beta_{\mathrm{LS},d}^{\dagger, q} }{\partial \hat \beta_{\mathrm{LS},d}^{q}}\right| \mid \beta \right] \le \infty
$
for each $d$ and $q.$
\end{lemma}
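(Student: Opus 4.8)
The plan is to reuse the explicit computation of the partial derivative already carried out in the proof of \Cref{lemma:SURE_with_pseudo_inv}. Under \Cref{condition:orthogonal_design} with $D>Q$, the least-squares matrix $\betaLS=\beta+\epsilon$ (with $\epsilon$ having i.i.d.\ Gaussian entries) has full column rank almost surely, so $\betaLS^\dagger=(\betaLS^\top\betaLS)\inv\betaLS^\top$ is differentiable and the formula
\[
\frac{\partial\hat \beta_{\mathrm{LS},d}^{\dagger, q}}{\partial \hat \beta_{\mathrm{LS},d}^{q}}
=\|\betaLS^{\dagger,q}\|^2-(\hat \beta_{\mathrm{LS},d}^{\dagger, q})^2-\hat \beta_{\mathrm{LS},d}^{\dagger\top}\hat \beta_{\mathrm{LS},d}\,\|\betaLS^{\dagger,q}\|^2
\]
holds almost everywhere. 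I would then bound the absolute value of each of the three summands by a constant multiple of $\|\betaLS^\dagger\|_F^2=\trace[(\betaLS^\top\betaLS)\inv]$ and take conditional expectations.

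For the first two summands this is immediate: since $(\betaLS^\top\betaLS)\inv=\betaLS^\dagger\betaLS^{\dagger\top}$ is positive semidefinite, its $q$th diagonal entry $\|\betaLS^{\dagger,q}\|^2$ is at most its trace $\|\betaLS^\dagger\|_F^2$, and the single entry satisfies $(\hat \beta_{\mathrm{LS},d}^{\dagger, q})^2\le\|\betaLS^{\dagger,q}\|^2$. The key observation for the third summand is that $\hat \beta_{\mathrm{LS},d}^{\dagger\top}\hat \beta_{\mathrm{LS},d}=\hat \beta_{\mathrm{LS},d}^\top(\betaLS^\top\betaLS)\inv\hat \beta_{\mathrm{LS},d}$ is exactly the $d$th diagonal entry of the orthogonal projection (``hat'') matrix $\betaLS(\betaLS^\top\betaLS)\inv\betaLS^\top$; being a leverage score of an idempotent symmetric matrix, it lies in $[0,1]$ deterministically. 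Hence the third summand is also bounded by $\|\betaLS^{\dagger,q}\|^2\le\|\betaLS^\dagger\|_F^2$, and altogether $\left|\partial\hat \beta_{\mathrm{LS},d}^{\dagger, q}/\partial\hat \beta_{\mathrm{LS},d}^{q}\right|\le 3\|\betaLS^\dagger\|_F^2$ almost surely.

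Taking conditional expectations reduces the claim to showing $\E[\|\betaLS^\dagger\|_F^2\mid\beta]=\E[\trace((\betaLS^\top\betaLS)\inv)\mid\beta]<\infty$. This is the one genuinely nontrivial input and the main obstacle: it is the first moment of the trace of an inverse non-central Wishart matrix, which is finite precisely when $D>Q+1$ and infinite for $Q\le D\le Q+1$ (the same fact invoked in \Cref{lemma:form_of_risks}; see \citet{hillier2019properties}). Since the hypothesis is exactly $D>Q+1$, invoking this moment bound completes the argument. I do not expect the term-by-term bounding to present any difficulty; the only subtlety is recognizing the leverage-score structure of the third summand, which is what keeps all three terms controlled by the single quantity $\|\betaLS^\dagger\|_F^2$ whose first moment is finite under the stated hypothesis, and which avoids needing any higher moments of the inverse Wishart that would demand a stronger assumption on $D$.
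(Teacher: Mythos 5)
Your proposal is correct and takes essentially the same route as the paper's proof: both start from the explicit derivative formula derived in \Cref{lemma:SURE_with_pseudo_inv}, bound all three summands by a constant multiple of $\|\betaLS^{\dagger}\|_F^2 = \trace[(\betaLS^\top\betaLS)\inv]$, and conclude by invoking finiteness of the expected trace of an inverse non-central Wishart matrix when $D>Q+1$ \citep{hillier2019properties}. The only difference is cosmetic: you control the third term via the leverage-score bound $\hat \beta_{\mathrm{LS},d}^{\dagger\top} \hat \beta_{\mathrm{LS},d}\in[0,1]$, giving the constant $3$, whereas the paper bounds it by summing over all rows so that $\trace[(\betaLS^\top\betaLS)\inv\betaLS^\top\betaLS]=Q$, giving the constant $2+Q$; both suffice equally.
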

\begin{proof}
From our derivation of 
$\frac{\partial\hat \beta_{\mathrm{LS},d}^{\dagger, q} }{\partial \hat \beta_{\mathrm{LS},d}^{q}}$ 
in \Cref{lemma:SURE_with_pseudo_inv} 
we have that 
\(
\frac{\partial\hat \beta_{\mathrm{LS},d}^{\dagger, q} }{\partial \hat \beta_{\mathrm{LS},d}^{q}}
 &=\|\betaLS^{\dagger,q}\|^2- 
(\hat \beta_{\mathrm{LS},d}^{\dagger,q})^2 -
\hat \beta_{\mathrm{LS},d}^{\dagger\top} \hat \beta_{\mathrm{LS},d}\|\betaLS^{\dagger,q}\|^2\\
 &=\|\betaLS^{\dagger,q}\|^2- 
(\hat \beta_{\mathrm{LS},d}^{\dagger,q})^2 -
\|\betaLS^{\dagger,q}\|^2\trace[(\betaLS^\top\betaLS)\inv\beta_{\mathrm{LS},d}\beta^\top_{\mathrm{LS},d}].
\)
As such we have that 
\(
\left|\frac{\partial\hat \beta_{\mathrm{LS},d}^{\dagger, q} }{\partial \hat \beta_{\mathrm{LS},d}^{q}}\right| 
&\le 
\|\betaLS^{\dagger,q}\|^2+ 
|(\beta_{\mathrm{LS},d}^{\dagger,q})^2|+
\|\betaLS^{\dagger,q}\|^2|\trace[(\betaLS^\top\betaLS)\inv\beta_{\mathrm{LS},d}\beta^\top_{\mathrm{LS},d}]| \\
&\le \|\betaLS^{\dagger,q}\|^2+ 
\left|\sum_{d^\prime=1}^D (\beta_{\mathrm{LS},d^\prime}^{\dagger,q})^2\right|+
\|\betaLS^{\dagger,q}\|^2\left|\trace[(\betaLS^\top\betaLS)\inv
\sum_{d^\prime=1}^D \beta_{\mathrm{LS},d^\prime}\beta_{\mathrm{LS},d^\prime}^\top]\right| \\
&= \|\betaLS^{\dagger,q}\|^2+ 
    \|\betaLS^{\dagger,q}\|^2 +
    \|\betaLS^{\dagger,q}\|^2 \trace[(\betaLS^\top\betaLS)\inv \betaLS^\top\betaLS] \\
    &\le (2+Q)\|\betaLS^{\dagger,q}\|^2\\
    &\le (2+Q)\|\betaLS^{\dagger}\|_F^2\\
    &= (2+Q)\trace[(\betaLS^\top \betaLS)\inv].
\)
We next recognize that under \Cref{condition:orthogonal_design}, $(\betaLS^\top \betaLS)\inv$ is the inverse of a non-central Wishart matrix with non-centrality parameter $\beta.$
Therefore, from \citet[Theorem 1]{hillier2019properties}, we have that for $D>Q+1$,  $\E\left[\trace\left((\betaLS^\top \betaLS)\inv\right)\mid \beta\right]< \infty.$
Accordingly, we may conclude that 
$\E\left[\left|\frac{\partial\hat \beta_{\mathrm{LS},d}^{\dagger, q} }{\partial \hat \beta_{\mathrm{LS},d}^{q}}\right| \mid \beta \right] \le \infty$
as desired.
\end{proof}

\begin{lemma}\label{lemma:infinite_risk}
Assume $Q\le D\le Q+1.$
For any $\beta, \, \risk(\beta, \betaMM) = \infty.$
\end{lemma}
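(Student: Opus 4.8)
The plan is to transfer the known divergence of $\E[\|\betaLS^\dagger\|_F^2 \mid \beta]$ to the risk. As already noted in the proof of \Cref{lemma:form_of_risks}, under \Cref{condition:orthogonal_design} the quantity $\|\betaLS^\dagger\|_F^2 = \trace[(\betaLS^\top\betaLS)\inv]$ is the trace of an inverse non-central Wishart matrix, and $\E[\|\betaLS^\dagger\|_F^2\mid\beta] = \infty$ precisely when $Q\le D\le Q+1$. So it suffices to lower bound $\risk(\beta,\betaMM) = \E[\|\betaMM-\beta\|_F^2\mid\beta]$ by a positive multiple of this divergent expectation minus a finite correction.

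First I would invoke \Cref{prop:moments}(2), valid here since $D\ge Q$, to write $\betaMM = \betaLS - \noise D\,\betaLS^{\dagger\top}$. Writing $\epsilon := \betaLS - \beta$, under \Cref{condition:orthogonal_design} the entries of $\epsilon$ are independent $\mathcal{N}(0,\noise)$, so $\E[\|\epsilon\|_F^2\mid\beta] = \noise DQ < \infty$. This gives the decomposition $\betaMM - \beta = (\betaMM - \betaLS) + (\betaLS - \beta) = -\noise D\,\betaLS^{\dagger\top} + \epsilon$.

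Rather than expanding the square, which would produce a cross term $\langle \epsilon, \betaLS^{\dagger\top}\rangle$, I would apply the reverse triangle inequality to obtain $\|\betaMM - \beta\|_F \ge \noise D\,\|\betaLS^\dagger\|_F - \|\epsilon\|_F$, using $\|\betaLS^{\dagger\top}\|_F = \|\betaLS^\dagger\|_F$. Combining this with the elementary inequality $(a-b)_+^2 \ge \tfrac12 a^2 - b^2$ for $a,b\ge 0$, applied with $a = \noise D\,\|\betaLS^\dagger\|_F$ and $b = \|\epsilon\|_F$, yields the pointwise bound $\|\betaMM - \beta\|_F^2 \ge \tfrac12 \sigma^4 D^2\,\|\betaLS^\dagger\|_F^2 - \|\epsilon\|_F^2$. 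Taking conditional expectations gives $\risk(\beta,\betaMM) \ge \tfrac12 \sigma^4 D^2\,\E[\|\betaLS^\dagger\|_F^2\mid\beta] - \noise DQ$, and since the first term is $+\infty$ while the second is finite, the risk is infinite.

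The main obstacle, and the reason for the reverse-triangle detour instead of a direct SURE-style calculation, is that the divergent factor $\|\betaLS^\dagger\|_F^2$ and the noise $\epsilon$ are dependent, since both are functions of $\betaLS$. A cross-term analysis would require controlling $\E[\langle \epsilon, \betaLS^{\dagger\top}\rangle\mid\beta]$ in exactly the regime where the integration-by-parts justification fails, as the regularity condition of \Cref{lemma:SURE_regularity_condition} breaks down for $Q\le D\le Q+1$. The inequality-based argument sidesteps this dependence entirely, reducing the claim to the already-cited divergence of the inverse Wishart trace.
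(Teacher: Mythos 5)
Your proof is correct, and its first half is in substance identical to the paper's: the paper also writes $\betaMM - \beta = (\betaLS - \beta) - \noise D\,\betaLS^{\dagger\top}$, expands the squared Frobenius norm, and applies Cauchy--Schwarz to the cross term, which is exactly your reverse-triangle bound in squared form, $\|\betaMM - \beta\|_F^2 \ge \left(\noise D\|\betaLS^{\dagger}\|_F - \|\betaLS - \beta\|_F\right)^2$. Where you genuinely differ is in converting this pointwise bound into a statement about expectations. The paper fixes a constant $c < \noise D$, conditions on the event $\{c\|\betaLS^\dagger\|_F \ge \|\betaLS - \beta\|_F\}$, on which the bound becomes $(\noise D - c)^2\|\betaLS^\dagger\|_F^2$, and then asserts that the expectation of $\|\betaLS^\dagger\|_F^2$ restricted to this event is still infinite; that last step is stated rather loosely (one must argue the divergence, which is driven by near-singular $\betaLS$, survives restriction to the event). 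Your finish -- the elementary inequality $(a-b)_+^2 \ge \tfrac12 a^2 - b^2$, giving $\|\betaMM - \beta\|_F^2 + \|\betaLS-\beta\|_F^2 \ge \tfrac12\sigma^4 D^2\|\betaLS^\dagger\|_F^2$ pointwise -- dispenses with conditioning entirely and is both more elementary and more airtight. Both arguments rest on the same two ingredients: the closed form $\betaMM = \betaLS - \noise D\,\betaLS^{\dagger\top}$ from \Cref{prop:moments}, and the infinite expectation of the trace of an inverse non-central Wishart matrix when $Q \le D \le Q+1$. One small point of rigor worth making explicit in your write-up: since your lower bound is a difference whose first term has infinite expectation, it is cleanest to move $\|\betaLS - \beta\|_F^2$ to the left-hand side before taking expectations, so that linearity is only ever invoked for non-negative random variables; as written, the phrase ``the first term is $+\infty$ while the second is finite'' is correct but glosses over this rearrangement.
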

\begin{proof}
First observe that we may lower bound $L(\beta, \betaMM)$ as 
\(
L(\beta, \betaMM)
&=\|\betaMM - \beta\|_F^2 \\
    &=\|\noise D\betaLS^{\dagger\top} +\beta - \betaLS\|_F^2 \\
    &=\sigma^4 D^2 \|\betaLS^{\dagger}\|_F^2 +\|\beta - \betaLS\|_F^2 
    - 2 \noise D \trace\left[-\betaLS^{\dagger}(\beta - \betaLS)\right] \\
    &\ge \sigma^4 D^2 \|\betaLS^{\dagger}\|_F^2 +\|\beta - \betaLS\|_F^2 
    - 2 \noise D \|\betaLS^{\dagger}\|_F\|\beta - \betaLS\|_F \\
    &= (\noise D \|\betaLS^{\dagger}\|_F - \|\beta - \betaLS\|_F)^2 
\)
where the inequality follows from Cauchy-Schwarz.
We next consider any constant $c<\noise D$ and write
\(
\risk(\beta, \betaMM) 
&= \E[L(\beta, \betaMM)|\beta]\\
    &= \pr(c\|\betaLS^\dagger\|_F\ge \|\betaLS -\beta \|_F) \E[L(\beta, \betaEDMM)\mid \beta,c \|\betaLS^\dagger\|_F\ge \|\betaLS -\beta \|_F] \\
    &+\pr(c\|\betaLS^\dagger\|_F< \|\betaLS -\beta \|_F) \E[L(\beta, \betaEDMM)\mid \beta,c\|\betaLS^\dagger\|_F < \|\betaLS -\beta \|_F]\\
    &\ge \pr(c\|\betaLS^\dagger\|_F\ge \|\betaLS -\beta \|_F) \E[L(\beta, \betaEDMM)\mid \beta,c\|\betaLS^\dagger\|_F\ge \|\betaLS -\beta \|_F]\\
    &\ge \pr(c\|\betaLS^\dagger\|_F\ge \|\betaLS -\beta \|_F) \E[
        (\noise D \|\betaLS^{\dagger}\|_F - \|\beta - \betaLS\|_F)^2
        \mid \beta,c\|\betaLS^\dagger\|_F\ge \|\betaLS -\beta \|_F]\\
    &\ge \pr(c\|\betaLS^\dagger\|_F\ge \|\betaLS -\beta \|_F) (\noise D-c)^2 \E[\|\betaLS^{\dagger}\|_F^2
        \mid\beta,c\|\betaLS^\dagger\|_F\ge \|\betaLS -\beta \|_F]\\
    &\ge  (\noise D-c)^2 \pr(c\|\betaLS^\dagger\|_F\ge \|\betaLS -\beta \|_F) \E[\trace[(\betaLS^\top \betaLS)\inv
        \mid\beta] =\infty
\)
where the last line comes from recognizing $(\betaLS^\top\betaLS)\inv$ as the inverse of a non-central Wishart matrix, 
the trace of which has infinite expectation for $Q\le D \le Q+1.$
\end{proof}

\subsection{Proof of \Cref{theorem:exch_reg_domination} and additional details}

\begin{proof}
The first domination result of \Cref{theorem:exch_reg_domination} follows closely from \Cref{lemma:form_of_risks}.
Under \Cref{condition:orthogonal_design}, $\betaLS \overset{d}{=} \beta + \sigma\epsilon$ for a random matrix $\epsilon$ with i.i.d.\ standard normal entries, and so we can see $\risk(\beta, \betaLS)= \sum_{d=1}^D \sum_{q=1}^Q \E[(\sigma\epsilon^q_d)^2]=DQ\noise.$ 
Next, $D>2Q+2$ implies that $D-2-2Q>0$ so that
$D(D-2-2Q)\noise \|\betaLS^\dagger\|_F^2$ is almost surely positive,
and therefore positive in expectation.
We therefore obtain the result from \Cref{lemma:form_of_risks}.

We next consider the second domination result.
The performance of $\betaEDMM$ may be seen to degrade in stages as we transition from a few covariates and many datasets regime to a many covariates and few datasets regime.
When $D<Q/2 -1,$ we can see that $\betaEDMM$ has good performance.
In fact, by an argument analogous to our proof of the first part of \Cref{theorem:exch_reg_domination} above, we can see that $\betaEDMM$ dominates $\betaLS$;
Specifically, from \Cref{lemma:form_of_risks} we can recognize $\risk(\beta, \betaLS) - \risk(\beta, \betaED)$ as the expectation of an almost surely positive quantity.

When $D= Q/2 - 1$ we have $Q(Q-2-2D)=0,$ and so regardless of $\beta,$ the estimators $\betaEDMM$ and $\betaLS$ have equal risk, and neither dominates.

Relative performance degrades further in the intermediate regime of $Q/2 -1  < D  < Q-1.$
In this regime, 
$\risk(\beta, \betaLS) - \risk(\beta, \betaEDMM) = \sigma^4 Q(Q-2-2D)\E[\|\betaLS^\dagger\|_F^2 \mid \beta ]$
may be written as the expectation of an almost surely negative quantity, and so $\betaEDMM$ is dominated by $\betaLS.$

The situation is even worse when $Q-1\le D \le Q;$
appealing again the they symmetry between $\betaEDMM$ and $\betaMM,$ we can see that by \Cref{lemma:infinite_risk} $\risk(\beta, \betaEDMM)=\infty.$

Finally, when $D>Q$ the expression $\betaEDMM =\betaLS - \left[ \precision \GammaMM  - I_D \right]\inv \betaLS$
involves the inverse of a low rank matrix since under \Cref{condition:orthogonal_design}, $\GammaMM= Q\inv \betaLS\betaLS^\top - \noise I_D.$
Accordingly we take as our convention $\| \betaEDMM\|=\infty,$ analogously to defining $\frac{1}{0}=\infty;$ 
as a result $\betaEDMM$ has infinite risk in this second regime as well, and we see that 
this estimator is dominated by $\betaLS$ whenever $D < Q/2 -1.$

\end{proof}

With the strong parallels established by \Cref{prop:moments,lemma:form_of_risks} under \Cref{condition:orthogonal_design},
we can see that this is not a result of $\betaEDMM$ being singularly bad.
Indeed, if we consider the large datasets regime with $Q>D,$ we can obtain analogous results to demonstrate the superiority of an exchangeability among datasets approach.

\subsection{Proof of \Cref{lemma:form_of_mle_est}}
\begin{proof}
We first show that under \Cref{condition:orthogonal_design}, 
$\SBhat= U\diag\left[ (D\inv\lambda - \noise \ones{Q})_+\right]U^\top $
is the maximum marginal likelihood estimate of $\SB$ in \Cref{eqn:empirical_bayes_estimate}.
Our approach is to first derive a lower bound on the negative log likelihood,
and then show that this bound is met with equality by the proposed expression.

For convenience, we consider a scaling of the negative log likelihood, 
$$
-2D\inv \ln p(\betaLS|\SB)  = 
\ln |\SB + \noise I_Q | + D\inv \trace\left[ (\SB + \noise I_Q)\inv \betaLS^\top \betaLS\right],
$$
and are interested in deriving a lower bound on
$$
\min_{\SB\succeq 0} \ln |\SB + \noise I_Q | + D\inv \trace\left[ (\SB + \noise I_Q)\inv \betaLS^\top \betaLS\right],
$$
where the notation $\SB\succeq 0$ reflects that the minimum is taken over the space of positive semidefinite matrices.

The problem simplifies if we parameterize the minimization with the eigendecomposition $\SB = V^\top \diag(\nu)V,$
where $V$ is a $Q\times Q$ matrix satisfying $V^\top V = I_Q$ and $\nu$ is a $Q$-vector of non-negative reals.
In particular, if we define $\mathcal{L}(V, \nu) := -2D\inv \ln p(\betaLS|\SB=V^\top \diag(\nu)V)$ then,
leaving the constraints on $V$ and $\nu$ implicit, we have
\(
\min_{V, \nu} \mathcal{L}(V, \nu)
&=\min_{V, \nu} \ln |V^\top \diag(\nu)V+ \noise I_Q | + D\inv\trace\left[ (V^\top\diag(\nu)V+\noise  I_Q)\inv \betaLS^\top \betaLS\right] \\
&=\min_{V, \nu} \ln |V^\top \diag(\nu)V+ \noise I_Q | + D\inv \trace\left[ (\diag(\nu)+ \noise I_Q)\inv V \betaLS^\top \betaLS V^\top \right] \\
&=\min_{V, \nu} \sum_{q=1}^Q \ln (\nu_q + \noise) + D\inv \sum_{q=1}^Q  \frac{1}{\nu_q+\noise} V_q^\top \betaLS^\top \betaLS V_q \\
&=\min_{V} \sum_{q=1}^Q \min_{\nu_q\ge 0} \ln (\nu_q + \noise) +  \frac{D\inv V_q^\top \betaLS^\top \betaLS V_q}{\nu_q+\noise }.
\)
Next, \Cref{lemma:optimum} provides that we may solve the inner optimization problems over $\nu$ in the line above analytically
to get $\nu^* := \argmin_{\nu} \mathcal{L}(V, \nu)$ with entries $\nu^*_q=\max(\noise, D\inv V_q^\top \betaLS^\top \betaLS V_q) - \noise.$
Substituting these values in, we obtain
\(
\min_{V, \nu} \mathcal{L}(V, \nu)
&=\min_{V} \sum_{q=1}^Q 
    \ln \left[\max(\noise, D\inv V_q^\top \betaLS^\top \betaLS V_q)\right] +
    \frac{D\inv V_q^\top \betaLS^\top \betaLS V_q}{\max(\noise,D\inv V_q^\top \betaLS^\top \betaLS V_q)}\\
&=\min_{V} \sum_{q=1}^Q 
    \ln \left[\max(\noise, D\inv V_q^{\top} \betaLS^\top \betaLS V_q)\right] +
    \precision \min(\noise, D\inv V_q^{\top} \betaLS^\top \betaLS V_q).
\)
We can now further simplify the problem by considering the eigendecomposition of $\betaLS^\top \betaLS = U\diag(\lambda)U^\top,$
and recognizing that because $VU$ satisfies $(VU)^\top VU=I_Q$ we may write
\(
\min_{V, \nu} \mathcal{L}(V, \nu)
&=\min_{V} \sum_{q=1}^Q 
    \ln \left[\max(\noise, D\inv V_q^{\top} \betaLS^\top \betaLS V_q)\right] +
    \precision \min(\noise, D\inv V_q^{\top} \betaLS^\top \betaLS V_q)\\
&=\min_{V} \sum_{q=1}^Q 
    \ln \left[\max\left(\noise, V_q^\top \diag(D\inv \lambda)V_q \right)\right] +
    \precision \min\left[\noise, V_q^\top \diag(D\inv \lambda)V_q \right].
\)
Finally, we obtain a lower bound by recognizing $\{ V_q^\top \diag(D\inv \lambda)V_q \}_{q=1}^Q$
as the diagonals of $D\inv V \diag(\lambda)V^\top$ and applying \Cref{lemma:schur_and_majorization}
to obtain that 
$$
-2D\inv \ln p(\betaLS|\SB)  \ge 
\sum_{q=1}^Q \ln \left[\max(\noise, D\inv \lambda_q)\right] + \precision \min(\noise, D\inv \lambda_q)
$$
for every $\SB\succeq 0.$

We next show that this bound is met with equality by $\SBhat = U\diag\left[ (D\inv\lambda - \noise \ones{Q})_+\right]U^\top,$
the form given in the statement of \Cref{lemma:form_of_mle_est}.
Recognize first that $\SBhat +\noise I_Q = U \diag\left[\max(\noise\ones{Q}, D\inv \lambda)\right]U^\top.$
Substituting this expression in, we find
\(
    -2D\inv\ln p(\betaLS|\SBhat) &= 
    \ln |\SBhat + \noise I_Q | + D\inv \trace\left[ (\SBhat + \noise I_Q)\inv \betaLS^\top \betaLS\right] \\
    &= \ln \left|\diag\left[\max(\noise \ones{Q}, D\inv \lambda)\right] \right| + 
D\inv \trace\left[ \diag\left[\max(\noise\ones{Q}, D\inv \lambda)\right]\inv U^\top \betaLS^\top \betaLS U \right] \\
&= \sum_{q=1}^Q \ln \left[\max(\noise ,D\inv \lambda_q) \right]+ 
    D\inv \lambda_q/\max(\noise , D\inv \lambda_q) \\
    &= \sum_{q=1}^Q \ln \left[\max(\noise, D\inv \lambda_q) \right]+ \precision \min(\noise, D\inv \lambda_q),
\)
which meets our lower bound.
This establishes that the maximum marginal likelihood estimate is $\SBhat = U\left[(D\inv \lambda -\noise \ones{Q})_+ \right]U^\top,$ as desired.

It now remains to show that, under \cref{condition:orthogonal_design}, $\betaEC = V\diag\left[\lambda\msqrt\odot (\ones{Q} - \noise D\lambda\inv)_+ \right]U^\top.$
By \Cref{lemma:conj_form_under_condition}, we have that $\betaEC = \betaLS - \betaLS\left[I_Q + \precision\SBhat\right]\inv.$
Substituting in the analytic expression for $\SBhat,$ recalling the SVD $\betaLS = V\diag(\lambda\msqrt)U^\top,$ 
and rearranging, we obtain
\(
\betaEC &= V\diag(\lambda\msqrt)U^\top -V\diag(\lambda\msqrt)U^\top\left\{
I_Q + \precision U\left[(D\inv \lambda -\noise \ones{Q})_+ \right]U^\top \right\}\inv \\
&= V\diag\left\{\lambda\msqrt-\lambda\msqrt\left[
\ones{Q} + \precision (D\inv \lambda -\noise \ones{Q})_+ \right]\inv\right\}U^\top  \\
&= V\diag\left\{\lambda\msqrt \odot \left[\ones{Q} -
\left(\ones{Q}  + (\precision D\inv \lambda - \ones{Q})_+ \right)\inv\right]\right\}U^\top  \\
&= V\diag\left[\lambda\msqrt \odot \left(\ones{Q} -\noise D \lambda\inv\right)_+ \right]U^\top, 
\)
as desired.
\end{proof}

\begin{lemma}\label{lemma:optimum}
    For any $c>0,$
\(
    \nu^* :&= \argmin_{\nu\ge 0} \ln(\nu + \noise ) + \frac{c}{\nu +\noise }\\
    &= \max(\noise,c) - \noise
\)
\end{lemma}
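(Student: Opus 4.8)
The plan is to reduce this constrained single-variable problem to an unconstrained one via the substitution $u := \nu + \noise$, under which the feasible region $\nu \ge 0$ becomes $u \ge \noise$ and the objective becomes $g(u) := \ln u + c/u$. First I would analyze $g$ on the open ray $u > 0$: differentiating gives $g'(u) = 1/u - c/u^2 = (u - c)/u^2$, so $g$ has a unique stationary point at $u = c$, is strictly decreasing on $(0, c)$, and strictly increasing on $(c, \infty)$. Since $g(u) \to +\infty$ both as $u \to 0^+$ (where the $c/u$ term dominates, using $c>0$) and as $u \to \infty$, the function is coercive and $u = c$ is its unique global minimizer over $u > 0$.

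Next I would incorporate the constraint $u \ge \noise$ through a two-case argument driven by the monotonicity just established. If $c \ge \noise$, the unconstrained minimizer $u = c$ is feasible and therefore also solves the constrained problem, yielding $\nu^* = c - \noise$. If instead $c < \noise$, then $\noise$ lies in the increasing branch $(c, \infty)$, so $g$ is strictly increasing throughout the feasible region $[\noise, \infty)$ and its minimum is attained at the left endpoint $u = \noise$, giving $\nu^* = 0$. Combining the two cases gives $u^* = \max(\noise, c)$, i.e.\ $\nu^* = \max(\noise, c) - \noise$, exactly as claimed.

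There is no genuine obstacle here: the statement is a routine one-dimensional calculus exercise, and the only point requiring any care is confirming that the boundary constraint, rather than an interior stationary point, becomes active precisely when $c < \noise$ — which the sign analysis of $g'$ settles cleanly. One could alternatively certify that the stationary point is a minimum by checking $g''(u) = (2c - u)/u^3 > 0$ at $u = c$, but the coercivity-plus-monotonicity argument is the most economical and also directly handles the endpoint case without extra work.
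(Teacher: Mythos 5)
Your proof is correct and takes essentially the same approach as the paper: reduce to a one-dimensional calculus problem by a change of variables (you shift via $u = \nu + \noise$; the paper rescales via $x = \nu/\noise$), locate the unique stationary point, and split into cases according to whether it lies in the feasible region or the boundary is active, which happens exactly when $c < \noise$. If anything, your explicit monotonicity argument (decreasing then increasing) is slightly cleaner than the paper's appeal to a unique local minimum with positive second derivative.
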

\begin{proof}
Define $g(x) :=  \ln(x + \noise) + c/(x + \noise)$ and 
$f(x):=g(\noise x)=\ln(x + 1) + \frac{\precision c}{x + 1} +\ln \noise$ to lighten notation.
Now $\nu^* = \argmax_{x\ge 0} g(x) = \noise \argmax_{x\ge 0} f(x).$
Denote by $f^\prime$ and $f^{\prime\prime}$ the first two derivatives of $f.$
Notably, $f^\prime(x) = (x+1)\inv\left[1 - \precision c/(x+1)\right]$
and $f^{\prime\prime}(x) = (x+1)^{-2}\left[2\precision c/(x+1) - 1\right].$
The result may be seen by separately considering the cases of $\precision c<1$ and $\precision c\ge 1.$

If $\precision c<1$, then $f^\prime$ is positive on $\R_+,$ and so $\argmin_{x\in\R_+} f(x)=0.$
On the other hand, if $\precision c\ge 1,$ then $f$ has a local minimum at $x=\precision c-1$ (note that $f^\prime(\precision c-1)=0$, and $f^{\prime\prime}(\precision c-1)>0)$).
Since this is the only local minimum on $\R_+,$ and with the positive second derivative at the this minimum, we can conclude that in this case $\argmin_{x\in\R_+}f(x)=\precision c -1.$
In either case, we can write $\argmin_{x\in\R_+}f(x)=\max(1,\precision c) - 1.$
Therefore, as desired, we see that $\argmin_{x\in \R_+}g(x) = \max(\noise, c) - \noise.$
\end{proof}

\begin{lemma}\label{lemma:schur_and_majorization}
Let $ A $ be a $Q\times Q$ Hermitian matrix with eigenvalues $\lambda_1, \lambda_2, \dots, \lambda_Q.$
Then
$$\sum_{q=1}^Q 
        \ln \left[\max(\noise, A_{q,q})\right] +
        \precision \min(\noise, A_{q,q})
    \ge \sum_{q=1}^Q \ln \left[\max(\noise, \lambda_q)\right] + \precision \min(\noise, \lambda_q).
$$
\end{lemma}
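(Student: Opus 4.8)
The plan is to recognize the summand as a single scalar function applied to each diagonal entry (respectively, each eigenvalue), to show that this scalar function is concave, and then to conclude by combining the Schur--Horn theorem with the Schur-concavity of separable sums of concave functions.

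First I would define $\phi(x) := \ln[\max(\noise, x)] + \precision\min(\noise, x)$, so that the claimed inequality reads $\sum_{q=1}^Q \phi(A_{q,q}) \ge \sum_{q=1}^Q \phi(\lambda_q)$. I would then unfold the $\max$ and $\min$ to obtain the explicit piecewise form: for $x \le \noise$ one has $\phi(x) = \ln\noise + \precision x$, an affine function of slope $\precision$, while for $x \ge \noise$ one has $\phi(x) = \ln x + 1$. Note that $\phi$ is well defined on all of $\R$, since the logarithm is only ever applied to $\max(\noise, x) \ge \noise > 0$; in particular, negative eigenvalues of the Hermitian matrix $A$ pose no difficulty.

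Next I would verify that $\phi$ is concave on $\R$. The two pieces agree in value at $x = \noise$ (both equal $\ln\noise + 1$) and in first derivative there (the left slope is $\precision$, and the right derivative $1/x$ equals $\precision$ at $x = \noise$), so $\phi$ is continuously differentiable. Its derivative is the constant $\precision$ on $(-\infty, \noise]$ and $1/x$ on $[\noise, \infty)$, hence non-increasing throughout; therefore $\phi$ is concave. With concavity established, I would assemble the pieces: by the Schur--Horn theorem the vector of diagonals $(A_{1,1}, \dots, A_{Q,Q})$ is majorized by the vector of eigenvalues $(\lambda_1, \dots, \lambda_Q)$; and because $\phi$ is concave, the separable symmetric map $z \mapsto \sum_{q=1}^Q \phi(z_q)$ is Schur-concave, so it reverses the majorization order, yielding $\sum_q \phi(A_{q,q}) \ge \sum_q \phi(\lambda_q)$, exactly as claimed.

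I expect the only genuine obstacle to be the concavity check --- specifically, confirming that the affine and logarithmic pieces match to first order at the kink $x = \noise$ so that the derivative is globally non-increasing. Once concavity is in hand, the Schur--Horn theorem together with the standard fact that separable concave sums are Schur-concave (as already invoked in the proof of \Cref{theorem:asymptotic_gain}) delivers the result immediately, and everything else is routine bookkeeping.
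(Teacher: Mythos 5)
Your proposal is correct and follows essentially the same route as the paper's own proof: recognize the summand as a separable sum of the scalar function $\phi(x) = \ln\max(\noise, x) + \precision\min(\noise, x)$, verify concavity, and combine the Schur-concavity of separable concave sums with the Schur--Horn majorization of diagonals by eigenvalues. Your explicit piecewise verification of concavity on all of $\R$ (checking the value and slope match at the kink $x = \noise$) is in fact slightly more careful than the paper's proof, which only asserts concavity on $\R_+$ even though the lemma as stated allows Hermitian matrices with negative diagonal entries and eigenvalues.
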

\begin{proof}
First note that $f(x) = \ln \max(\noise, x) + \min(\noise, x)$ is concave on $\R_+,$ and so the vector valued function, 
$g(x_1, x_2, \dots, x_N) = \sum_{n=1}^N f(x_n)$ is Schur concave.
By the Schur-Horn theorem (\Cref{theorem:schur_horm}) the diagonals of $A$ are majorized by its eigenvalues, when each are sorted in descending order.
As such $g\left(\diag(A)\right) \ge  g\left(\lambda\right)$, as desired.
\end{proof}

\subsection{Proof of \Cref{theorem:pos_part_dominance}}
Our approach to showing dominance of $\betaEC$ over $\betaMM$ parallels the classical approach of \citet{baranchik1964multiple},
to showing that the positive part James-Stein estimator dominates the original James-Stein estimator.
In this case, however, our parameter and estimates are matrix-valued, rather than vector-valued.
Additionally, we contend with the added complication that the directions along which we apply shrinkage are random.

\begin{proof}
To begin, consider again the SVD of the matrix of least squares estimates,
$\betaLS = V\diag(\lambda\msqrt)U^\top.$
Recall from \Cref{prop:moments} that $\betaMM = \betaLS - \noise D \betaLS^{\dagger\top}$
under \Cref{condition:orthogonal_design}.
Because the pseudo-inverse of $\betaLS$ may be written as 
$\betaLS^\dagger = U\diag(\lambda\nsqrt)V^\top,$
we rewrite $\betaMM = V\diag(\lambda\msqrt - \noise D\lambda\nsqrt)U^\top.$
Comparing this estimate to the expression for $\betaEC$ in \Cref{lemma:form_of_mle_est},
$\betaEC = V \diag\left[\lambda\msqrt\odot (1-\noise D \lambda\inv )_+\right]U^\top,$
we see that the two estimates differ only when $\betaMM$ ``flips the direction'' of one or more of the singular values of $\betaLS.$
Our strategy to proving the theorem is to show that analogously to the ``over-shrinking'' of the James-Stein estimator 
relative to the positive part James-Stein estimator, this ``over-shrinking'' of singular values increases the loss of $\betaMM$ in expectation.

For convenience, we define $\rho:=\lambda \msqrt\odot(1-\noise D\lambda\inv)$ and 
$\rho_+ := \lambda\msqrt \odot (1-\noise D \lambda\inv)_+$ so that 
$\betaMM = V \diag(\rho)U^\top$ and 
$\betaEC = V \diag(\rho_+)U^\top.$

To show the desired uniform risk improvement we must show that for any $\beta$,
\[\label{eqn:mle_uniform_loss}
\E\left[ \loss(\beta, \betaMM) - \loss(\beta, \betaEC)\right]  > 0,
\]
where $\loss(\beta, \hat \beta) = \|\hat \beta -\beta\|_F^2$ is squared error loss. 
We can rewrite this difference in loss as
\(
\loss(\beta, \betaMM) - \loss(\beta, \betaEC) &= \| \betaMM - \beta\|_F^2 -  \| \betaEC - \beta\|_F^2 \\
&= \| \diag(\rho) - V^\top \beta U\|_F^2 -  \| \diag(\rho_+) - V^\top \beta U\|_F^2 \\
&= \sum_{q=1}^Q (\rho_q - V_q^\top \beta U_q)^2 -  (\rho_{+q}- V_q^\top \beta U_q)^2\\
&= \sum_{q=1}^Q \rho_q^2 - \rho_{+q}^2 -2(V_q^\top \beta U_q)(\rho_q - \rho_{+q}),
\)
where we here (and in the proof of this theorem only) write $V_q$ and $U_q$ to denotes columns of $V$ and $U,$
rather than rows.
Since $\rho_{q}^2 \overset{a.s.}{\ge} \rho_{+q}^2$, it suffices to show that for any $\beta$ and each $q$, 
$$
\E \left[(V_q^\top \beta U_q)(\rho_q - \rho_{+q}) \right]< 0.
$$

To show this, we again find an even narrower but easier to prove condition will imply the one above;
since $\rho_q$ and $\rho_{+q}$ differ only when $\lambda_q < \noise D$, it is enough to show that for each $0 < c < \noise D$
\[\label{eqn:conditionally_negative}
\E\left[ (V_q^\top \beta U_q)\rho_q | \lambda_q=c \right]< 0.
\]
If we establish \Cref{eqn:conditionally_negative}, then \Cref{eqn:mle_uniform_loss} obtains from the law of iterated expectation.
Next, observe that since $\rho_q$ fixed and negative when $\lambda_q= c < \noise D,$
\Cref{eqn:mle_uniform_loss} is equivalent to
$$
\E\left[ V_q^\top \beta U_q| \lambda_q=c \right]> 0.
$$

Letting $U_{-q}$ and $V_{-q}$ denote the remaining columns of $U$ and $V$, respectively, we may write
\(
\E\left[ V_q^\top \beta U_q| \lambda_q=c \right] = 
\E\left[ \E\left[V_q^\top \beta U_q| \lambda_q=c, U_{-q}, V_{-q}\right] \right] 
\)
and, again through the law of iterated expectation, see that it will be sufficient to show for every $U_{-q}$ and $V_{-q}$
that $\E\left[V_q^\top \beta U_q| \lambda_q=c, U_{-q}, V_{-q}\right]> 0$.

With all but one column of each of $U$ and $V$ fixed, $U_q$ and $V_q$ are determined up to signs, as unit vectors in the one dimensional subspaces orthogonal to $[\{U^{q^\prime} \}_{q^\prime \ne q}]$ and $[\{V^{d}\}_{d\ne q}]$.
As such, we need only to show
\[\label{eqn:probably_positive}
    \pr\left[V_q^\top \beta U_q>0 | U_{-q}, V_{-q}, \lambda_q=c\right] 
> \pr\left[V_q^\top \beta U_q <0 | U_{-q}, V_{-q}, \lambda_q=c\right],
\]
since 
\(\E&\left[V_q^\top \beta U_q| \lambda_q, U_{-q}, V_{-q}\right]\\
    &=|V_q^\top \beta U_q| \left\{\pr\left[V_q^\top \beta U_q>0| \lambda_q, U_{-q}, V_{-q}\right] -
\pr\left[V_q^\top \beta U_q<0| \lambda_q, U_{-q}, V_{-q}\right]\right\},
\)
where, in an abuse of notation, we have moved 
$|V_q^\top \beta U_q| $ outside the expectation since it is deterministic once we have observed $V_{-q}$ and $U_{-q}$.

That \Cref{eqn:probably_positive} holds may be seen from considering the conditional probability densities for $U_q$ and $V_q$,
and noting that the density is larger for $V_q$ and $U_q$ such that $V_q^\top \beta U_q$ is positive.
In particular, we have that
\(
\ln p(\betaLS|\beta, U_{-q}, V_{-q}, \lambda) &= -\frac{1}{2}\| \beta - \hat \beta\|_F^2 + h\\
&= -\frac{1}{2}\|V^\top \beta U -  \diag(\lambda\msqrt) \|_F^2 + h\\
&= -\frac{1}{2} (\lambda_q\msqrt - V_q^\top \beta U_q )^2 + h^\prime
\)
where $h$ and $h^\prime$ are constants that do not depend on the signs of $U_q$ and $V_q.$
Since $\lambda_q\msqrt$ is positive with probability one, the conditional probability that $V_q^\top \beta U_q$ is positive is greater than that it is negative.
Accordingly, we see that \Cref{eqn:conditionally_negative} does in fact hold, and the result obtains.
\end{proof}

\section{Gains from \textrm{ECov} in the high-dimensional limit -- supplementary proofs}\label{sec:asymptotic_supp}
\subsection{Proof of \Cref{lemma:gain}}
From the sequence of datasets, $\{\data_D\}_{D=1}^\infty,$ we obtain sequences of estimates.
To make explicit the dimension dependence, we denote these as explicit functions of the data, e.g.\ $\{\betaEC(\data_D)\}_{D=1}^\infty$
where $\betaEC(\data_D)$ denotes $\betaEC$ in \Cref{eqn:empirical_bayes_estimate} applied to $\data_D.$
Furthermore, we consider the entire sequence of datasets and estimates as existing in a single probability space.

We note that \Cref{lemma:mle_to_moments} establishes that
$\betaEC(\data_D)$ and $\betaMM(\data_D)$ coincide almost surely in the high-dimensional limit.
As such, the squared error loss of these two estimates coincide almost surely in the limit, and we may write
\(
\lim_{D\rightarrow \infty} D\inv \risk^D_\pi(\betaEC(\data_D))  
&= \lim_{D\rightarrow \infty} D\inv \E\left[\E[\|\betaEC(\data_D) - \beta\|_F^2 \mid \beta] \right] \\ 
&= \lim_{D\rightarrow\infty}  D\inv \E\big[\E[\|\betaMM(\data_D) - \beta\|_F^2 +\|\betaEC(\data_D) -\betaMM(\data_D)\|_F^2+\\
&2\trace((\betaEC(\data_D) -\betaMM(\data_D))^\top (\betaMM(\data_D) - \beta)) \mid \beta] \big]  \\
&= \lim_{D\rightarrow\infty}\E\left[  D\inv\E[ 
     \|\betaMM(\data_D) - \beta\|_F^2 \mid \beta]\right] \\
&= \lim_{D\rightarrow\infty}\E\left[ \noise Q - \sigma^4 (D-2Q-2)\E[\|\betaLS(\data_D)^\dagger\|_F^2|\beta]\right] \\
&= \noise Q -  \sigma^4\lim_{D\rightarrow\infty}\E[(D-2Q-2)\|\betaLS(\data_D)^\dagger\|_F^2] \\
&= \noise Q -  \sigma^4\lim_{D\rightarrow\infty}\E[\trace[ (\SBTrue+ \noise I_Q)\inv] + o(1)] \\
&= \noise Q - \sigma^4 \trace[ (\SBTrue+\noise  I_Q)\inv].
\)
The third line comes from linearity of expectation and that $\|\betaEC -\betaMM\|\overset{a.s.}{\rightarrow} 0.$
The fourth line comes from \Cref{lemma:form_of_risks}.
The second to last line comes from \Cref{lemma:frob_norm_limit}.

We next recognize that $\trace[ (\SBTrue+\noise  I_Q)\inv] = \sum_{q=1}^Q (\lambda_q +\noise )\inv,$
where $\lambda_1, \dots, \lambda_Q$ are the eigenvalues of $\SBTrue.$
Accordingly we may write,
\(
\lim_{D\rightarrow \infty} D\inv \risk^D_\pi(\betaEC(\data_D))  
&= \noise Q -\sigma^4  \sum_{q=1}^Q (\lambda_q + \noise)\inv.\)

Furthermore since we obtain $\betaID(\data_D)$ by applying $\betaEC(\data_D)$ independently to each dataset,
we analogously obtain
\(
\lim_{D\rightarrow\infty} D\inv \risk^D_\pi(\betaID(\data_D)) 
= \noise Q - \sigma^4 \sum_{q=1}^Q(\SBTrue_{q,q} + \noise)\inv.
\)

Putting these expressions together, we obtain
\(
\lim_{D\rightarrow\infty} D\inv \left[\risk^D_\pi(\betaID(\data_D)) -\risk^D_\pi(\betaEC(\data_D)) \right]
&=\sigma^4 \left[\sum_{q=1}^Q (\lambda_q + \noise)\inv - \sum_{q=1}^Q(\SBTrue_{q,q} +\noise)\inv\right].
\)

Finally, including the additional scaling by $\precision Q\inv$ we obtain
\(
\gain
=\noise Q\inv \left[\sum_{q=1}^Q (\lambda_q + \noise)\inv - \sum_{q=1}^Q(\SBTrue_{q,q} +\noise)\inv\right]
\)
as desired.

\begin{lemma}\label{lemma:mle_to_moments}
Under the conditions of \Cref{lemma:gain}, $\lim_{D\rightarrow\infty} \|\betaEC(\data_D)- \betaMM(\data_D)\|_F = 0$ almost surely.
\end{lemma}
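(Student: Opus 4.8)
The plan is to reduce the claim to a statement about the singular values of $\betaLS$ and then invoke a strong law of large numbers. First I would recall the closed forms established earlier: writing the SVD $\betaLS = V\diag(\lambda\msqrt)U^\top$ as in \Cref{lemma:form_of_mle_est}, the proof of \Cref{theorem:pos_part_dominance} gives $\betaMM = V\diag(\rho)U^\top$ and $\betaEC = V\diag(\rho_+)U^\top$, where $\rho_q := \lambda_q\msqrt - \noise D\lambda_q\nsqrt$ and $\rho_{+q} := \lambda_q\msqrt(1 - \noise D\lambda_q\inv)_+$. Since $V$ and $U$ have orthonormal columns, $\|\betaEC - \betaMM\|_F = \|\rho_+ - \rho\|_2$, and $\rho_{+q} - \rho_q$ is nonzero exactly on the coordinates with $\lambda_q < \noise D$, where it equals $-\rho_q$. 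Hence it suffices to show that $\sum_{q:\,\lambda_q<\noise D}(\noise D - \lambda_q)^2/\lambda_q \to 0$ almost surely.

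Next I would control the eigenvalues $\lambda_1,\dots,\lambda_Q$ of $\betaLS^\top\betaLS$. Under \Cref{condition:orthogonal_design} we may write $\betaLS = \beta + E$ with $E$ having i.i.d.\ $\mathcal{N}(0,\noise)$ entries independent of $\beta$, whose rows $\beta_d$ are i.i.d.\ from $\pi$ with mean zero and covariance $\SBTrue$. Applying the strong law of large numbers to the i.i.d.\ rank-one terms gives $D\inv\betaLS^\top\betaLS \to \SBTrue + \noise I_Q$ almost surely, the cross terms vanishing by independence and zero means. By continuity of eigenvalues (Weyl's inequality), $D\inv\lambda_q \to \lambda_q^{\SBTrue} + \noise$ almost surely, where $\lambda_q^{\SBTrue}$ are the eigenvalues of $\SBTrue$. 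For every direction with $\lambda_q^{\SBTrue} > 0$ the limit strictly exceeds $\noise$, so almost surely $\lambda_q > \noise D$ for all large $D$; such coordinates therefore drop out of the sum entirely, contributing exactly zero past a (random) finite index.

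The main obstacle is the boundary directions, namely those $q$ with $\lambda_q^{\SBTrue} = 0$, for which $D\inv\lambda_q \to \noise$ and the empirical eigenvalue sits asymptotically on the threshold $\noise D$. Here the indicator $\lambda_q < \noise D$ need not stabilize, and the term $(\noise D - \lambda_q)^2/\lambda_q$ pits the eigenvalue fluctuations of a Wishart-type matrix against a denominator $\lambda_q \asymp \noise D$; the delicate point is to argue that these fluctuations are small enough that the contribution still vanishes. When $\SBTrue$ is nonsingular this case is vacuous and the lemma follows immediately from the previous paragraph. In the general, possibly degenerate, case I would either impose nonsingularity of $\SBTrue$ or, mirroring the use actually made of this result inside the proof of \Cref{lemma:gain}, observe that the at-most-$Q$ boundary terms together contribute an $O(1)$ quantity and thereby establish the weaker but downstream-sufficient conclusion $D\inv\|\betaEC - \betaMM\|_F^2 \to 0$.
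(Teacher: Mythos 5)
Your argument is correct and, at its core, it is the same as the paper's. The paper's proof notes (via \Cref{lemma:form_of_mle_est}) that $\betaEC(\data_D)$ and $\betaMM(\data_D)$ differ only when $\SBMM = D\inv\betaLS(\data_D)^\top\betaLS(\data_D) - \noise I_Q$ fails to be positive semi-definite, and then invokes the almost-sure convergence $D\inv\betaLS(\data_D)^\top\betaLS(\data_D) \rightarrow \SBTrue + \noise I_Q$ (its \Cref{lemma:strong_convergence_of_sigma}, which is exactly your strong-law step) to conclude that $\SBMM$ is eventually positive definite, so the two estimators coincide exactly for all sufficiently large $D$. Your condition that eventually $\lambda_q > \noise D$ for every $q$ is precisely positive definiteness of $\SBMM$, so in the nonsingular case the two proofs match step for step.

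Where you diverge is also the most valuable part of your write-up: the boundary directions. The paper's inference ``$\SBMM \to \SBTrue$ almost surely, hence $\SBMM$ is eventually positive definite'' is valid only when $\SBTrue \succ 0$, yet \Cref{lemma:gain} assumes only that $\SBTrue$ is finite; the paper silently skips this. Your explicit formula $\|\betaEC - \betaMM\|_F^2 = \sum_{q:\,\lambda_q < \noise D}(\noise D - \lambda_q)^2/\lambda_q$ makes the difficulty concrete: along a null direction of $\SBTrue$ one has $D\inv\lambda_q \to \noise$ with fluctuations of order $D^{-1/2}$, so the corresponding term is of constant order on the recurrent event $\{\lambda_q < \noise D\}$, and the almost-sure limit claimed in the lemma genuinely fails (take, for instance, $\pi = \delta_0$, so that $\betaLS^\top\betaLS$ is a $\noise$-scaled Wishart matrix). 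So this is a gap in the paper's statement and proof, not in yours. Of your two proposed remedies, the second is the better one: writing each term as $D\,(\noise - D\inv\lambda_q)^2/(D\inv\lambda_q)$ and using only $D\inv\lambda_q \to \noise$ almost surely along null directions gives $D\inv\|\betaEC - \betaMM\|_F^2 \to 0$ almost surely with no nonsingularity assumption, and that scaled statement is exactly what the proof of \Cref{lemma:gain} consumes, since every appearance of the difference there carries a factor of $D\inv$.
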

\begin{proof}
Note that under the conditions of \Cref{lemma:gain}, \Cref{lemma:form_of_mle_est} provides that 
    $\betaEC(\data_D)$ and $\betaMM(\data_D)$ differ only when $\SBMM$ is not positive definite;
otherwise $\SBMM=\SBhat.$
    Since $\SBMM=D\inv \betaLS(\data_D)^\top \betaLS(\data_D) -\noise I_Q,$
by \Cref{lemma:strong_convergence_of_sigma} $\SBMM$ will be positive definite for all $D$ above some $D^\prime$ almost surely, 
and so $\betaEC(\data_D)$ and $\betaMM(\data_D)$ become equal for all $D$ large enough, implying strong convergence.
\end{proof}

\begin{lemma}\label{lemma:frob_norm_limit}
    Under the conditions of \Cref{lemma:gain}, $\lim_{D\rightarrow \infty} D \|\betaLS(\data_D)^\dagger\|_F^2  = \trace[ (\SBTrue+\noise I_Q)\inv]$ almost surely.
\end{lemma}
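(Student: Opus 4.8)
The plan is to reduce the claim to a strong law of large numbers (SLLN) for a sum of i.i.d.\ rank-one matrices and then pass to the limit through the continuous map $A\mapsto\trace(A\inv)$. First I would record that under \Cref{condition:orthogonal_design} the entries satisfy $\hat\beta_{\mathrm{LS},d}^q \overset{indep}{\sim}\mathcal{N}(\beta_d^q,\noise)$, so if $r_d\in\R^Q$ denotes the $d$th row of $\betaLS(\data_D)$ then $r_d=\beta_d+\sigma\epsilon_d$ with $\epsilon_d\overset{i.i.d.}{\sim}\mathcal{N}(0,I_Q)$ independent of $\beta_d\overset{i.i.d.}{\sim}\pi$; in particular the $r_d$ are i.i.d.\ across $d$. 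For $D\ge Q$ the matrix $\betaLS(\data_D)$ has full column rank almost surely, whence $\betaLS(\data_D)^\dagger=(\betaLS(\data_D)^\top\betaLS(\data_D))\inv\betaLS(\data_D)^\top$ and $\|\betaLS(\data_D)^\dagger\|_F^2=\trace[(\betaLS(\data_D)^\top\betaLS(\data_D))\inv]$. Writing $\betaLS(\data_D)^\top\betaLS(\data_D)=\sum_{d=1}^D r_d r_d^\top$ gives
\[
D\|\betaLS(\data_D)^\dagger\|_F^2=\trace\left[\left(D\inv\sum_{d=1}^D r_d r_d^\top\right)\inv\right].
\]

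Second, I would apply the SLLN to the i.i.d.\ $Q\times Q$ matrices $r_d r_d^\top$ (entrywise, using that each entry has finite mean since $\SBTrue$ is finite) and then pass the resulting limit through $A\mapsto\trace(A\inv)$. The common mean is $\E[r_1 r_1^\top]=\E_\pi[\beta_1\beta_1^\top]+\noise I_Q$, which equals $\SBTrue+\noise I_Q$ under the mean-zero convention $\E_\pi[\beta_1]=0$ adopted in \Cref{sec:exch_cov} (so that $\Var_\pi[\beta_1]=\E_\pi[\beta_1\beta_1^\top]$); hence $D\inv\sum_{d=1}^D r_d r_d^\top\to\SBTrue+\noise I_Q$ almost surely. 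Because $\noise>0$, the limit satisfies $\SBTrue+\noise I_Q\succeq\noise I_Q\succ0$ and so lies in the open set of invertible matrices, on which $A\mapsto\trace(A\inv)$ is continuous; the continuous-mapping theorem then yields $\trace[(D\inv\sum_{d=1}^D r_d r_d^\top)\inv]\to\trace[(\SBTrue+\noise I_Q)\inv]$ almost surely, which is the claim.

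The main obstacle is this last step, since matrix inversion is continuous only away from singular matrices: one must ensure both that the limit is invertible and that the sample averages are invertible for all large $D$. Strict positivity of $\noise$ resolves this, bounding the smallest eigenvalue of the limit away from zero and, via almost-sure convergence of the averages, forcing them into a neighborhood of positive-definite matrices eventually. A secondary caveat worth stating is the centering assumption: were $\pi$ not mean-zero, the limit would instead be $\trace[(\SBTrue+\E_\pi[\beta_1]\E_\pi[\beta_1]^\top+\noise I_Q)\inv]$, so the stated form genuinely relies on the zero-mean prior.
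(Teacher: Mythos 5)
Your proposal is correct, and its skeleton matches the paper's: both reduce the claim to $D\|\betaLS(\data_D)^\dagger\|_F^2 = \trace[(D\inv\betaLS(\data_D)^\top\betaLS(\data_D))\inv]$, establish almost-sure convergence of the normalized Gram matrix to $\SBTrue + \noise I_Q$, and pass the limit through $A\mapsto\trace(A\inv)$. Where you genuinely differ is the law-of-large-numbers step. The paper (\Cref{lemma:strong_convergence_of_sigma}) argues that each entry of the Gram matrix is an average of i.i.d.\ products of Gaussian random variables, hence sub-exponential, and then combines a Bernstein-type tail bound with Borel--Cantelli. You instead invoke the classical Kolmogorov SLLN entrywise, which needs only $\E|r_1^q r_1^{q'}| < \infty$ --- guaranteed by Cauchy--Schwarz and the finiteness of $\SBTrue$. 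Your route is both more elementary and strictly more general: the paper's Gaussianity claim for the entries of $\betaLS$ holds only when the prior $\pi$ is itself Gaussian, which \Cref{lemma:gain} does not assume (it assumes only finite $\Var_\pi[\beta_1]$), so the concentration argument implicitly narrows the scope of the lemma, while your SLLN argument covers any $\pi$ with finite second moments. Two further points in your favor: you explicitly justify the continuity step (the limit $\SBTrue + \noise I_Q \succeq \noise I_Q \succ 0$ lies in the open set of invertible matrices, and the sample Gram matrices eventually do as well), which the paper passes over silently, and you flag that both arguments rely on the mean-zero convention $\E_\pi[\beta_1] = 0$ from \Cref{sec:exch_cov} --- without it the limit would involve $\E_\pi[\beta_1]\E_\pi[\beta_1]^\top$ as well, a dependence the paper leaves implicit.
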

\begin{proof}
Recall that $\|\betaLS(\data_D)^\dagger\|_F^2  = \trace[(\betaLS(\data_D)^\top\betaLS(\data_D))\inv].$
As such, we may write
$D \|\betaLS(\data_D)^\dagger\|_F^2  = \trace[(D\inv \betaLS(\data_D)^\top\betaLS(\data_D))\inv].$
By \Cref{lemma:strong_convergence_of_sigma} $D\inv \betaLS(\data_D)^\top\betaLS(\data_D)\overset{a.s.}{\rightarrow} \SBTrue +\noise I_Q,$
and so we can see that $D \|\betaLS(\data_D)^\dagger\|_F^2  \overset{a.s.}{\rightarrow} \trace[(\SBTrue +\noise I_Q)\inv]$ as desired.
\end{proof}

\begin{lemma}\label{lemma:strong_convergence_of_sigma}
Under the conditions of \Cref{lemma:gain}
$\lim_{D\rightarrow\infty} D\inv \betaLS(\data_D)^\top \betaLS(\data_D) = \SBTrue +\noise I_Q$ almost surely.
\end{lemma}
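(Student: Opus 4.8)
The plan is to reduce the claim to a strong law of large numbers (SLLN) for the i.i.d.\ sequence of rank-one matrices formed by the rows of $\betaLS$. Under \Cref{condition:orthogonal_design}, as established in \Cref{sec:related_work_normal_means}, for each $D$ we may write $\betaLS(\data_D) = \beta + \epsilon$, where $\epsilon$ is a $D\times Q$ matrix whose entries are i.i.d.\ $\mathcal{N}(0,\noise)$ and independent of $\beta$. Writing $\beta_{\mathrm{LS},d}$ for the $d$th row of $\betaLS(\data_D)$ (the least squares estimate of the covariate-effect vector $\beta_d$), I would first record the identity
\[
D\inv \betaLS(\data_D)^\top \betaLS(\data_D) = D\inv \sum_{d=1}^D \beta_{\mathrm{LS},d}\,\beta_{\mathrm{LS},d}^\top,
\]
so that the quantity of interest is an empirical average of outer products of i.i.d.\ vectors.

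Next I would exhibit the i.i.d.\ structure and take the limit. Because $\beta_d\overset{i.i.d.}{\sim}\pi$ while the noise rows $\epsilon_d$ are i.i.d.\ $\mathcal{N}(0,\noise I_Q)$ and independent of $\beta$, the rows $\beta_{\mathrm{LS},d}=\beta_d+\epsilon_d$ form an i.i.d.\ sequence whose common second-moment matrix is $\E[\beta_{\mathrm{LS},d}\beta_{\mathrm{LS},d}^\top] = \E_\pi[\beta_1\beta_1^\top] + \noise I_Q = \SBTrue + \noise I_Q$; here the cross terms vanish by independence and the mean-zero of $\epsilon$, and $\E_\pi[\beta_1\beta_1^\top]=\SBTrue$ uses that $\pi$ has mean zero. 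Applying the SLLN entrywise to the $Q^2$ coordinate processes $\{(\beta_{\mathrm{LS},d})_q (\beta_{\mathrm{LS},d})_{q'}\}_{d=1}^\infty$ then yields $D\inv \sum_{d=1}^D \beta_{\mathrm{LS},d}\beta_{\mathrm{LS},d}^\top \to \SBTrue+\noise I_Q$ almost surely, which is the claim.

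The substantive checks are the integrability hypothesis of the SLLN and the consistency of the i.i.d.\ sequence across the growing $D$. For the former, Cauchy--Schwarz and the assumed finiteness of $\SBTrue$ give $\E|(\beta_{\mathrm{LS},d})_q(\beta_{\mathrm{LS},d})_{q'}| \le \sqrt{(\SBTrue_{q,q}+\noise)(\SBTrue_{q',q'}+\noise)} < \infty$, so the common mean is finite and the strong law applies coordinatewise. For the latter, I would invoke the single-probability-space construction from the proof of \Cref{lemma:gain}: under the orthogonal design the least squares coordinates decouple across covariates, so enlarging $D$ merely appends new i.i.d.\ rows without altering the joint law of the earlier ones, legitimizing a single fixed sequence $\{\beta_{\mathrm{LS},d}\}_{d=1}^\infty$ to which the strong law is applied. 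The main obstacle is more bookkeeping than difficulty: cleanly realizing all $\betaLS(\data_D)$ on one probability space through a common i.i.d.\ row sequence, and confirming that the mean-zero assumption on $\pi$ is precisely what supplies the clean limit $\SBTrue+\noise I_Q$ (a nonzero mean would contribute an extra rank-one term $\E_\pi[\beta_1]\,\E_\pi[\beta_1]^\top$).
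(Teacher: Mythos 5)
Your overall strategy --- writing $D\inv \betaLS(\data_D)^\top\betaLS(\data_D)$ as an empirical average of outer products of rows, computing the common second moment $\SBTrue + \noise I_Q$ (correctly flagging that the mean-zero prior is what makes the cross term vanish), and invoking a strong law --- is genuinely different from the paper's, and it founders on exactly the point you dismiss as ``bookkeeping.'' The rows of $\betaLS(\data_D)$ for different $D$ do \emph{not} form a single fixed i.i.d.\ sequence. Each $\data_D$ is a new dataset: the design matrices are $N^q\times D$, so passing from $D$ to $D+1$ replaces the designs and responses wholesale, and the least-squares noise is redrawn. Consequently $\beta_{\mathrm{LS},d}(\data_D)$ and $\beta_{\mathrm{LS},d}(\data_{D+1})$ agree in law but not as random variables: what you actually have is a row-wise i.i.d.\ \emph{triangular array}, to which Kolmogorov's SLLN does not apply. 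The paper's remark about placing everything in ``a single probability space'' (proof of \Cref{lemma:gain}) does not supply the nested coupling you need; it says nothing about how the noise in $\data_D$ and $\data_{D+1}$ are related. Nor is this a pedantic worry: for row-wise i.i.d.\ triangular arrays whose row averages have only Chebyshev-rate (non-summable) deviation probabilities, almost sure convergence can genuinely fail when rows are independent across $D$, so some argument beyond the vanilla SLLN is required for the noise-dependent part.

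This is precisely why the paper's proof takes a different route: each entry of $C^{(D)} := D\inv\betaLS(\data_D)^\top\betaLS(\data_D)$ is an average of $D$ i.i.d.\ products of Gaussian variables, hence sub-exponential, so a Bernstein-type bound gives $\pr\left[\left|C^{(D)}_{q,q'}-\E[C^{(D)}_{q,q'}]\right|\ge b\right]\le 2\exp\{-Db^2/(2\nu^2)\}$; these tails are summable in $D$, and the first Borel--Cantelli lemma then yields almost sure convergence \emph{regardless} of how the datasets are coupled across $D$ --- which is exactly what the triangular-array structure demands. There are two ways to repair your argument. (a) Make your coupling explicit: realize a single infinite noise array and define every $\data_D$ from its first $D$ rows, then argue this is without loss for the lemma's downstream use, since \Cref{lemma:gain} only needs limits of Bayes risks, i.e.\ of per-$D$ expectations, which are insensitive to the coupling across $D$. (b) Split $C^{(D)} = D\inv\beta^\top\beta + D\inv\bigl(\beta^\top\epsilon^{(D)}+\epsilon^{(D)\top}\beta\bigr) + D\inv \epsilon^{(D)\top}\epsilon^{(D)}$: the SLLN legitimately handles the first term, because the parameters $\{\beta_d\}_{d=1}^\infty$ \emph{are} a fixed sequence under \Cref{definition:gain}, but for the two noise-bearing terms you need summable tail bounds --- at which point you will have rederived the paper's concentration-plus-Borel--Cantelli argument.
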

\begin{proof}
It suffices to show strong convergence element wise, as this implies strong convergence in all other relevant norms.
For convenience, let $C^{(D)} := D\inv \betaLS(\data_D)^\top \betaLS(\data_D).$
Note that we may write each entry $C_{q,q^\prime}^{(D)} = \sum_{d=1}^D D\inv \betaLS(\data_D)^q_d \betaLS(\data_D)_d^{q^\prime}$ as a sum of $D$ i.i.d.\ terms. 
Notably, each term $\betaLS(\data_D)^q_d \cdot \betaLS(\data_D)_d^{q^\prime}$ is a product of two Gaussian random variables and is therefore sub-exponential with some non-negative parameters $(\nu, \alpha)$ (see e.g.\ \citet[Definition 2.7]{wainwright2019high}).
As a result, $C^{(D)}$ is then sub-exponential with parameters $(D\nsqrt \nu, D\inv \alpha).$
Therefore, for any constant $b$ satisfying $0<b<\nu^2 / \alpha,$ by \citet[Proposition 2.9]{wainwright2019high} we have that 
$$
\pr\left[\left|C^{(D)}_{q,q^\prime} - \E[C^{(D)}_{q,q^\prime}]\right|\ge b\right] \le 2\exp\{-\frac{D}{2}b^2/\nu^2\}.
$$
This rapid, exponential decay in tail probability with $D$ implies that for small $b,$
$$
\sum_{D=1}^\infty \pr\left[\left|C^{(D)}_{q,q^\prime} - \E[C^{(D)}_{q,q^\prime}]\right|\ge b\right] \le \infty.
$$
Therefore, by the Borel-Cantelli lemma we see that $|C^{(D)}_{q,q^\prime}-\E[C^{(D)}_{q,q^\prime}]|\overset{a.s.}{\rightarrow} 0.$
Since $\E[C^{(D)}] = \SBTrue +\noise I_Q$ for each $D,$ this implies that 
$\lim_{D\rightarrow\infty} D\inv \betaLS(\data_D)^\top \betaLS(\data_D) = \SBTrue +\noise I_Q$ almost surely.
\end{proof}

\subsection{Further discussion of \Cref{theorem:asymptotic_gain}}\label{sec:asymptotic_gain_supp_discussion}
We here give further detail related to the proof of \Cref{theorem:asymptotic_gain}
and introduce additional notation used in the remainder of the section.
Recall from \Cref{lemma:gain} that 
$\gain =  \noise Q\inv[\sum_{q=1}^Q (\lambda_q + \noise )\inv  - \sum_{q=1}^Q (\SBTrue_{q,q}+\noise )\inv].$
For convenience, we will use $\ell := \diag(\SBTrue)^\downarrow$ to denote the $Q$-vector of diagonal entries of $\SBTrue$
sorted in descending order.
Similarly, we take $\lambda$ to be the $Q$-vector of eigenvalues of $\SBTrue,$ again sorted in descending order.
Next, it is useful to rewrite 
$$
\gain=\noise Q\inv\left[\vec f(\lambda)- \vec f(\ell) \right]
$$
where $\vec f(x):=\sum_{q=1}^Q f(x_q) = \sum_{q=1}^Q (\noise +x_q)\inv$ (where $f(x):=(\noise +x)\inv$).

The key theoretical tool used in establishing \Cref{theorem:asymptotic_gain} is the Schur-Horn theorem.
We state this result below, adapted from \citet[Theorem 5]{horn1954doubly}.
The Schur-Horn theorem guarantees that $\lambda$ majorizes $\ell$.
In particular, an $N$-vector $a$ is said to majorize a second $N$-vector $b$ if $\sum_{n=1}^N a_n = \sum_{n=1}^N b_n$ and 
for all $N^\prime \le N,$
$$\sum_{n=1}^{N^\prime} a_n^\downarrow \ge \sum_{n=1}^{N^\prime} b^\downarrow_n,$$
where for a vector $v,$ we use $v^\downarrow$ to denote the vector with the same components as $v,$ sorted in descending order.
As captured by \Cref{theorem:asymptotic_gain}, we can therefore see that $\gain$ is non-negative for any $\SBTrue$ by observing that $\vec f$ is Schur-convex (since $f$ is convex).

\begin{theorem}[Schur-Horn]\label{theorem:schur_horm}
A vector $\ell$ can be the diagonal of a Hermitian matrix with (repeated) eigenvalues $\lambda$ if and only if $\lambda$ majorizes $\ell.$
\end{theorem}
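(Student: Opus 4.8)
The statement is an ``if and only if,'' so the plan is to treat the two directions separately. Throughout write $\lambda$ and $\ell$ for the vectors of eigenvalues and diagonal entries in $\R^Q$, sorted in descending order. Since $\SBTrue$ is real symmetric, it suffices to exhibit a real symmetric witness, so I would carry out the construction with real orthogonal conjugations; the general Hermitian case is identical after replacing rotations by their unitary analogues.

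\emph{Necessity.} Suppose $A$ is Hermitian with eigenvalues $\lambda$ and diagonal $\ell$. Writing the spectral decomposition $A = U\,\diag(\lambda)\,U^{*}$ gives $\ell_q = A_{qq} = \sum_r |U_{qr}|^2 \lambda_r$, i.e.\ $\ell = S\lambda$ where $S_{qr} := |U_{qr}|^2$ is doubly stochastic because $U$ is unitary. By the Hardy--Littlewood--P\'olya characterization, $\ell = S\lambda$ with $S$ doubly stochastic is exactly equivalent to $\lambda$ majorizing $\ell$; alternatively I would derive the partial-sum inequalities directly from the Ky Fan variational principle, noting that $\sum_{q\le k}\lambda^{\downarrow}_q$ is the maximum of $\sum_{i\le k}\langle v_i, A v_i\rangle$ over orthonormal $k$-frames and choosing the $v_i$ to be the standard basis vectors at the $k$ largest diagonal positions, with equality of the full sums coming from $\trace A$. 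This direction is routine.

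\emph{Sufficiency.} This is the substantive direction, and the plan is an explicit construction by a finite chain of two-dimensional (Givens) rotations. First I would invoke the classical fact that $\lambda$ majorizing $\ell$ implies the existence of a finite chain $\lambda = d^{(0)} \succ d^{(1)} \succ \cdots \succ d^{(K)} = \ell$ in which each step is a single T-transform: a Robin--Hood transfer replacing some coordinate pair $(x_i, x_j)$ with $x_i \ge x_j$ by $(x_i - \tau,\, x_j + \tau)$ for a suitable $0 \le \tau \le (x_i - x_j)/2$, fixing all other coordinates. Then I would realize each T-transform by conjugation. Starting from $A^{(0)} = \diag(\lambda)$, I would maintain the invariant that $A^{(k)}$ is Hermitian with eigenvalues $\lambda$ and diagonal $d^{(k)}$; given the $(i,j)$ T-transform, conjugate $A^{(k)}$ by a Givens rotation $G_k$ acting in the $(i,j)$-plane. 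Conjugation preserves the trace of the $2\times 2$ principal block on $\{i,j\}$, so the two affected diagonal entries always sum to $d^{(k)}_i + d^{(k)}_j$, while every other diagonal entry is untouched; as the rotation angle varies, the new $(i,i)$ entry sweeps the interval centered at $(d^{(k)}_i + d^{(k)}_j)/2$ with half-width $R = \sqrt{((d^{(k)}_i - d^{(k)}_j)/2)^2 + |A^{(k)}_{ij}|^2}$. Since $R \ge |d^{(k)}_i - d^{(k)}_j|/2$, this interval contains $[d^{(k)}_j, d^{(k)}_i]$ and hence the target $d^{(k+1)}_i$, so the intermediate value theorem supplies an angle hitting it exactly while the $(j,j)$ entry absorbs the complementary change. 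Setting $A^{(k+1)} = G_k A^{(k)} G_k^{*}$ advances the diagonal to $d^{(k+1)}$ without disturbing the spectrum, and after $K$ steps $A^{(K)} = (G_{K-1}\cdots G_0)\,\diag(\lambda)\,(G_{K-1}\cdots G_0)^{*}$ is Hermitian with eigenvalues $\lambda$ and diagonal $\ell$.

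The main obstacle lies entirely in the sufficiency construction: after the first rotation the off-diagonal entry $A^{(k)}_{ij}$ is generically nonzero, so a Givens rotation no longer acts as a clean convex average of the two diagonal entries, and one must check the desired Robin--Hood target remains reachable. The observation that rescues the argument is that the reachable interval only \emph{widens} when $A^{(k)}_{ij}\neq 0$ (its half-width $R$ is at least $|d^{(k)}_i - d^{(k)}_j|/2$), so it still contains the target and the step goes through. The one remaining ingredient, the T-transform decomposition of majorization, is classical (Hardy--Littlewood--P\'olya) and I would cite it; proving it from scratch is a short induction that reduces at each stage the number of coordinates on which $d^{(k)}$ and $\ell$ disagree.
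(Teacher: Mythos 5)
Your proposal is correct, but note that the paper does not prove this statement at all: it imports it as a classical result, ``adapted from \citet[Theorem 5]{horn1954doubly},'' so any proof is by definition a different route. What you supply is the standard self-contained argument. Your necessity direction is Schur's original 1923 observation --- $\ell = S\lambda$ with $S_{qr} = |U_{qr}|^2$ doubly stochastic, then either the Hardy--Littlewood--P\'olya characterization or Ky Fan's variational principle --- and it is sound; the Ky Fan route is arguably preferable here since it avoids invoking HLP as a black box. Your sufficiency direction is essentially the Chan--Li constructive proof of Horn's converse, and you correctly identify and dispatch the one genuine pitfall: after the first step the off-diagonal entry $A^{(k)}_{ij}$ is generically nonzero, and your computation that the attainable $(i,i)$ values sweep an interval centered at $(d^{(k)}_i + d^{(k)}_j)/2$ of half-width $R = \bigl(((d^{(k)}_i - d^{(k)}_j)/2)^2 + |A^{(k)}_{ij}|^2\bigr)^{1/2} \ge |d^{(k)}_i - d^{(k)}_j|/2$ is exactly right, as is the observation that conjugation in the $(i,j)$-plane fixes all other diagonal entries and the block trace. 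Your restriction $\tau \le (x_i - x_j)/2$ is consistent with the standard Marshall--Olkin T-transform chain (where the transferred pair remains ordered), so the cited decomposition meshes with your construction. One contextual remark worth knowing: the paper only ever uses the \emph{necessity} half (eigenvalues majorize the diagonal), in \Cref{theorem:asymptotic_gain} and \Cref{lemma:schur_and_majorization}; your sufficiency construction, which is the hard half of Schur--Horn, is not actually needed anywhere in the paper's arguments. So the citation buys the paper brevity at no cost, while your proof buys self-containedness --- at the price of proving considerably more than the application requires.
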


\subsection{Proof of \Cref{thm:gain_bounds}}

We here show that $\gain$ is upper bounded as
\(
    \gain &\le \noise  Q\inv f^{\prime\prime}(\lambdaMin) \|\lambda\|_2 \|\lambda- \ell  \|_2  \\
                        &= 2 \noise  Q\inv \|\lambda\|_2 \| \lambda-\ell  \|_2/(\noise +\lambdaMin)^3,
\)
and lower bounded as 
\(
    \gain &\ge \frac{1}{2}\noise  Q\inv f^{\prime\prime}(\lambdaMax) \| \lambda-\ell  \|^2  \\
                        &=  \noise Q\inv \| \lambda -\ell  \|^2/(\noise +\lambdaMax)^3,
\)
where $f^{\prime\prime}(x):= \frac{d^2}{dx^2} f(x)$ where $f$ is as defined in \Cref{sec:asymptotic_gain_supp_discussion}.

We obtain both bounds with quadratic approximations to $f$.
In particular, we define $g_\alpha$ as the $2^{\text{nd}}$ order Taylor approximation of $f$ expanded at $\alpha$,
$$g_\alpha(x) := f(\alpha) + f^\prime(\alpha) (x- \alpha) + \frac{1}{2}f^{\prime\prime}(\alpha)(x-\alpha)^2, $$
and note that by \Cref{lemma:quadratic_approx}
\[\label{eqn:quadratic_bounds}
    \vec g_{\lambdaMax}(\lambda)-\vec g_{\lambdaMax}(\ell) 
    \le \vec f(\lambda)-\vec f(\ell) 
    \le \vec g_{\lambdaMin}(\lambda)-\vec g_{\lambdaMin}(\ell) ,
\]
where $\vec g_\alpha(x) := \sum_{q=1}^Q g_\alpha(x_q)$.

\paragraph{Proof of upper bound.}
We obtain the desired upper bound as follows.

\Cref{eqn:quadratic_bounds} and \Cref{lemma:quadratic_gain} allow us to see
\begin{align}\label{eqn:upper_bound_pt1}
\begin{split}
    \gain &\le  \noise  Q\inv \left[\vec g_{\lambdaMin} (\lambda)- \vec g_{\lambdaMin}(\ell) \right]\\
          &= \frac{1}{2}\noise  Q\inv f^{\prime\prime}(\lambdaMin) (\|\lambda\|^2-\| \ell\|^2 ).
\end{split}
\end{align}
Since $f^{\prime\prime}$ is positive on $\R_+,$ the problem reduces to upper bounding $\|\lambda\|^2 - \|\ell\|^2$.

In particular, we find
\begin{alignat}{3}\label{eqn:upper_bound_on_diff_in_norms}
    \|\lambda\|^2 - \|\ell\|^2 &= \langle \lambda + \ell, \lambda-\ell\rangle  &&\\
                                &\le \|\lambda + \ell\| \|\lambda - \ell\| &&\text{// by Cauchy-Schwarz}\\
                                &= \sqrt{\|\lambda\|^2 +2\langle\lambda, \ell\rangle + \|\ell\|^2}\, \|\lambda - \ell\| && \\
&\le \sqrt{\|\lambda\|^2 + 2 \|\lambda \| \|\ell\| + \|\ell\|^2}\, \|\lambda - \ell\| \,\,\,\,
&& \text{// by Cauchy-Schwarz}\\
&\le 2\|\lambda\| \|\lambda - \ell\|
    &&\text{// Since }\|\lambda\| \ge \|\ell\|,
\end{alignat}
where we can see that $\|\lambda\| \ge \|\ell\|$ by noting that $\|\cdot\|^2$ is Schur convex, and again appealing to the Schur-Horn Theorem.
The desired upper bound obtains by combining \Cref{eqn:upper_bound_pt1,eqn:upper_bound_on_diff_in_norms}.

\paragraph{Proof of lower bound.}
We begin as we did for the upper bound.
\Cref{eqn:quadratic_bounds} and \Cref{lemma:quadratic_gain} allow us to see
\begin{align}\label{eqn:lower_bound_pt1}
\begin{split}
    \gain &\ge  \noise  Q\inv \left[\vec g_{\lambdaMax} (\lambda)- \vec g_{\lambdaMax}(\ell)\right] \\
          &= \frac{1}{2}\noise  Q\inv f^{\prime\prime}(\lambdaMax) (\|\lambda\|^2-\| \ell\|^2 ).
\end{split}
\end{align}
Since, again, $f^{\prime\prime}$ is positive on $\R_+,$ the problem reduces to lower bounding $\|\lambda\|^2 - \|\ell\|^2$.

In particular, we would like to show 
$\|\lambda\|^2 - \|\ell\|^2 \ge \|\lambda-\ell\|^2 $.
We can arrive at this bound with a particular expansion of $\|\lambda - \ell\|^2$ and using \Cref{lemma:pos_inner_product},
which again leverages the fact that $\lambda$ majorizes $\ell$.
Specifically, we write
\begin{align}\label{eqn:lower_bound_on_diff_in_norms}
\begin{split}
    \| \lambda -\ell \|^2 &= \langle \lambda - \ell, \lambda\rangle - \langle\lambda -\ell, \ell \rangle \\
                         &= \|\lambda \|^2 - \left[ \langle \lambda, \ell\rangle +  \langle\lambda -\ell, \ell \rangle \right]  \\
                         &= \|\lambda \|^2 - \|\ell\|^2  - \left[ \langle \lambda, \ell\rangle - \langle \ell, \ell \rangle +  \langle\lambda -\ell, \ell \rangle \right] \\
                         &= \|\lambda \|^2 - \|\ell\|^2  - 2\langle \lambda -\ell , \ell\rangle \\ 
                         &\le  \|\lambda \|^2 - \|\ell\|^2 
\end{split}
\end{align}
where the last line follows from \Cref{lemma:pos_inner_product}, which provides that $\langle \lambda -\ell, \ell \rangle\ge 0$ since, from the Schur-Horn theorem for any $Q^\prime \le Q\ \sum_{q=1}^{Q^\prime} \lambda_q - \ell_q \ge 0,$ and $\ell$ has non-negative, non-increasing entries.
We obtain the desired lower bound by combining \Cref{eqn:lower_bound_pt1,eqn:lower_bound_on_diff_in_norms}.

\begin{lemma}\label{lemma:quadratic_approx}
Let $\lambda$ and $\ell$ be $Q$-vectors of non-negative reals with non-increasing entries, and let $\lambda$ majorize $\ell.$
Consider $\vec f: \R^Q \rightarrow \R, x \mapsto \sum_{q=1}^Q f(x_q) = \sum_{q=1}^Q (\noise +x_q)\inv$ (where $f(v):=(\noise +v)\inv$)
for any $\noise>0,$
and define $g_\alpha$ to be the $2^{\text{nd}}$ order Taylor approximation of $f$ expanded at $\alpha$,
$$g_\alpha(x) := f(\alpha) + f^\prime(\alpha) (x- \alpha) + \frac{1}{2}f^{\prime\prime}(\alpha)(x-\alpha)^2.$$
Then 
$$
\vglow(\lambda)-\vglow(\ell) \le \vec f(\lambda)-\vec f(\ell) \le \vghigh(\lambda)-\vghigh(\ell) ,
$$
where $\vec g_\alpha(x) := \sum_{q=1}^Q g_\alpha(x_q)$ and $\lambdaMax=\lambda_1$ and $\lambdaMin=\lambda_Q$ are the largest and smallest entries of $\lambda,$ respectively.
\end{lemma}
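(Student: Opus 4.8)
The plan is to reduce the claim to Karamata's inequality (the Hardy--Littlewood--P\'olya majorization inequality, the same Schur-convexity device already used elsewhere in the paper) after subtracting a suitable quadratic from $f$. First I would record the relevant derivatives: with $f(v)=(\noise+v)\inv$ one has $f''(v)=2(\noise+v)^{-3}>0$ and $f'''(v)=-6(\noise+v)^{-4}<0$ on $\R_+$, so $f$ is convex while $f''$ is strictly decreasing. This monotonicity of $f''$ is exactly what will let me sandwich $f$ between the two quadratic approximations $\glow$ and $\ghigh$, whose second derivatives are the constants $f''(\lambdaMax)$ and $f''(\lambdaMin)$.

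Next I would verify that every coordinate of both $\lambda$ and $\ell$ lies in the interval $[\lambdaMin,\lambdaMax]=[\lambda_Q,\lambda_1]$. For $\lambda$ this is immediate from the descending ordering. For $\ell$, the majorization $\lambda \succ \ell$ gives $\ell_1 \le \lambda_1$ from the first partial-sum inequality, and combining the $(Q-1)$st partial-sum inequality with the equality of total sums gives $\ell_Q \ge \lambda_Q$; since $\ell$ is non-increasing, $\lambdaMin=\lambda_Q \le \ell_q \le \ell_1 \le \lambda_1 = \lambdaMax$ for every $q$. This containment is what makes Karamata applicable, as that inequality requires a definite curvature only on an interval containing all coordinates of both vectors.

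Then I would form the remainders $\chi := f-\glow$ and $\phi := f-\ghigh$. Because $g_\alpha''\equiv f''(\alpha)$, we have $\chi''=f''-f''(\lambdaMax)$ and $\phi''=f''-f''(\lambdaMin)$; as $f''$ is decreasing, $\chi''\ge 0$ and $\phi''\le 0$ throughout $[\lambdaMin,\lambdaMax]$, so $\chi$ is convex and $\phi$ is concave there. Applying Karamata's inequality to the convex $\chi$ gives $\sum_q \chi(\lambda_q)\ge \sum_q \chi(\ell_q)$, which rearranges to the lower bound $\vglow(\lambda)-\vglow(\ell)\le \vec f(\lambda)-\vec f(\ell)$; applying it to the concave $\phi$ reverses the inequality and yields the upper bound $\vec f(\lambda)-\vec f(\ell)\le \vghigh(\lambda)-\vghigh(\ell)$. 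I expect the only delicate point to be the bookkeeping around Karamata --- confirming that all coordinates share a common interval on which each remainder has a fixed sign of curvature, and correctly tracking the orientation of the inequality when passing between the convex and concave cases. Everything else is routine differentiation and rearrangement.
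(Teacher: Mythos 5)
Your proposal is correct, and it reaches the result by a genuinely different route than the paper. You reduce both bounds to Karamata's inequality: you note that $f''$ is strictly decreasing (since $f'''<0$ on $\R_+$), so the remainders $\chi = f-\glow$ and $\phi = f-\ghigh$ are respectively convex and concave on $[\lambdaMin,\lambdaMax]$, you verify that majorization forces every entry of $\ell$ into $[\lambdaMin,\lambdaMax]$ (via $\ell_1\le\lambda_1$ and $\ell_Q\ge\lambda_Q$), and then the two inequalities are immediate from Karamata applied to $\chi$ and $\phi$. The paper instead proves the needed inequality from scratch: it writes $\vec f(\lambda)-\vec f(\ell)-\bigl(\vglow(\lambda)-\vglow(\ell)\bigr)$ as an inner product $\langle \lambda-\ell, h\rangle$ of the majorization gap with the vector $h$ of divided differences of $f-\glow$, shows by explicit computation (expanding $h_q$ and integrating partial derivatives of an auxiliary two-variable function) that $h$ is non-increasing, and closes with an Abel-summation lemma (the paper's Lemma on positive inner products) stating that such an inner product is non-negative. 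That argument is, in effect, an inline proof of the special case of Karamata you invoke, so the two proofs share the same mathematical core but differ in execution. Your version buys brevity and transparency — the curvature check replaces a fairly delicate divided-difference computation, you handle the upper bound explicitly rather than by "follows similarly," and you make explicit the interval-containment step $\ell_q\in[\lambdaMin,\lambdaMax]$ that the paper uses only implicitly when it asserts the sign of the partial derivatives on that interval. What the paper's self-contained route buys is independence from any cited classical theorem and reusability: its Abel-summation lemma is invoked again directly in the proof of the gain bounds (Theorem on upper/lower bounds for the gain), so the machinery does double duty. One small caution if you write yours up: state Karamata with the exact hypotheses you use (both vectors sorted in descending order, equal sums, dominated partial sums, convexity on a common interval containing all entries), since the interval-containment step is precisely what licenses the application.
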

\begin{proof}
If there are indices $q$ for which $\lambda_q=\ell_q$, remove them (they do not affect $\vec f(\ell) - \vec f(\lambda)$).
If all are equal, $\lambda=d$ and so the result is trivial, otherwise we have $Q\ge 2$ entries with $\lambda_q \ne \ell_q$.

We begin with the lower bound; the upper bound follows similarly.
For this, it suffices to show
$\vec f(\lambda)- \vec f(\ell)- \left(\vec g_{\lambdaMax}(\lambda)-\vec g_{\lambdaMax}(\ell)\right) \ge 0$.

We first express this difference as an inner product
\(
\vec f(\lambda)-\vec f(\ell)- \left(\vglow(\lambda) - \vglow(\ell)\right) 
     &= \sum_{q=1}^Q \left[(f-\glow)(\lambda_q)-(f-\glow)(\ell_q)\right] \\
     &= \sum_{q=1}^Q (\lambda_q-\ell_q )\left[\frac{(f-\glow)(\lambda_q)-(f-\glow)(\ell_q)}{\lambda_q-\ell_q } \right]\\
     &\text{ // defining each }\, h_q := \frac{(f-\glow)(\lambda_q)-(f-\glow)(\ell_q)}{\lambda_q-\ell_q }\\
     &= \sum_{q=1}^Q (\lambda_q-\ell_q )h_q\\
     &= \langle \lambda-\ell , h \rangle
\)
where $h=[h_1, h_2, \dots, h_Q]^\top.$

We will complete our proof by leveraging \Cref{lemma:pos_inner_product}, which provides that $\langle a, b \rangle \ge 0$ 
for any $Q$-vector $a$ satisfying $\sum_{q=1}^Q a_q=0$ and $\sum_{q=1}^{Q^\prime} a_q \ge 0$ for every $Q^\prime \le Q,$
and $Q$-vector $b$ with non-increasing entries.

It therefore remains only to show that $\lambda - \ell$ and $h$ satisfy the conditions of \Cref{lemma:pos_inner_product}.
Since the entries of $\lambda$ and $\ell$ are taken to be in descending order, 
the condition that $\sum_{q=1}^{Q^\prime} (\lambda - \ell)_q \ge 0$ for any $Q^\prime \le Q,$ follows from the Schur-Horn theorem.
Likewise, this theorem provides that $\sum_{q=1}^Q \lambda_q=\sum_{q=1}^Q \ell_q,$ and therefore that 
$\sum_{q=1}^Q (\lambda- \ell)_q=0,$
so that $\lambda- \ell$ meets condition (2) of the lemma.

We next confirm that $h$ has non-increasing entries by considering an expansion of the expressions for each $h_q$.
In particular, observe that
\(
    h_q &= \frac{(f-\glow)(\lambda_q)-(f-\glow)(\ell_q)}{\lambda_q-\ell_q }\\
        &= (\lambda_q-\ell_q )\inv\left\{f(\lambda_q)- f(\ell_q) - \left[\glow(\lambda_q)- \glow(\ell_q)\right]\right\}\\
        &= (\lambda_q-\ell_q )\inv\big\{\frac{(\noise+\ell_q)-(\noise+\lambda_q) }{(\noise+\ell_q)(\noise+\lambda_q)} - \\
        & \left[(\lambda_q-\ell_q ) f^\prime(\lambdaMax) + \frac{1}{2}  ((\lambda_q-\lambdaMax )^2 -(\ell_q-\lambdaMax )^2)f^{\prime\prime}(\lambdaMax) \right]\big\}\\
        &=  (\noise+\lambdaMax)^{-2} - (\noise+\ell_q)\inv(\noise+\lambda_q)\inv - \frac{1}{2}(\lambda_q-\ell_q )\inv (\noise+\lambdaMax)^{-3} \left[\lambda_q^2-\ell_q^2  - 2 \lambdaMax(\lambda_q-\ell_q )\right]\\
        &=  (\noise+\lambdaMax)^{-2} - (\noise+\ell_q)\inv(\noise+\lambda_q)\inv - \frac{1}{2}(\noise+\lambdaMax)^{-3} \left[\lambda_q+ \ell_q  - 2 \lambdaMax\right].
\)

Next define $\phi(a,b)= (\noise+\lambdaMax)^{-2} - (\noise+a)\inv(\noise+b)\inv - \frac{1}{2}(\noise+\lambdaMax)^{-3} \left[b+ a- 2 \lambdaMax\right],$
so that for each $q, h_q = \phi(\ell_q,\lambda_q).$
Now, for $q^\prime > q,$ we may write
\begin{align}\label{eqn:hq_diff_by_partials}
\begin{split}
h_{q^\prime} - h_{q} 
&= \phi(\ell_{q^\prime}, \lambda_{q^\prime}) -  \phi(\ell_{q}, \lambda_{q})\\
&= \int_{\ell_q}^{\ell_{q^\prime}} \frac{\partial}{\partial a}\phi(a, \lambda_q) da+ 
\int_{\lambda_q}^{\lambda_{q^\prime}} \frac{\partial}{\partial b}\phi(\ell_{q^\prime}, b)db.
\end{split}
\end{align}
Next note that 
$$\frac{\partial}{\partial a}\phi(a, b) = (\noise+a)^{-2}(\noise+b)\inv  - \frac{1}{2}(\noise+\lambdaMax)^{-3}$$
 and
$$\frac{\partial}{\partial b}\phi(a, b) = (\noise+a)\inv(\noise+b)^{-2}  - \frac{1}{2}(\noise+\lambdaMax)^{-3}$$
from which we can see that 
$\frac{\partial}{\partial a}\phi(a, b) $ and $\frac{\partial}{\partial b}\phi(a, b) $ 
are positive for $a, b \in [\lambdaMin, \lambdaMax].$
Accordingly, \Cref{eqn:hq_diff_by_partials} provides that $h_{q^\prime}-h_q \le 0,$
since $\ell_{q^\prime} \le \ell_q$ and $\lambda_{q^\prime} \le \lambda_q$ for $q^\prime >q,$
because the entries of $\ell$ and $\lambda$ are non-increasing.
Therefore $h_{q^\prime} \le h_q,$ completing the proof.
\end{proof}

\begin{lemma}\label{lemma:quadratic_gain}
Consider the quadratic function $\vec h(x) = \sum_{q=1}^Q (ax_q^2 + bx_q +c)$.
Let $\lambda, \ell \in \R^Q$ satisfy $\sum_{q=1}^Q \lambda_q = \sum_{q=1}^Q \ell_q$.
Then
$$
\vec h(\ell) - \vec h(\lambda) = a(\|\ell\|^2 - \|\lambda\|^2).
$$
\end{lemma}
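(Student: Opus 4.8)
The plan is to prove this by direct expansion, separating the quadratic, linear, and constant contributions of the quadratic form and then invoking the equal-sum hypothesis. First I would note that for any $x \in \R^Q$,
\[
\vec h(x) = a\sum_{q=1}^Q x_q^2 + b\sum_{q=1}^Q x_q + Qc = a\|x\|^2 + b\sum_{q=1}^Q x_q + Qc,
\]
which simply collects the three types of terms and recognizes $\sum_q x_q^2$ as $\|x\|^2$. Applying this identity to both $x=\ell$ and $x=\lambda$ and subtracting, I would observe that the constant contributions $Qc$ cancel exactly, so that
\[
\vec h(\ell) - \vec h(\lambda) = a\left(\|\ell\|^2 - \|\lambda\|^2\right) + b\left(\sum_{q=1}^Q \ell_q - \sum_{q=1}^Q \lambda_q\right).
\]

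The final step is to kill the linear term: by the hypothesis $\sum_{q=1}^Q \lambda_q = \sum_{q=1}^Q \ell_q$, the bracketed difference is zero, leaving precisely $a(\|\ell\|^2 - \|\lambda\|^2)$, as claimed. There is no genuine obstacle here, since the lemma is a one-line algebraic identity; the only substantive point is that the equal-sum condition is exactly what is needed to annihilate the $b$-term, making the result an exact equality (with no dependence on $b$ or $c$) rather than an approximation. This is the property that \Cref{thm:gain_bounds} exploits when it applies the lemma to the second-order Taylor approximations $\vec g_\alpha$, whose quadratic form has the shared-sum inputs $\lambda$ and $\ell$ guaranteed by the Schur-Horn majorization.
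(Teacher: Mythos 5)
Your proof is correct and matches the paper's own argument: both are direct algebraic expansions in which the constant terms cancel and the equal-sum hypothesis annihilates the linear ($b$) term, leaving $a(\|\ell\|^2 - \|\lambda\|^2)$. No differences worth noting.
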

\begin{proof}
The result follows from the simple algebraic rearrangement below,
\(
\vec h(\ell) - \vec h(\lambda) &= \sum_{q=1}^Q  (a\ell_q^2 + b\ell_q +c) - (a\lambda_q^2 + b \lambda_q + c)\\
&= \sum_{q=1}^Q  a\ell_q^2 - a\lambda_q^2\\
&= a (\|\ell\|^2 - \|\lambda\|^2).
\)
\end{proof}

\begin{lemma}\label{lemma:pos_inner_product}
Let $x$ be a $Q$-vector satisfying for each $Q^\prime \le Q, \, \sum_{q=1}^{Q^\prime}x_q \ge 0$, 
and let $y$ be a $Q$-vector with non-increasing entries.
If additionally either 
(1) $y$ has non-negative entries or
(2) $\sum_{q=1}^Q x_q=0$
then  
$\langle x, y \rangle \ge y_Q \sum_{q=1}^Q x_q \ge 0.$
\end{lemma}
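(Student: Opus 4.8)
The plan is to prove this by summation by parts (Abel summation), after which both inequalities follow directly from the hypotheses. First I would introduce the partial sums $S_{k} := \sum_{q=1}^{k} x_q$ for $k = 0, 1, \dots, Q$, with the convention $S_0 = 0$, so that $x_q = S_q - S_{q-1}$. Observe that the hypothesis on $x$ is then precisely that $S_k \ge 0$ for every $k \ge 1$, and in particular $S_Q = \sum_{q=1}^Q x_q \ge 0$.

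Next, I would substitute $x_q = S_q - S_{q-1}$ into $\langle x, y\rangle = \sum_{q=1}^Q x_q y_q$ and reindex the resulting sum (using $S_0 = 0$ to discard the boundary term at the lower end) to obtain the Abel summation identity
\[
\langle x, y\rangle = y_Q \sum_{q=1}^Q x_q + \sum_{q=1}^{Q-1} S_q (y_q - y_{q+1}).
\]
The first claimed inequality, $\langle x, y\rangle \ge y_Q \sum_{q=1}^Q x_q$, is then immediate: each summand in the remaining sum is a product of $S_q \ge 0$ (by the hypothesis on the partial sums of $x$) and $y_q - y_{q+1} \ge 0$ (because $y$ has non-increasing entries), so the entire sum is non-negative and may be dropped.

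For the second inequality, $y_Q \sum_{q=1}^Q x_q \ge 0$, I would split into the two cases of the hypothesis. Under case (1), $y_Q \ge 0$ since $y$ has non-negative entries, and $\sum_{q=1}^Q x_q = S_Q \ge 0$ from above, so the product is non-negative. Under case (2), $\sum_{q=1}^Q x_q = 0$, so the product is exactly zero and the bound holds trivially. I expect no real obstacle in this argument; the only point requiring minor care is the index bookkeeping in the reindexing step, where the convention $S_0 = 0$ eliminates what would otherwise be a stray boundary term, and the recognition that the two stated side-conditions are exactly what is needed to sign the boundary product $y_Q S_Q$.
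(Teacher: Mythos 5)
Your proof is correct, and it takes a genuinely different route from the paper's. The paper proves case (1) by induction on $Q$ — peeling off the last coordinate, applying the inductive hypothesis to $x_{1:Q-1}, y_{1:Q-1}$, and using $y_{Q-1} \ge y_Q$ together with $\sum_{q=1}^{Q-1} x_q \ge 0$ — and then handles case (2) by a shift trick: replacing $y$ with $y' = y - \min_q y_q \ones{Q}$, which has non-negative entries and satisfies $\langle x, y \rangle = \langle x, y' \rangle$ because $\langle x, \ones{Q} \rangle = 0$ under (2). Your Abel-summation identity
\[
\langle x, y\rangle = y_Q \sum_{q=1}^Q x_q + \sum_{q=1}^{Q-1} S_q \left(y_q - y_{q+1}\right)
\]
replaces the induction with a single closed-form decomposition, and it buys something the paper's argument obscures: the first inequality $\langle x, y\rangle \ge y_Q \sum_{q=1}^Q x_q$ holds under the partial-sum hypothesis and monotonicity of $y$ alone, with neither side condition (1) nor (2) needed; those conditions enter only to sign the boundary term $y_Q S_Q$. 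This cleanly separates the roles of the hypotheses, whereas the paper's induction uses condition (1) throughout and needs the separate reduction for (2). The paper's route, in exchange, avoids the reindexing bookkeeping entirely and stays at the level of elementary coordinate manipulations. Both proofs are complete and of comparable length; yours is arguably the more transparent and slightly more general statement of why the lemma is true.
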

\begin{proof}
We first prove the lemma under condition (1) by induction.
The base case of $Q=1$ is trivial; $\langle x, y \rangle = x_1 y_1$ and under (1) $x_1$ and $y_1$ are non-negative and under (2) $x_1=0.$

Assume the result holds for $Q-1$.
Then 
\begin{alignat}{3}
    \langle x, y \rangle &=y_Q x_Q +  \langle x_{1:Q-1}, y_{1:Q-1} \rangle  &&\\
                     &\ge y_Q x_Q +  y_{Q-1}\sum_{q=1}^{Q-1}x_q 
                     && \text{// by the inductive hypothesis} \\ 
&\ge y_Q x_Q +  y_{Q}\sum_{q=1}^{Q-1}x_q  \,\,\,\, 
&& \text{// since }y_{Q-1} \ge y_Q \text{ and } \sum_{q=1}^{Q-1}x_q\ge 0 \\
&= y_Q \sum_{q=1}^{Q}x_q \ge 0 
    && \text{// since }y_Q \text{ and } \sum_{q=1}^{Q}x_q \text{ are non-negative.}
\end{alignat}
This provides the desired inductive step, completing the proof under condition (1).

Under condition (2), consider $y^\prime = y - \min_{q} y_q \ones{Q}.$
Then
\(
\langle x, y \rangle 
&= \langle x, y^\prime \rangle  +  \min_{q} y_q   \langle x, \ones{Q} \rangle \\
&= \langle x, y^\prime \rangle. 
\)
Since $y^\prime$ now has non-negative entries, condition (1) is satisfied and the result follows.
\end{proof}

\subsection{Proof of \Cref{cor:gain_SNR}}
We establish the corollary with a brief sequence of upper bounds following from our initial upper bound in 
\Cref{theorem:asymptotic_gain}.
In particular, the theorem provides
$$\gain \le  2\noise Q\inv \|\lambda^\downarrow\| \| \ell^\downarrow - \lambda^\downarrow \|/(\noise + \lambdaMin)^3.$$

We begin by simplifying this upper bound.
As a first step, note that 
\( 
\| \ell^\downarrow - \lambda^\downarrow \|^2 
&= \|\ell\|^2 + \|\lambda\|^2 - \langle \ell^\downarrow, \lambda^\downarrow\rangle\\
&\le 2\|\lambda\|^2 .
\)

As such, we can simplify our upper bound as
\begin{align}\label{eqn:general_SNR_bound}
\begin{split}
\gain &\le  2\noise Q\inv \|\lambda\|\|\ell^\downarrow - \lambda^\downarrow \| /(\noise+ \lambdaMin)^3\\
&\le  4\noise  Q\inv \|\lambda\|^2 /(\noise+ \lambdaMin)^3\\
&\le  4\kappa^2\lambdaMin^2 \noise /(\noise+ \lambdaMin)^3
\end{split}
\end{align}
where $\kappa := \lambdaMax/\lambdaMin$ is the condition number of $\SBTrue.$

We then obtain the first bound by noting that 
\(
\lambdaMin^2 \noise /(\noise+ \lambdaMin)^3
&\le \lambdaMin^2 \noise /(\noise)^2/\lambdaMin \\
&\le \lambdaMin /\noise
\)
and the second by noting that 
\(
\lambdaMin^2 \noise /(\noise+ \lambdaMin)^3
&\le \lambdaMin^2 \noise /(\lambdaMin)^3 \\
&\le \noise /\lambdaMin.
\)
Substituting these expressions into \Cref{eqn:general_SNR_bound} provides the desired expressions in \Cref{cor:gain_SNR}.

\section{Experiments Supplementary Results and Details}\label{sec:experiments_supp}

\subsection{Simulations additional details}
We here describe the details of the simulated datasets discussed in \Cref{sec:experiments}.
For each of the dimensions $D$ and each of the 20 replicates we first generated covariate effects for all $Q=10$ datasets.
To do this, we began by setting $\SB;$
for the correlated covariate effects experiments (\Cref{fig:simulation} Left) we 
generating a random $Q\times Q$ matrix of orthonormal vectors $U$ and set $\SB = U\diag([2^0, 2\inv, \dots, 2^{Q-1}]^\top) U^\top,$
and for independent effects (\Cref{fig:simulation} Right) we set $\SB = I_Q.$
We then simulated covariate effects as $\beta_d\overset{i.i.d.}{\sim}\mathcal{N}(0, \SB).$

We next simulated the design matrices.
For each dataset $q,$ we chose a random number of data points $N^q\sim\text{Pois}(\lambda=1000)$, 
and for each data point $n=1, \dots, N^q$ sampled $X^q_n \sim \mathcal{N}(0, (1/1000) I_D)$ 
so that for each dataset $\E[X^{q\top}X^q]= I_D.$
Finally, we generated each response as $Y^q_n \overset{indep}{\sim}\mathcal{N}(X^{q\top}_n \beta^q, 1).$

For $\betaED,$ we estimated the $D\times D$ covariance $\Gamma$ by maximum marginal likelihood.
We did this with an EM algorithm closely related to \Cref{alg:EM_general}.
See e.g.\ \citet[Chapter 15 sections 4-5]{gelman2013bayesian} for an explanation of the relevant conjugacy calculations in a more general case that includes a hyper-prior on $\Gamma.$

\subsection{Practical moment estimation for poorly conditioned problems}\label{sec:real_data_moments}
The moment based estimator (using $\SBMM$ in \Cref{sec:nonasymptotic_theory}) is unstable in the two real data applications discussed in \Cref{sec:experiments} due to poor conditioning of the design matrices leading $\betaLS$ to have high variance.
To overcome this limitation, we instead used an adapted moment estimation procedure which is less sensitive to this poor conditioning.
While, in agreement with \Cref{theorem:pos_part_dominance}, this approach performs worse than $\betaEC$ (see \Cref{fig:law_and_blog_aggregate_supp})
we report it nonetheless because it has lower computational cost and may be appealing for larger scale applications.
We describe this approach here.
We note however that moment based estimates of the sort we consider here do not naturally extend to logistic regression
and so are not reported for our application to CIFAR10.

We first introduce some additional notation.
For each dataset $q$ consider the reduced singular value decomposition $X^q = S^q\diag(\omega^q)R^{q\top},$
where $S^q$ and $R^q$ are $N^q\times D$ and $D\times D$ matrices with orthonormal columns and $\omega^q$ is a $D$-vector of non-negative singular values.
Next define for each dataset $W^q := S^{q\top} X^q$ and $Z^q := S^{q\top}Y^q,$ 
which we may interpret as a $D\times D$ matrix of pseudo-covariates and $D$-vector of pseudo-responses, respectively.
Next define $\Omega$ to be the $Q\times Q$ matrix with entries $\Omega_{q,q^\prime}:= \trace(W^{q\top}W^{q^\prime})\inv$
and $\vec \sigma^2 :=[\sigma_1^2, \sigma_2^2,\dots, \sigma_Q^2]^\top.$
Lastly, let $Z = [Z^1, Z^2, \dots, Z^Q]$ be the $D\times Q$ matrix of all pseudo-responses.
Our new moment estimator is 
$$
\SBMM := [Z^\top Z - D\diag(\vec \sigma^2) ]\odot \Omega.
$$

We next show hat $\E[\SBMM]=\SB$ under correct prior and likelihood specification.
Note first that if $\delta$ is a $D\times Q$ matrix with i.i.d.\ standard normal entries we may write
$$Z \overset{d}{=} [W^1\beta^1, W^2\beta^2, \dots, W^Q\beta^Q]  + \delta\diag(\vec \sigma^2).$$
As such, for each $q$ and $q^\prime,$ we have that 
\(
\E[(Z^\top Z)_{q,q^\prime}] 
&=\E[Z^{q\top} Z^q]\\
&=\E[\beta^{q\top}W^{q\top}W^{q^\prime}\beta^{q^\prime}] + \mathbb{I}[q=q^\prime]\sigma_q^2 D\\
&=\trace(W^{q\top}W^{q^\prime}\E[\beta^{q^\prime}\beta^{q\top}]) + \mathbb{I}[q=q^\prime]\sigma_q^2 D\\
&=\Omega_{q,q^\prime}\inv \SB_{q,q^\prime} + \mathbb{I}[q=q^\prime]\sigma_q^2 D.
\)
Accordingly, we can see that each entry of $\SBMM$ has expectation
$\E[\SBMM_{q,q^\prime}] = \SB_{q,q^\prime},$
which establishes unbiasedness.

However, this moment estimate still has the limitation that it evaluates to a non positive semidefinite matrix with positive probability.
Under the expectation that, in line with \Cref{theorem:exch_reg_domination} the very small and negative eigenvalues of $\SBMM$ might lead to over-shrinking, 
we performed an additional step of clipping these eigenvalues to force the resulting estimate to be reasonably well conditioned.
In particular, if our initial estimate had eigendecomposition $\SBMM=U\diag(\lambda)U^\top,$ 
we instead used $\SBMM = U\diag(\tilde \lambda)U^\top,$
where for each $q,$ we have $\tilde \lambda_q = \max(\lambda_q, \lambdaMax/100)$
so that the condition number of the modified estimate was at most 100.
Though we did not find the performance of the resulting estimates to be very sensitive to this cutoff,
we view requirement for these partly subjective implementation choices required to make the $\betaMM$
effective in practice to be a downside of the approach as compared to $\betaEC,$ 
which avoids such choices by estimating $\SB$ by maximum marginal likelihood.

Compared to the iterative EM algorithms, which rely on matrix inversions at each iteration,
computation of $\SBMM$ is much faster.
In each of our experiments, computing it requires less than one second.

\begin{figure}
    \centering
    \includegraphics[width=\textwidth]{./figs/supp_aggregate_applications}
    \caption{Performances of additional methods on the law enforcement and blog datasets.  Uncertainty intervals are $\pm 1\texttt{SEM}.$}
    \label{fig:law_and_blog_aggregate_supp}
\end{figure}

\subsection{Allowing for non-zero means a priori in hierarchical Bayesian estimates}\label{sec:non_zero_means}
In the development of our approach in \Cref{sec:exch_cov} we imposed the restriction that
$\E[\beta_d]=0$ a priori.
Though in general one might prefer to let $\beta$ have some nontrivial mean
(as \citet{lindley1972bayes} do in the context of exchangeability of effects across datasets) this assumption simplifies the resulting estimators, theory, and notation.
When $\beta$ is permitted to have a non-zero mean, conjugacy maintains and the methodology presented in \Cref{sec:our_method} may be updated to 
accommodate the change.
While we omit a full explanation of the tedious details of this variation,
we include its implementation in our code and the performance of the resulting empirical Bayesian estimators in \Cref{fig:law_and_blog_aggregate_supp,fig:blog_segmented_supp,fig:law_segmented_supp}.
From these empirical results we see that removing this restriction has little impact on the performance of the resulting estimators.
Notably, our results in these figures reveal that the same is true for choosing to include or exclude a prior mean for the exchangeability of effects across datasets prior.

\begin{figure}
    \centering
    \includegraphics[width=\textwidth]{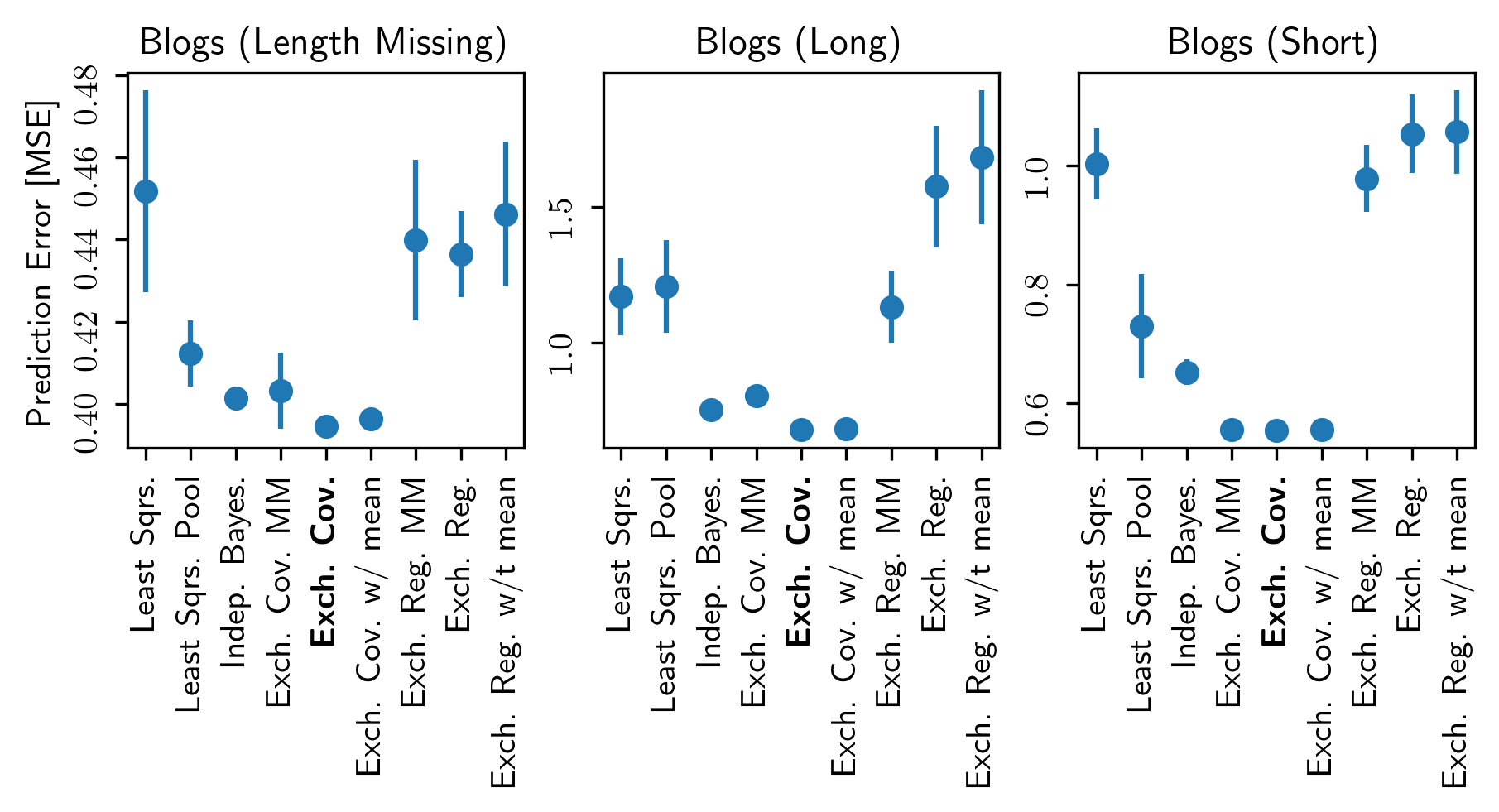}
    \caption{Performances of methods on the blog dataset, segmented by post type.  Uncertainty intervals are $\pm 1\texttt{SEM}.$}
    \label{fig:blog_segmented_supp}
\end{figure}

\begin{figure}
    \centering
    \includegraphics[width=1.0\textwidth]{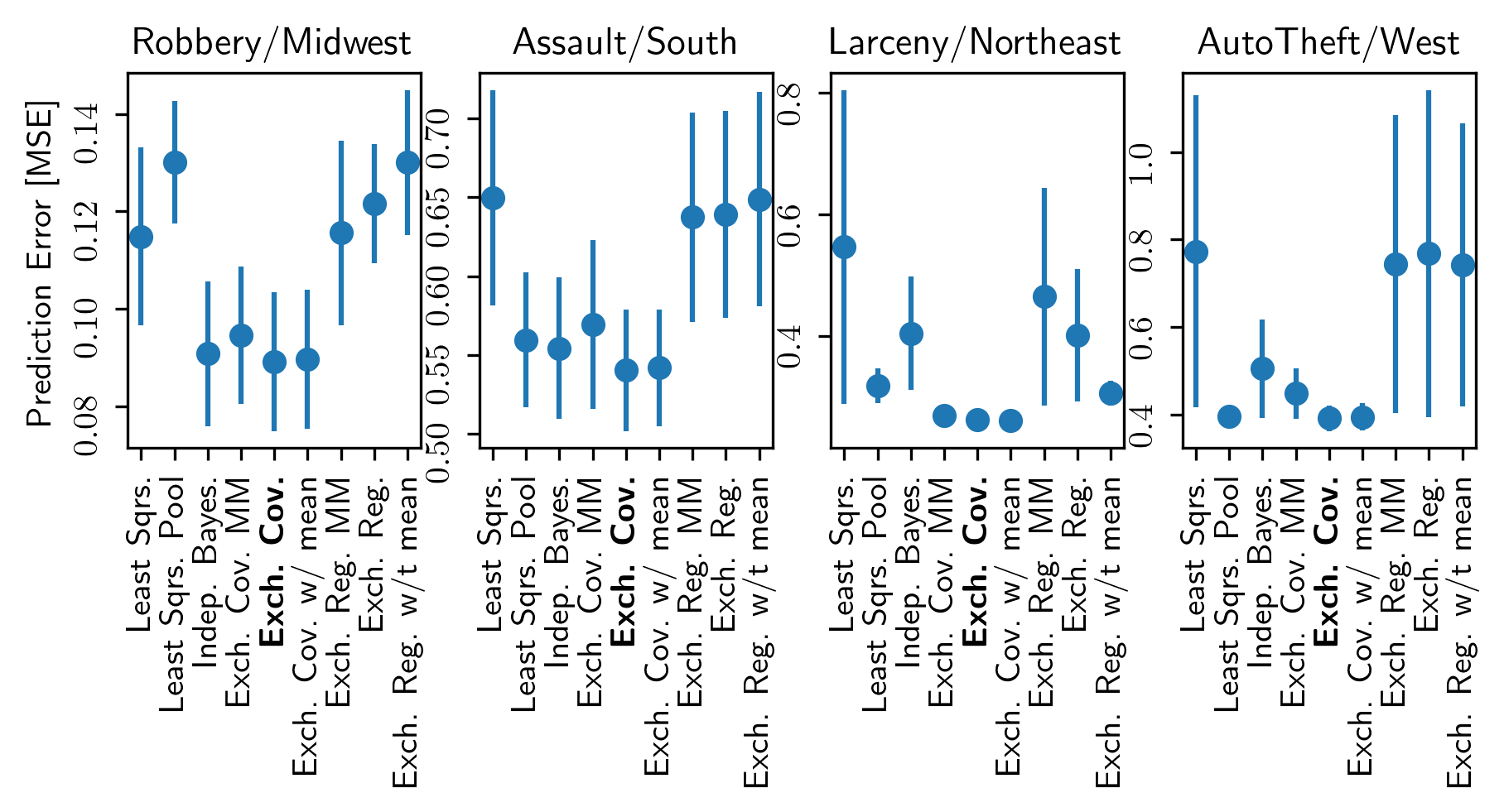}
    \caption{Performances of methods on the law enforcement dataset, segmented by region and recorded offense categorization.  Uncertainty intervals are $\pm 1\texttt{SEM}.$}
    \label{fig:law_segmented_supp}
\end{figure}

\subsection{Additional details on datasets} \label{sec:real_data_supp}
In each of the two regression applications, for each component dataset, we mean centered and variance-normalized the responses.
Additionally, we Winsorized the responses by dataset;
in particular, we clipped values more than 2 standard deviations from the mean.

\paragraph{BlogFeedback Data Set details}
Given the nature of the features included in the blog dataset used in the main text (which are summarizing characteristics rather than readable text), we believe it may be possible to find the blog post that corresponds to a particular data point. But we believe it is unlikely that the dataset directly contains any personally identifiable information. The blog information was obtained by web-crawling on publicly posted pages, so it is unlikely that consent for inclusion of the content into this dataset was obtained.

\paragraph{Communities and Crime Dataset details} 
All data in this dataset was obtained through official channels. 
This dataset is composed of statistics aggregated at the community level, so it is less likely (though not impossible) to contain personally identifiable information. Since it contains demographic, census, and crime data, it is unlikely to contain offensive content.

\paragraph{CIFAR10 details.}
For the tasks
\texttt{car vs.\ cat},
\texttt{car vs.\ dog},
\texttt{truck vs.\ cat}, and
\texttt{truck vs.\ dog} we used $N^q=100$ data points.
For the tasks
\texttt{car vs.\ deer},
\texttt{car vs.\ horse},
\texttt{truck vs.\ deer}, and
\texttt{truck vs.\ horse} we used $N^q=1000$ data points.

We generated the pre-trained neural network embeddings using a variational auto-encoder (VAE) \citep{kingma2013auto}.
We adapted our VAE implementation from \texttt{ALIBI DETECT} \citep{alibi2019detect},
\href{https://github.com/SeldonIO/alibi-detect/blob/7decb07d6739a0ede47b104bb746ead090822145/examples/od_vae_cifar10.ipynb}{here}.
See also \texttt{notebooks/2021\_05\_12\_CIFAR10\_VAE\_embeddings.ipynb} for details.

CIFAR10 is composed from a subset of the 80 million tiny images dataset.
As is currently acknowledged on the 80 million tiny images website, this larger dataset is known to contain offensive images and images obtained without consent (\url{https://groups.csail.mit.edu/vision/TinyImages/}).
However, given the benign nature of the 10 image classes in CIFAR10, we expect it does not contain offensive or personally identifiable content.
These data were also obtained by web-crawling, so it is unlikely that consent for inclusion of the content into this dataset was obtained.

\begin{figure}
    \centering
    \includegraphics[width=1.0\textwidth]{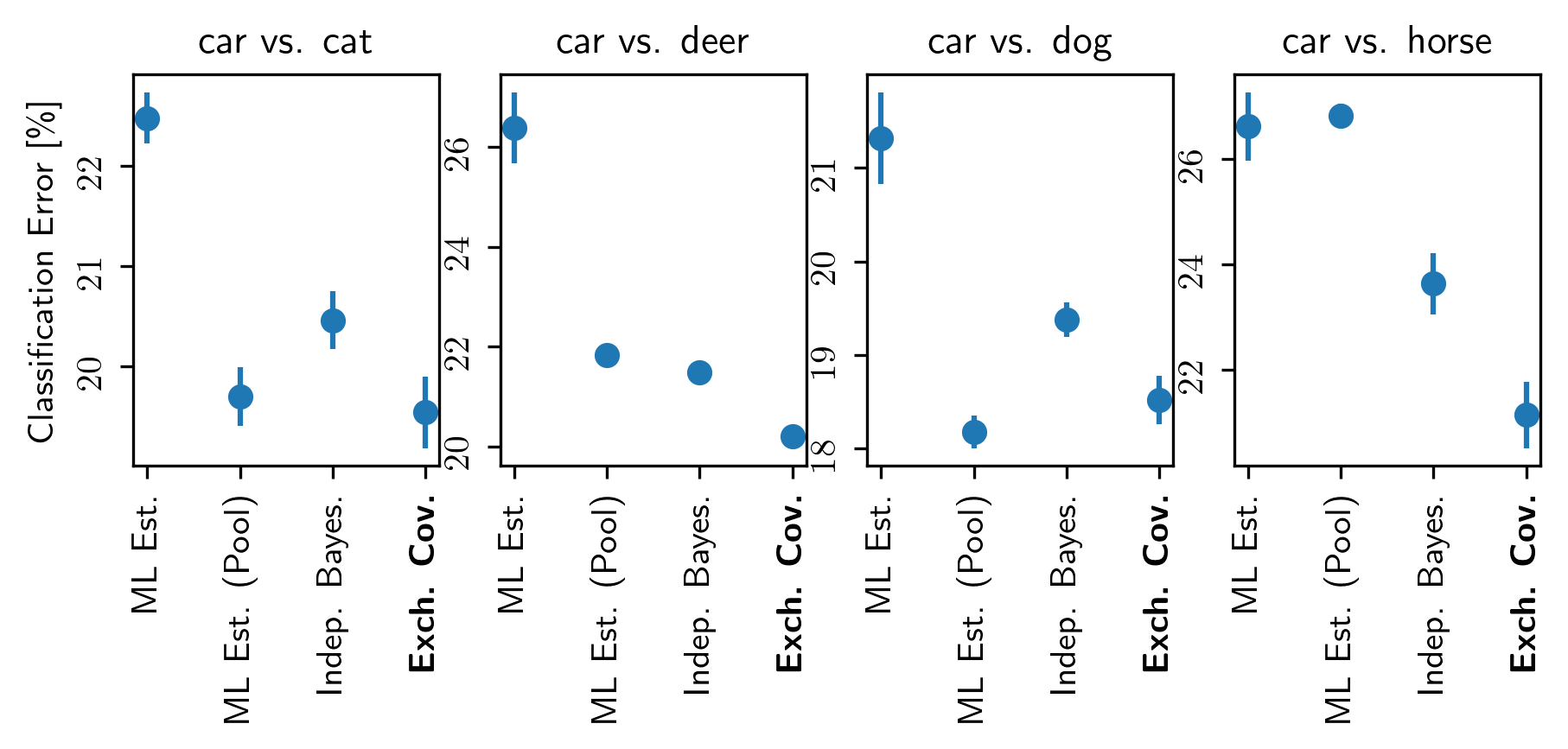}
    \includegraphics[width=1.0\textwidth]{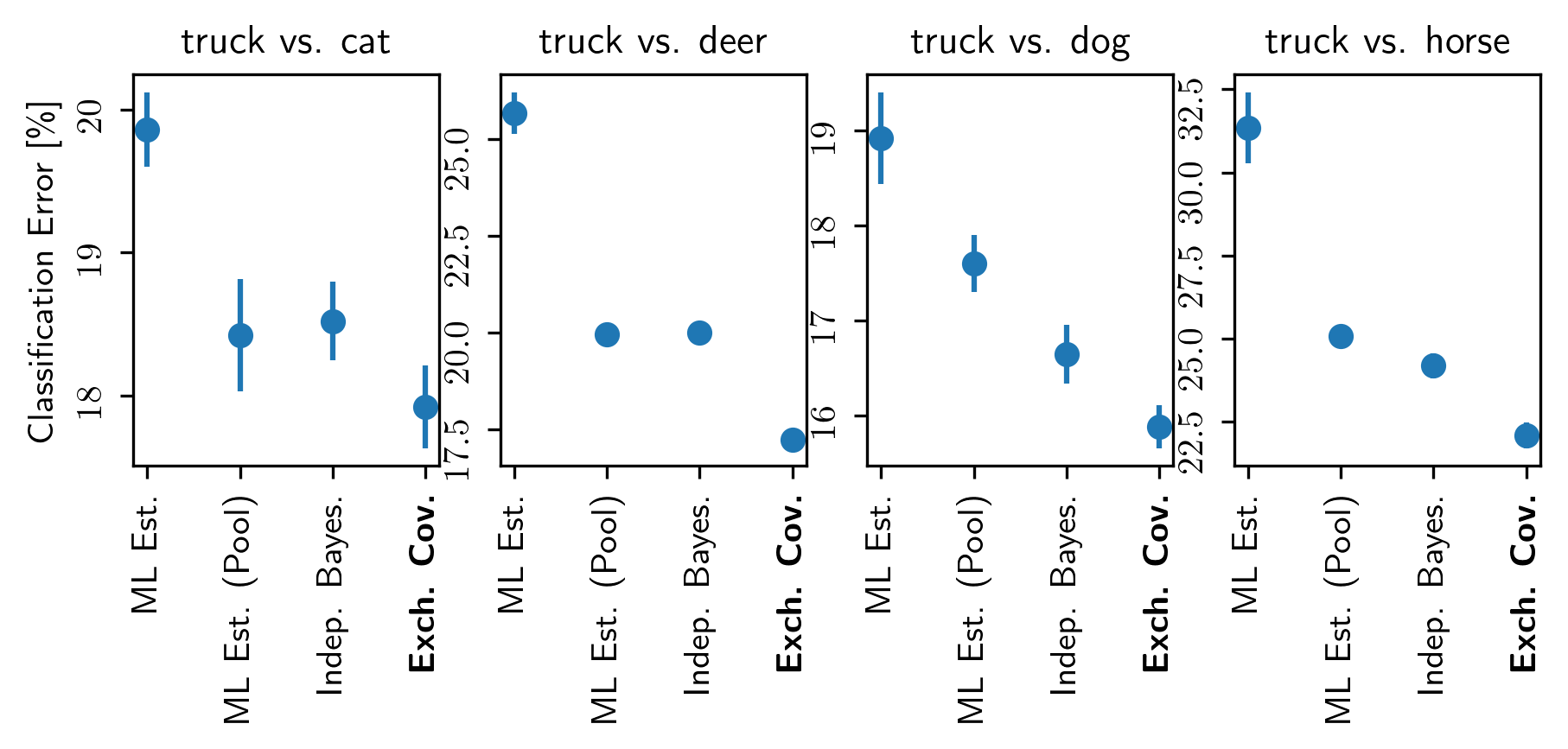}
    \caption{Performances of methods on CIFAR10 segmented by binary classification task.  Uncertainty intervals are $\pm 1\texttt{SEM}.$}
    \label{fig:CIFAR_segmented_supp}
\end{figure}

\subsection{Software Licenses}\label{sec:licenses}
We here report the software used to generate our results and their associated licenses.

All of our experiments were implemented in \texttt{python}, which is licensed under the PSF license.
For ease of reproducibility, ran our experiments and generated our plots IPython in Jupyter notebooks;
this software is covered by a modified BSD license.

For our application to transfer learning using CIFAR10, we used a variational auto-encoder implementation adapted from \texttt{ALIBI DETECT} \citep{alibi2019detect}, which uses the Apache licence.
Our implementation of our EM algorithm uses \texttt{TensorFlow} \citep{abadi2016tensorflow}, which is licensed under the MIT license.

We made frequent use of python packages \texttt{numpy} and \texttt{scipy} and \texttt{matplotlib}.
These are large libraries with components covered different licenses.
See \href{https://github.com/scipy/scipy/blob/master/LICENSES_bundled.txt}{github.com/scipy/scipy/blob/master/LICENSES\_bundled.txt} for \texttt{scipy},
\href{https://github.com/numpy/numpy/blob/main/LICENSES_bundled.txt}{github.com/numpy/numpy/blob/main/LICENSES\_bundled.txt} for \texttt{numpy}, and 
\href{https://github.com/matplotlib/matplotlib/tree/master/LICENSE}{github.com/matplotlib/matplotlib/tree/master/LICENSE} for \texttt{matplotlib}.

\end{document}